\newtheorem{theorem}{Theorem}
\newtheorem{lemma}{Lemma}
\newtheorem{corollary}{Corollary}
\newtheorem{proposition}{Proposition}
\theoremstyle{definition}
\newtheorem{property}{Property}
\newtheorem{example}{Example}
\newtheorem{criterion}{Criterion}
\theoremstyle{remark}
\newtheorem{assumption}{Assumption}
\newtheorem{remark}{Remark}
\def\inprob{\stackrel{p}{\rightarrow}}
\def\indist{\rightsquigarrow}
\newcommand\ind{\protect\mathpalette{\protect\independenT}{\perp}}
\def\independenT#1#2{\mathrel{\rlap{$#1#2$}\mkern4mu{#1#2}}}
\DeclareSymbolFont{bbold}{U}{bbold}{m}{n}
\DeclareSymbolFontAlphabet{\mathbbold}{bbold}
\newcommand{\one}{\mathbbold{1}}
\def\cov{\text{cov}}
\def\bbE{\mathbb{E}}
\def\bbP{\mathbb{P}}
\def\bbR{\mathbb{R}}
\def\bbV{\mathbb{V}}
\title{\textbf{Comparing causal parameters with many treatments and positivity violations}}
\date{}
\author{\\ Alec McClean, Yiting Li, Sunjae Bae, Mara A. McAdams-DeMarco, \\
Iv\'{a}n D\'{i}az$^*$, Wenbo Wu$^*$ \vspace{0.05in} \\
Division of Biostatistics and Department of Surgery \\
NYU Grossman School of Medicine \vspace{0.05in} \\
\texttt{hadera01@nyu.edu,} \\
\texttt{\{yiting.li, sunjae.bae, mara.mcadamsdemarco\}@nyulangone.org,} \\ \texttt{ivan.diaz@nyu.edu, wenbo.wu@med.nyu.edu}}
\begin{document}
\maketitle

\begin{abstract}
    Comparing outcomes across treatments is essential in medicine and public policy. To do so, researchers typically estimate a set of parameters, possibly counterfactual, with each targeting a different treatment. Treatment-specific means are commonly used, but their identification requires a positivity assumption, that every subject has a non-zero probability of receiving each treatment. This is often implausible, especially when treatment can take many values. Causal parameters based on dynamic stochastic interventions offer robustness to positivity violations. However, comparing these parameters may fail to reflect the effects of the underlying target treatments because the parameters can depend on outcomes under non-target treatments. To clarify when two parameters targeting different treatments yield a useful comparison of treatment efficacy, we propose a comparability criterion: if the conditional treatment-specific mean for one treatment is greater than that for another, then the corresponding causal parameter should also be greater.  Many standard parameters fail to satisfy this criterion, but we show that only a mild positivity assumption is needed to identify parameters that yield useful comparisons. We then provide two simple examples that satisfy this criterion and are identifiable under the milder positivity assumption: trimmed and smooth trimmed treatment-specific means with multi-valued treatments. For smooth trimmed treatment-specific means, we develop doubly robust-style estimators that attain parametric convergence rates under nonparametric conditions. We illustrate our methods with an analysis of dialysis providers in New York State.
\end{abstract}

\noindent \emph{\textbf{Keywords:} Causal inference; multi-valued treatments; positivity violations; dynamic stochastic interventions; nonparametrics}

\vfill 

\renewcommand{\thefootnote}{\fnsymbol{footnote}} 
\footnotetext[1]{Equal contribution.} 

\section{Introduction} \label{sec:intro}

Comparing a large set of treatments is a longstanding problem in causal inference which appears in various scientific fields. In public policy, researchers often evaluate a wide range of different policies or interventions, such as training programs with varying hours of training or educational interventions with varying classroom sizes \citep{card2010active, draper2004statistical, rubin2004potential}. Similar challenges also arise in medicine and public health. This paper is inspired by the problem of provider profiling, which is concerned with comparing the performance of potentially thousands of healthcare providers in terms of patient outcomes \citep{normand1997statistical}.

\bigskip

To compare treatments, researchers typically estimate a set of parameters, with each parameter targeting the effect of a different treatment, often under counterfactual conditions \citep{longford2020performance}.  A canonical example is the treatment-specific mean, the population counterfactual average outcome if all subjects received a specific treatment. Treatment-specific means are a natural choice because comparing two treatment-specific means describes how the same set of subjects (all subjects) would fare receiving two different treatments, and does not depend on subjects' outcomes under any other treatments.  Additionally, methods for efficiently estimating sets of treatment-specific means under standard causal assumptions are well established \citep{cattaneo2010efficient}. However, identifying treatment-specific means as functions of the observed data requires a positivity assumption, that all subjects have a non-zero chance of receiving every treatment.  When treatment can take many values, this assumption is often untenable. For example, in provider profiling there could be millions of patients attending thousands of providers across large geographic areas, and it is implausible that patients would attend distant providers. Thus, it is necessary to develop methods that are robust to positivity violations and preserve the desirable properties of treatment-specific means.

\bigskip

Dynamic stochastic interventions, which characterize outcomes under counterfactual shifts in the treatment distribution, can be robust to positivity violations. In the causal inference literature, these are well-studied for binary and continuous treatments, and recent research has proposed several options, including modified treatment policies, multiplicative shifts, and exponential tilts \citep{diaz2020causal,  diaz2012population, haneuse2013estimation, kennedy2019nonparametric, wen2023intervention}.  In provider profiling, ``indirectly standardized'' parameters are often used to compare provider efficacy \citep{kitagawa1955components}. These parameters, such as the standardized mortality ratio (SMR), consider counterfactual outcomes if the patients treated by one provider were instead treated according to a random draw from other providers.

\bigskip

Previously proposed dynamic stochastic interventions do not yield useful comparisons of treatment efficacy. With indirect standardization, differences in parameters targeting two treatments could be driven by differences in the covariate distributions receiving target treatments (e.g., \citet{marang2014simpson, shahian2008comparison}). Modern dynamic stochastic interventions from causal inference are not obviously susceptible to this issue, but suffer a second drawback: without careful construction, they can implicitly alter the covariate distribution across non-target treatments. Then, comparing two treatment-specific interventions may not reflect whether one treatment is better than another because differences in counterfactual parameters could be driven by how subjects fare under other, unrelated treatments.

\bigskip

Recently, \citet{roessler2021can} provided formal guidance on comparing provider performance. They proposed five axioms that the standardized mortality ratio ought to satisfy to enable comparisons of providers, and concluded that the standardized mortality ratio satisfied two. While their work is a critical contribution, it has several limitations. Most notably, their analysis does not easily generalize to other interventions in causal inference, because it was tailored to the standardized mortality ratio and did not use counterfactual quantities. Additionally, they did not suggest new interventions that might satisfy their axioms. 

\bigskip

To our knowledge, beyond \citet{roessler2021can} there is no further formal guidance on comparing treatment efficacy under positivity violations. This paper addresses this gap in the literature. First, we build on the work in \citet{roessler2021can} by proposing one simple comparability criterion grounded directly in counterfactual quantities.  Our criterion stipulates that if one treatment's conditional treatment-specific mean is almost surely larger than another's, then the causal parameter targeting the first treatment is larger, and if the conditional treatment-specific means are almost surely equal, then the causal parameters are equal. This criterion offers a straightforward way to evaluate whether sets of causal parameters allow us to compare treatment efficacy. We also illustrate that several typical estimands in the literature fail to satisfy this criterion.

\bigskip

To understand what interventions could satisfy the comparability criterion, we establish that a minimum positivity assumption is necessary to identify sets of comparable parameters, and this minimum positivity assumption can be milder than the usual positivity assumption. These results provide intuition for constructing comparable parameters and formalize the trade-off between the mildness of the positivity assumption and desirability of the comparability criterion: interventions satisfying a more desirable comparability criterion also require a stronger positivity assumption for identification.   

\bigskip

While other estimands in the literature either fail the comparability criterion or become unidentifiable under positivity violations, we propose two simple examples that overcome both challenges: trimmed and smooth trimmed treatment-specific means. These approaches trim across all propensity score values for multi-valued treatments and can selectively adapt to positivity violations across only a subset of covariates. They are part of a larger family of effects satisfying both criteria, examined in the appendix.

\bigskip

We then develop doubly robust-style estimators for the smooth parameters. Crucially, these estimators achieve parametric convergence rates and attain a normal limiting distribution under nonparametric conditions on estimators for the generalized propensity score and outcome regression. These methods can be used to efficiently estimate comparable causal parameters with any multi-valued treatments. Moreover, they are directly applicable to comparing healthcare providers, offering a novel method for provider profiling that is robust to positivity violations. To illustrate our methods, we analyze the performance of dialysis facilities in New York State.

\subsection{Structure of the paper}

Section~\ref{sec:setup} formally describes the data, mathematical notation, causal assumptions, and static deterministic and dynamic stochastic interventions. And, it provides further details on comparing treatments with treatment-specific means. Section~\ref{sec:conditions} defines the comparability criterion, illustrates common examples that fail to satisfy the criterion, and establishes that a minimum positivity assumption is necessary to satisfy the comparability criterion and identifiability simultaneously.  Section~\ref{sec:examples} presents examples --- trimmed and smooth trimmed effects --- that comply with the comparability criterion and maintain identifiability under positivity violations. Section~\ref{sec:eif-estimator} identifies these parameters and develops doubly robust-style estimators for smooth trimmed effects, leveraging nonparametric efficiency theory. Section~\ref{sec:data-analysis} illustrates these methods with an analysis of claims data from dialysis providers in New York State, Section~\ref{sec:simulations} illustrates these methods with a simulation study, and Section~\ref{sec:discussion} concludes and discusses future work.

\section{Setup and background} \label{sec:setup}

\subsection{Data, parameters, and nuisance functions}

We assume we observe $n$ observations drawn iid from some distribution $\bbP$ in a space of distributions $\mathcal{P}$. In other words, we observe data $\{ Z_i \}_{i=1}^{n} = \{ (X_i, A_i, Y_i) \}_{i=1}^{n} \stackrel{iid}{\sim} \bbP \in \mathcal{P}$, where $X \in \bbR^p$ are $p$-dimensional covariates, $A \in \{1, \dots, d\}$ is a categorical treatment with $d$ levels, and $Y \in \mathcal{Y} \subseteq \bbR$ is an outcome.  We refer to two nuisance functions: let $\pi_a(X)= \bbP(A = a \mid X)$ denote the generalized propensity score, the probability of receiving treatment $a$ for subjects with covariates $X$, and let $\mu_a(X) = \bbE(Y \mid A =a, X)$ denote the conditional mean outcome supposing subjects with covariates $X$ received treatment $a$. We denote the potential outcome supposing treatment $a$ was received by $Y^a$.  Implicitly, we suppose the $n$ observations also correspond to unobserved samples drawn from a counterfactual distribution $\bbP^c$ which includes all potential outcomes; i.e., $\{ X_i, A_i, Y_i^1, \dots, Y_i^d \} \stackrel{iid}{\sim} \bbP^c$. Unless necessary for clarification, we omit $c$ superscripts, with the understanding that counterfactual objects are clear from context (e.g., if they involve potential outcomes). We define a causal/counterfactual parameter $\psi$ as a map from the counterfactual distribution to the reals; i.e., $\psi: \bbP^c \to \bbR$. Under certain identifying assumptions, established subsequently, $\psi = \psi_{obs}$, where $\psi_{obs}: \bbP \to \bbR$ is a map from the observational distribution to the reals.

\subsection{Mathematical notation}

For a function $f(Z)$, we use $\lVert f \rVert = \sqrt{\int f(z)^2 d\bbP(z)}$ to denote the $L_2(\bbP)$ norm, $\bbP(f) = \int_{\mathcal{Z}} f(z) d\bbP(z)$ to denote the average with respect to the underlying distribution $\bbP$, and $\bbP_n(f) = \frac{1}{n} \sum_{i=1}^{n} f(Z_i)$ to denote the empirical average with respect to the $n$ observations.  In a standard abuse of notation, when $A$ is an event we let $\bbP(A)$ denote the probability of $A$.  We also denote expectation and variance with respect to the underlying distribution by $\bbE$ and $\bbV$, respectively.  We use the notation $a \lesssim b$ to mean $a \leq Cb$ for some constant $C$, $\indist$ to denote convergence in distribution, and $\inprob$ for convergence in probability to zero. Additionally, we use $o_\bbP(\cdot)$ to denote convergence in probability, i.e., if $X_n$ is a sequence of random variables then $X_n = o_\bbP (r_n)$ implies $\left| \frac{X_n}{r_n} \right| \inprob 0$.

\subsection{Causal assumptions}

In provider profiling and many other applications, observational data is typically the only available data. Several causal assumptions are necessary to identify counterfactual parameters as functionals of the observed data. Two standard assumptions are consistency and exchangeability.
\begin{assumption}
    \emph{Consistency: } $A =a \implies Y = Y^a$.
\end{assumption}

\begin{assumption}
    \emph{Exchangeability: } $Y^a \ind A \mid X$ for all $a \in \{1, \dots, d\}$.
\end{assumption}

Consistency asserts that we observe the potential outcome $Y^a$ relevant to the observed treatment $a$. Consistency would be violated if, for example, there were interference between subjects, such that one subject's treatment choice affected another's outcome.  Exchangeability says that the treatment is as-if randomized within covariate strata; in other words, there are no unobserved confounders which might affect subjects' treatment choice and their subsequent outcomes.  It would be violated if there existed an unmeasured confounder that predicted both treatment receipt and subsequent outcomes.  The literature addressing violations of each of these assumptions is too large to summarize here, but see, for example, \citet{tchetgen2012causal} and \citet{richardson2014nonparametric}.  The final typical causal assumption is positivity.  The ``strong'' positivity assumption asserts that every subject's probability of receiving each treatment is bounded away from zero, while the ``weak'' positivity assumption dictates these probabilities  are non-zero.
\begin{assumption} \label{asmp:strong-positivity}
    \emph{Strong positivity: } $\exists\ \epsilon > 0$ s.t. $\bbP \{ \epsilon \leq \pi_a(X)  \} = 1\ \forall\ a \in \{1, \dots, d\}.$
\end{assumption}

\begin{assumption} \label{asmp:weak-positivity}
    \emph{Weak positivity: } $\bbP \{ 0 < \pi_a(X) \} = 1 \ \forall\ a \in \{1, \dots, d \}$.    
\end{assumption}
Weak positivity, along with consistency and exchangeability, is sufficient to identify the causal parameters we consider as functionals of the observed data, while strong positivity is sufficient to construct semiparametric efficient estimators \citep{khan2010irregular}. However, often weak positivity is untenable when there are many treatment levels. For example, given the vast number of providers and patients, weak positivity is unrealistic in provider profiling.  Therefore,  we will develop methods for comparing causal parameters under milder positivity assumptions.

\subsection{Comparisons with treatment-specific means}

Before discussing interventions that can be robust to positivity violations, we first discuss why treatment-specific means are natural causal parameters for comparing treatments and why they are less useful under positivity violations. The treatment-specific mean targeting treatment $a$ is $\bbE \left(Y^a \right)$, the counterfactual mean outcome if all subjects received treatment $a$. Comparing two treatment-specific means yields a meaningful comparison of treatment efficacy since differences in $\bbE \left(Y^a\right)$ and $\bbE\left(Y^b\right)$ arise solely from subjects’ outcomes under treatments $a$ and $b$. These parameters indicate how treatments $a$ and $b$ would affect identical subject populations (all subjects) when all other treatments are received by identical subject populations (no subjects). Moreover, these parameters are invariant to the distribution $d\bbP(X \mid A=a)$ of covariates among the subjects receiving each treatment in the observed data.

\bigskip

Supposing exchangeability, consistency, and strong positivity hold, then the set of treatment-specific means can be identified by $\big\{ \bbE(Y^a) \big\}_{a=1}^{d} = \Big\{ \bbE \{ \mu_a(X) \} \Big\}_{a=1}^{d},$ where $\bbE \{ \mu_a(X)\}$ is based on observed data and can be estimated from it. Therefore, if the three causal assumptions held, one could compare treatments by estimating the set of treatment-specific means from the observed data. However, violations of weak positivity make it impossible to identify and estimate the set of treatment-specific means from observed data. Moreover, even when strong positivity holds, estimating $\big\{ \bbE\{ \mu_a(X) \} \big\}_{a=1}^{d}$ can be challenging. The nonparametric efficiency bound for estimating $\bbE \{ \mu_a(X) \}$ increases dramatically when \( \pi_a(X) \approx 0 \), leading to high variance estimates and wide confidence intervals, which can hinder meaningful conclusions from the analysis \citep{hahn1998role, van2000asymptotic}.

\subsection{Static deterministic and dynamic stochastic interventions}

Under positivity violations, researchers typically consider causal parameters defined by dynamic stochastic interventions  instead of static deterministic interventions like treatment-specific means. Here, we formally define these terms and then discuss an example that is robust to positivity violations with binary treatment. A stochastic intervention randomly assigns subjects to treatments according to a draw from a distribution. With multi-valued treatments, this is a categorical distribution (with binary treatment, this is a Bernoulli distribution). Mathematically, the parameters of interest are $\bbE(Y^{Q})$ where $Q \sim \text{Categorical}\big\{q(A=1), \dots, q(A=d) \big\}$ and $q(A=i)$ is the probability of receiving treatment $i$ under the intervention. Stochastic interventions allow subjects to receive one of several treatments based on a random draw. By contrast, deterministic interventions assign subjects to treatments deterministically, so there is no randomness in counterfactual treatment receipt. Dynamic interventions let treatment receipt probabilities vary with covariates. Mathematically, the parameters take the form $\bbE\left(Y^{Q}\right)$, but where $Q \sim \text{Categorical}\big\{ q(A=1 \mid V), \dots, q(A=d \mid V) \big\}$ and $q(A=i  \mid V)$ can vary with $V \subseteq X$. 

\bigskip

The incremental propensity score intervention (IPSI) is a popular dynamic stochastic intervention that is robust to positivity violations \citep{kennedy2019nonparametric, bonvini2023incremental}. It considers the counterfactual outcomes when the odds of a binary treatment are multiplied by some factor. Mathematically, for binary treatments, it is $\bbE \left\{ Y^{Q(\delta)} \right\}$ for $\delta \in (0, \infty)$, where $Q(\delta) \sim \text{Bernoulli} \Big( q\{ \pi_1(X); \delta\} \Big)$ and $q(x; \delta) = \frac{\delta x}{\delta x + 1 - x}$.  IPSIs remain identifiable under violations of weak positivity because they consider counterfactual interventions where subjects who would always receive treatment do so, and those who would never receive treatment remain untreated. In other words, when the propensity scores equal zero or one, so do the interventional propensity scores (\( q(0; \delta) = 0, q(1; \delta) = 1\)). Hence, whenever the outcome regressions are undefined due to weak positivity violations, the interventional propensity score is zero and the resulting causal parameter is still identifiable.  Under only consistency and exchangeability, $\bbE \left\{ Y^{Q(\delta)} \right\} = \bbE \left( q\{ \pi_1(X); \delta \} \mu_1(X) + \big[ 1 - q\{ \pi_1(X); \delta \} \big] \mu_0(X)\right)$. IPSIs are a special case of exponential tilts \citep{diaz2020causal, schindl2024incremental}. For continuous treatment, the interventional propensity score with exponential tilts is $q(A=a \mid X; \delta) = \frac{\exp(\delta a) \pi_a(X)}{\int_{\mathcal{A}} \exp (\delta a) \pi_a(X) dp(a)}.$

\section{Comparability and positivity} \label{sec:conditions}

With many treatments, the crucial tradeoff is between the goal of comparing treatment efficacy across all treatments, and the constraint of positivity violations. In this section, we propose a novel criterion that formalizes this goal. We then provide further intuition for the criterion, and, importantly, show that standard estimands proposed in the literature fail to satisfy it. Finally, we relate the criterion to the constraint of positivity violations and establish that a minimum positivity assumption is necessary to identify causal parameters that satisfy this criterion. In the next section, we introduce two examples that satisfy the criterion and are identifiable under only the minimum positivity assumption derived in this section.

\subsection{Comparability criterion} 

When comparing two counterfactual parameters $\psi_a$ and $\psi_b$ targeting treatments $a$ and $b$, we say these parameters are ``$V$-comparable'' (for $V \subseteq X$) if they satisfy the following condition.

\begin{criterion} \label{crit:comp}
    \textbf{($V$-comparability)} For a covariate subset $V \subseteq X$, the parameters $\psi_a$ and $\psi_b$ are $V$-comparable if they satisfy:
    \begin{align}
        &\bbP^c \{ \bbE(Y^a \mid V) > \bbE(Y^b \mid V) \} = 1 \implies \psi_a > \psi_b, \label{eq:cond_lower} \\
        &\bbP^c \{ \bbE(Y^a \mid V) = \bbE(Y^b \mid V) \} = 1 \implies \psi_a = \psi_b, \text{ and } \label{eq:cond_equal} \\
        &\bbP^c \{ \bbE(Y^a \mid V) < \bbE(Y^b \mid V) \} = 1 \implies \psi_a < \psi_b \label{eq:cond_upper} 
    \end{align}
    for all counterfactual distributions $\bbP^c$ in the counterfactual model.
\end{criterion}

This criterion asks that the relationship between $\psi_a$ and $\psi_b$ reflects the relationship between $\bbE(Y^a \mid V)$ and $\bbE(Y^b \mid V)$: if one conditional mean is almost surely larger, then the corresponding causal parameter is larger; if they are almost surely equal, the causal parameters are equal.   Moreover, it asks that this relationship is independent of potential outcome means under treatments other than $a$ and $b$.  

\medskip

The criterion is not an assumption; rather, it is a desideratum. If a set of parameters satisfy the criterion and the left-hand sides of \eqref{eq:cond_lower}-\eqref{eq:cond_upper} hold (which is an assumption), then the right-hand sides of \eqref{eq:cond_lower}-\eqref{eq:cond_upper} will follow.  Importantly, the set of treatment-specific means $\left\{ \bbE(Y^a) \right\}_{a=1}^{d}$ satisfies criterion~\ref{crit:comp} for all $V \subseteq X$. Therefore, if criterion~\ref{crit:comp} holds for another set of parameters $\{\psi_a\}_{a=1}^{d}$, and the left-hand sides of \eqref{eq:cond_lower}-\eqref{eq:cond_upper} hold, then the order of the parameters $\{\psi_a\}_{a=1}^{d}$ matches the order of the treatment-specific means. 

\medskip

This condition builds on work in \citet{roessler2021can}, who proposed five axioms for analyzing whether the standardized mortality ratio allows for useful comparisons of treatment efficacy.  We propose instead one, defined in terms of counterfactual quantities. This is more straightforward to assess for many estimands of interest. Since researchers typically focus on estimating weighted averages of potential outcomes, concentrating on conditional potential outcomes directly appears to be the most natural way to define comparability, which is why we adopt it here. However, alternative notions of comparability could be considered. For instance, one might define comparability directly in terms of the potential outcomes instead of their conditional means, and use antecedents such as $\bbP(Y^a = Y^b) = 1$ or $\bbP(Y^a < y) < \bbP(Y^b < y)$ for all $y \in \mathcal{Y}$ in Criterion~\ref{crit:comp}. We found that, when establishing results for what parameters can satisfy the criterion, these alternative notions of comparability are only useful insofar as they imply Criterion~\ref{crit:comp}.  Therefore, we focus on Criterion~\ref{crit:comp} as the most natural notion of comparability. 

\bigskip

To understand when the criterion is more or less desirable as $V$ changes, we can examine \eqref{eq:cond_lower}-\eqref{eq:cond_upper} directly or examine their contrapositives. First, notice that the left-hand sides of \eqref{eq:cond_lower}-\eqref{eq:cond_upper} become stronger as $V$ grows larger, being strongest for $V = X$ and weakest for $V = \emptyset$. Therefore, the parameters satisfy the most (least) desirable comparability criterion when $V = \emptyset$ ($V = X$) because they require the least (most) stringent assumption to guarantee comparability.  Alternatively, consider the contrapositives of each. For example, the contrapositive of $\eqref{eq:cond_lower}$ yields 
\[
\psi_a \leq \psi_b \implies \bbP^c \{ \bbE(Y^a \mid V) > \bbE(Y^b \mid V) \} < 1.
\]
For $V = X$, knowing that $\psi_a \leq \psi_b$ provides little information about the conditional treatment-specific means or the treatment-specific means themselves. By contrast, when $V = \emptyset$, $\psi_a \leq \psi_b \implies \bbE(Y^a) \leq \bbE(Y^b)$, a much stronger statement.  Indeed, combining the contrapositives of \eqref{eq:cond_lower}-\eqref{eq:cond_upper} for $V = \emptyset$ yields equivalence statements, such as:
\[
\psi_a > \psi_b \iff \bbE(Y^a) > \bbE(Y^b).
\]
In other words, if a set of parameters satisfy the $\emptyset$-comparability criterion, one can surmise the order of the treatment-specific means from the order of $\{ \psi_a \}_{a=1}^{d}$, without further assumptions.

\begin{remark} \label{rem:continuous}
    The $V$-comparability criterion naturally extends to continuous treatments when comparing different treatment levels. Consider a continuous treatment $A \in \mathcal{A} \subseteq \bbR$. The criterion can apply directly to comparing two different levels of the treatment. The points on the dose-response curve at two treatments of interest satisfy the criterion. 
\end{remark}

\subsection{Examples}

To illustrate why the $V$-comparability criterion is useful, we review three other proposals in the literature. We show that standard proposals either are unidentifiable under positivity violations or fail to satisfy the $V$-comparability criterion.

\medskip

Treatment-specific means are the canonical estimand for comparing treatment efficacy. These satisfy $V$-comparability for all covariate subsets but are unidentifiable under any positivity violations. This limitation has motivated two alternatives.

\medskip

Indirect standardization is the typical solution to positivity violations in provider profiling. One indirectly standardized parameter targeting treatment $a$ is the standardized mortality ratio $\tfrac{\bbE (Y^\Pi \mid A=a)}{\bbE(Y \mid A=a)}$, where $\Pi \sim \text{Categorical}\{\pi_1(X), \dots, \pi_d(X) \}$. This parameter compares the outcomes of patients under treatment $a$, $\bbE(Y \mid A=a)$, to their counterfactual outcomes had they been randomly reassigned to another treatment according to their generalized propensity score, $\bbE(Y^\Pi \mid A=a)$ \citep{daignault2017doubly}. While this parameter is always identifiable, it fails to satisfy $V$-comparability. Intuitively, this occurs because the parameters condition on actual treatment receipt, so different parameters consider different patient populations. This means that differences in standardized mortality ratios may not reflect the relative efficacy of the treatments under consideration. This phenomenon is well-documented in the literature, and our framework adds additional formal justification for why standardized mortality ratios are not useful for comparing treatment efficacy \citep{shahian2008comparison, shahian2020use, manktelow2014differences, marang2014simpson}. 

\medskip

Dynamic stochastic interventions are another approach. A naive dynamic stochastic intervention might tilt the intervention propensity score towards a target treatment and decrease all other probabilities equally. For example, 
\[
Q_1 \sim \text{Categorical} \left\{ f\{ \pi_1(X) \}, \pi_2(X) \big[ 1 - f\{ \pi_1(X) \} \big], \dots, \pi_d(X) \big[ 1 - f\{ \pi_1(X) \} \big] \right\},
\] 
where $f(\cdot)$ represents some upward shift, such as an exponential tilt or multiplicative shift. While these interventions can be identifiable under positivity violations, they fail to satisfy $V$-comparability.  Intuitively, this occurs because they implicitly intervene on non-target treatment probabilities in different ways across interventions targeting different treatments. Consider a second intervention, targeting treatment $a=2$, which we will compare to the intervention above targeting treatment $a=1$:
\[
Q_2 \sim \text{Categorical}\left\{ \pi_1(X) \big[ 1 - f\{ \pi_2(X) \} \big], f\{ \pi_2(X)\} , \dots, \pi_d(X) \big[ 1 - f\{ \pi_2(X) \} \big] \right\}.
\]
If $\pi_1(X) \neq \pi_2(X)$, the interventions ascribe different probabilities of receiving non-target treatments (e.g., $\pi_d(X) \big[ 1 - f\{ \pi_1(X) \} \big] \neq \pi_d(X) \big[ 1 - f\{ \pi_2(X) \} \big]$ unless $f(\cdot)$ is a constant), and $\bbE(Y^{Q_1}) - \bbE(Y^{Q_2})$ could depend on subjects' outcomes under non-target treatments for $a > 2$.

\subsection{The necessary positivity assumption}

The prior sections proposed our new $V$-comparability criterion, justified it, and showed that standard parameters in the literature fail to satisfy $V$-comparability, or are unidentifiable under positivity violations. In the next section, we'll propose examples that satisfy $V$-comparability and are identifiable under positivity violations. Before that, here we formally establish the tradeoff between the $V$-comparability and positivity, by proving there is a necessary minimum positivity assumption that must hold in order for $V$-comparable parameters to exist, in the following result.

\begin{theorem} \label{thm:positivity} \textbf{(Necessary positivity assumption)}
    Let $\big\{ \psi_a \big\}_{a=1}^{d} = \big\{ \bbE(Y^{Q_a}) \big\}_{a=1}^{d}$ denote a set of parameters defined by dynamic stochastic interventions that vary with covariates $V \subseteq X$ and target treatments $1, \dots, d$, respectively.  Moreover, let
    \begin{equation} \label{eq:trimmed-set}
        C_V = \left\{ v: \bbP \left\{ \pi_a(X) > 0 \mid V = v \right\} = 1 \ \forall \ a \in \{1, \dots, d\} \right\}
    \end{equation}
    denote the set of subjects who have a non-zero probability of receiving every treatment.  If the parameters satisfy $V$-comparability (criterion~\ref{crit:comp}) and the parameters are identifiable, then $\bbP(C_V) > 0$; i.e., $C_V$ must have positive probability.  In other words, $\bbP(C_V) > 0$ is necessary for identifiability and $V$-comparability to hold simultaneously. 
\end{theorem}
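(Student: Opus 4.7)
The plan is to argue by contradiction: suppose $\bbP(C_V) = 0$ and derive that the strict-inequality clause of Property~\ref{property:increase-a-conditional} must fail. Throughout, I will write $\alpha_a(V) = q_a(A=a \mid V)$ for the intervention targeting $a$, and invoke Lemma~\ref{lem:dynamic} to replace $V$-fairness by Properties~\ref{property:increase-a-conditional} and~\ref{property:non-ab-equal}.

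My first step would be to show that, under $q$-weak positivity combined with Property~\ref{property:increase-a-conditional}, the set $C_V^c$ coincides up to null sets with $\bigcup_a \{\pi_a(V) = 0\}$. By $q$-weak positivity, on the event $\{\bbP(\pi_a(X) = 0 \mid V) > 0\}$ the deterministic quantity $\alpha_a(V)$ must equal $0$; Property~\ref{property:increase-a-conditional} then forces $\pi_a(V) = 0$ there, giving one inclusion, and the reverse inclusion is immediate since $\pi_a(V) = \bbE\{\pi_a(X) \mid V\}$ and $\pi_a(X) \geq 0$.

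The algebraic core is the second step: I would show that on the event $\{\pi_a(V) = 0\}$, every intervention collapses onto the observed propensities. Since $\alpha_a(V) = 0$ there, we have $\sum_{c \neq a} q_a(A=c \mid V) = 1$, while Property~\ref{property:increase-a-conditional} gives $q_a(A=c \mid V) \leq \pi_c(V)$ and $\sum_{c \neq a} \pi_c(V) = 1 - \pi_a(V) = 1$. A sum-and-squeeze argument then forces term-by-term equality: $q_a(A=c \mid V) = \pi_c(V)$ for every $c \neq a$. Property~\ref{property:non-ab-equal} propagates this across interventions, since for any $b \neq a$ and $c \notin \{a, b\}$ we obtain $q_b(A=c \mid V) = q_a(A=c \mid V) = \pi_c(V)$; combined with $q_b(A=a \mid V) \leq \pi_a(V) = 0$ and the unit-sum constraint, I conclude $\alpha_b(V) = \pi_b(V)$ for every $b$.

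To close, if $\bbP(C_V) = 0$ then $\bbP\bigl(\bigcup_a \{\pi_a(V) = 0\}\bigr) = 1$, so almost surely some $a$ achieves $\pi_a(V) = 0$, and by the previous step $\alpha_b(V) = \pi_b(V)$ for every $b$ on that event. This yields $\bbP\{\alpha_b(V) > \pi_b(V)\} = 0$ for every $b$, directly contradicting the strict-inequality clause of Property~\ref{property:increase-a-conditional}. The main obstacle I anticipate is the second (sum-and-squeeze) step: both the target and non-target bounds in Property~\ref{property:increase-a-conditional} must be pinned to equalities using the unit-sum constraint, and then Property~\ref{property:non-ab-equal} must be threaded across different target interventions to propagate the equalities from $q_a$ to every $q_b$; the other steps are essentially bookkeeping against the identifiability requirement.
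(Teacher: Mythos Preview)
Your proposal is correct and follows essentially the same route as the paper: invoke Lemma~\ref{lem:dynamic} to pass from $V$-fairness to Properties~\ref{property:increase-a-conditional} and~\ref{property:non-ab-equal}, then argue by contradiction that on $\{\pi_a(V)=0\}$ all interventional propensities collapse to the observed ones (via $q$-weak positivity, the inequalities in Property~\ref{property:increase-a-conditional}, and the unit-sum constraint), so that $\bbP(C_V)=0$ forces $\bbP\{q_b(A=b\mid V)>\pi_b(V)\}=0$ for every $b$, contradicting Property~\ref{property:increase-a-conditional}. Your explicit first step identifying $C_V^c$ with $\bigcup_a\{\pi_a(V)=0\}$ (via $q$-weak positivity plus $q_a(A=a\mid V)\ge\pi_a(V)$ and $\pi_a(V)=\bbE\{\pi_a(X)\mid V\}$) is a clean way to bridge the $X$-level definition of $C_V$ and the $V$-level argument, and your sum-and-squeeze is slightly more direct than the paper's pairwise version, but the substance is the same.
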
 

All proofs are delayed to the appendix. Theorem~\ref{thm:positivity} establishes the necessary positivity assumption for identifying $V$-comparable parameters; it asserts that there must exist a set of subjects across covariates $V$ that have a positive probability of receiving every treatment. Crucially, the result applies to any parameters that would satisfy $V$-comparability and be identifiable; the specific construction of the parameters does not matter. Stated equivalently, the result asserts that 
\begin{equation} \label{eq:positivity-condition}
    \bbE \left( \prod_{a=1}^{d} \one \Big[  \bbP \left\{ \pi_a(X) > 0 \mid V \right\} = 1 \Big]  \right) > 0
\end{equation}
is necessary for $\{ \psi_a \}_{a=1}^{d}$ to be $V$-comparable and identifiable. Although \eqref{eq:positivity-condition} is a positivity assumption, it can be considerably weaker than the weak positivity assumption in Assumption~\ref{asmp:weak-positivity}.   When $\emptyset \subset V \subseteq X$, \eqref{eq:positivity-condition} is only an intermediate assumption. For example, when $V=X$,~\eqref{eq:positivity-condition} simplifies to $\bbE \left[ \prod_{a=1}^{d} \one \Big\{ \pi_a(X) > 0 \Big\} \right] > 0$. Meanwhile, when $V = \emptyset$, then \eqref{eq:positivity-condition} simplifies to weak positivity.

\subsection{Comparability versus positivity}

Theorem~\ref{thm:positivity} formalizes the tension between satisfying the $V$-comparability criterion and identifiability simultaneously. Dynamic interventions that vary with a more granular covariate set $V$ only require a milder positivity assumption, but can only guarantee a less desirable $V$-comparability criterion.  Meanwhile, interventions that vary with a coarser covariate set can guarantee a more desirable comparability criterion, but require a stronger positivity assumption.  For example, $\emptyset$-comparability is the most desirable comparability criterion, but parameters satisfying $\emptyset$-comparability require weak positivity for identification.  By contrast, one can construct dynamic stochastic interventions varying with $X$ that only require a mild positivity assumption for identification, but can only satisfy the $X$-comparability criterion, which is the least desirable criterion.

\medskip

One might hope that dynamic interventions which vary with $X$ and only require a mild positivity assumption could satisfy a more desirable comparability criterion for $V \subset X$. However, this is not possible. Consider two parameters $\psi_a$ and $\psi_b$ corresponding to dynamic interventions varying over $X$, and suppose we hoped they would satisfy the $V$-comparability criterion for $V \subset X$. Because the interventions vary over with $X$, even if $\bbE(Y^a \mid V) > \bbE(Y^b \mid V)$ almost surely, it is impossible to rule out pathological conditional mean distributions $\bbE(Y^a \mid X)$ and $\bbE(Y^b \mid X)$ which yield $\psi_a \leq \psi_b$.

\section{Comparable and identifiable examples: trimmed and smooth trimmed treatment-specific means} \label{sec:examples}

In the appendix, we develop a general framework for constructing dynamic stochastic interventions that satisfy the $V$-comparability criterion and yield identifiable parameters.  The framework is based on two intuitive properties: (i) an intervention targeting treatment $a$ should increase the probability of receiving treatment $a$ and decrease the probability of receiving other treatments, and (ii) two interventions targeting different treatments should have the same interventional propensity score at all other non-target treatments. 

\medskip

Here, we focus on two specific examples: trimmed and smooth trimmed treatment-specific means. We focus on these parameters because they approximate the treatment-specific means as closely as possible while adapting to positivity violations. Theorem~\ref{thm:positivity} dictated the positivity assumption necessary for $V$-comparability and identifiability. Therefore, we construct the trimmed and smooth trimmed treatment-specific means to adapt to violations of that assumption. We define the $V$-comparable trimmed treatment-specific mean targeting treatment $a$ as the mean potential outcome under the following intervention:
\begin{align*}
    Q_a \sim \text{Categorical}\big\{ q_1(A=b \mid V), \dots, q_d(A=b \mid V) \big\} \text{ where } \\
    q_a(A=b \mid V) = \begin{cases} 
        \pi_b(V) \one(V \notin C_V) &\text{ for } b \neq a, \\
        \one(V \in C_V) + \pi_a(V) \one(V \notin C_V) &\text{ otherwise,} 
    \end{cases}
\end{align*}
where $C_V$ is the set of subjects with a non-zero probability of receiving every treatment, as defined in Theorem~\ref{thm:positivity}.  The trimmed treatment-specific mean satisfies the $V$-comparability criterion and is identifiable under the positivity assumption in \eqref{eq:positivity-condition} because it deliberately adapts to violations of that assumption: when $V \notin C_V$, then subjects are assigned treatment according to their underlying generalized propensity score, thereby ensuring the positivity assumption holds by construction.

\medskip

By focusing on the trimmed subjects in $C_V$, our construction agrees with prior intuition from the matching and balancing weights literature \citep{stuart2010matching, li2019propensity}.  Recent work in provider profiling has explored matching with multi-valued treatments \citep{silber2020comparing} and template matching, which matches each provider's patient population to a representative template \citep{silber2014template, vincent2021hospital}, while generalized matching methods for multi-valued treatment have also been developed \citep{yang2016propensity}. Meanwhile, \citet{li2019propensity} showed that balancing weights based on the generalized propensity score can be used to identify weighted average differences in conditional treatment-specific means with multi-valued treatments that are identifiable under positivity violations. With both matching and balancing weights, it has been argued informally that focusing on the group with non-zero probability of attending each treatment facilitates useful comparisons between treatments.\footnote{For example, \citeauthor{li2019propensity} argue that the ``generalized overlap weights focus on the subpopulation with substantial probabilities to be assigned to all treatments. This target population aligns with the spirit of randomized clinical trials by emphasizing patients at clinical equipoise, and is thus of natural relevance to medical and policy studies.''} Our construction deviates from prior approaches in that the trimming adapts to positivity violations of the milder necessary positivity assumption in Theorem~\ref{thm:positivity}, rather than weak positivity (Assumption~\ref{asmp:weak-positivity}). Our analysis provides a formal justification for why prior methods for weighting and matching yield parameters that can be used to compare treatment efficacy across treatments. 

\subsection{Smooth trimmed treatment-specific means}

It can be difficult to estimate trimmed treatment-specific means because it requires estimating the indicator function $\one(V \in C_V)$, which is non-differentiable and makes the resulting trimmed treatment-specific means non-smooth. As a result, standard semiparametric efficiency theory is inapplicable \citep{bickel1993efficient}. Achieving $\sqrt{n}$-efficiency under nonparametric assumptions can be done in two ways. First, one can target the data-dependent trimmed treatment-specific mean, which depends on the estimated indicator functions $\widehat \one (V \in C_V)$. Alternatively, one can smooth the indicator function. We take the second approach and consider smooth trimming indicator that is constructed with a smooth approximation of the trimming indicator $\one(V \in C_V)$. It is 
\begin{equation} \label{eq:smooth-trim}
    S(V \in C_V) = \prod_{b=1}^{d} \bbE \Big[ s \{ \pi_b(X) \} \mid V \Big],
\end{equation}
where $s(x)$ is any smooth approximation of $\one(x > 0)$. The formulation of $S(V \in C_V)$ is natural because $\one(V \in C_V)  = \one \left[ \prod_{b=1}^{d} \bbP \left\{ \pi_b(X) > 0 \mid V \right\} = 1 \right]$, which can be approximated by $\prod_{b=1}^{d} \bbP \left\{ \pi_b(X) > 0 \mid V \right\}$, and $\bbP \left\{ \pi_b(X) > 0 \mid V \right\} = \bbE \left[ \one \left\{ \pi_b(X) > 0 \right\} \mid V \right]$ can be approximated by $\bbE \left[ s\{ \pi_b(X) \} \mid V \right]$. 

\medskip

The smoothing function $s(x)$ can be chosen so that $S(V \in C_V)$ retains two important properties. First, $s(x)$ must approach $1$ very quickly as $x$ increases from zero. This is important because it ensures $S(V \in C_V)$ approximates $\one(V \in C_V)$ well. Second, $s(0) = 0$. This ensures that $S(V \in C_V) = 0$ if $V \notin C_V$; i.e., only subjects in $C_V$ are intervened upon. This second property is arguably more crucial because it guarantees that interventions using the smoothed indicator can satisfy the $V$-comparability criterion while still being identifiable under the necessary positivity assumption in~\eqref{eq:positivity-condition}. We only require $s(0) = 0$, and do not require any behavior for $s(x)$ when $x$ is negative because the inputs to $s(\cdot)$ are propensity scores, which are always non-negative.

\medskip

A simple example is $s(x) = 1- \exp(-kx)$ for $k > 0$. This function is smooth, satisfies $s(0) = 0$, and approaches $1$ quickly as $x$ increases from zero. Note that this function is a poor approximation of $\one(x > 0)$ if $x$ can be negative, but propensity scores are non-negative, and therefore this is not an issue.  We define the smooth trimmed treatment-specific mean targeting treatment $a$ as the mean potential outcome under a dynamic stochastic intervention $Q_a$ with the following interventional propensity scores:
\begin{equation} \label{eq:smooth-prop-scores}
    q_a(A=b \mid V) = \begin{cases}
        \{ 1 - S(V \in C_V) \} \pi_b(V) &\text{ for } b \neq a, \\
        S(V \in C_V) + \{ 1 - S(V \in C_V) \} \pi_a(V) &\text{ otherwise.}
    \end{cases}
\end{equation}
This construction weights the intervention towards treatment $a$ as the smooth trimming indicator approaches one, and weights the intervention towards the underlying generalized propensity scores as the smooth indicator approaches zero.

\begin{remark}
    Smooth approximations of functions of the propensity scores have been considered extensively in the trimming literature (see \citet{yang2018asymptotic, khan2022doubly, branson2023causal} for recent examples). As far as we are aware, the second property of our smooth approximation --- that $s(0) = 0$ --- is a novel constraint which leads to a different smooth approximation than those previously considered. Standard trimming methods develop smooth approximations of \(\one \left\{ \pi_a(X) > t \right\}\) for \(t > 0\), where \(t\) is the trimming threshold, and typically assume strong positivity holds. As a result, these smooth approximations are positive even when \(\pi_a(X) = 0\). For example, \citeauthor{yang2018asymptotic} and \citeauthor{branson2023causal} consider \(s(x) = \Phi_\epsilon\{ \pi_a(X) - t\}\), where \(\Phi_\epsilon\) is the CDF of a normal distribution with mean zero and variance \(\epsilon^2\). In our framework, this smooth approximation is undesirable because \(\Phi_\epsilon (z - t) > 0\) for all \(z \in [0,1]\) if \(t\) is finite. The resulting parameters will not be identifiable under the milder positivity assumption in \eqref{eq:positivity-condition}.  We instead consider $s(x) = 1 - \exp(-kx)$, which satisfies $s(0) = 0$ and retains identifiability under the milder positivity assumption.  Our approach may also be relevant to the trimming literature, as our construction appears to be new. 
\end{remark}

\section{Identification and estimation} \label{sec:eif-estimator}

In this section, we develop methods for estimating the set of smooth trimmed effects defined in the prior section. We begin with identification.
\begin{proposition} \label{prop:id} \textbf{(Identification)}
    Let $\big\{ \psi_a \big\}_{a=1}^{d} = \big\{ \bbE(Y^{Q_a}) \big\}_{a=1}^{d}$ denote the set of smooth trimmed treatment-specific means with interventional propensity scores as in \eqref{eq:smooth-prop-scores}.  Suppose consistency and exchangeability hold, and the positivity assumption in \eqref{eq:positivity-condition} holds for $V$. Then, for all $a \in \{1, \dots, d\}$,
    $$
    \psi_a = \bbE \left[ \sum_{b=1}^{d} \bbE \{ \mu_b(X) \mid V \} q_a(A =b \mid V) \right].
    $$
    where $q_a(A=b \mid V)$ is defined in \eqref{eq:smooth-prop-scores}.
\end{proposition}
This result is the typical g-formula applied to smooth trimmed treatment-specific means \citep{robins1986new}. The result suggests a natural plug-in estimator:
$$
\bbP_n \left\{ \sum_{b=1}^{d} \widehat \bbE \{ \widehat \mu_b(X) \mid V \} \widehat q_a(A = b \mid V) \right\},
$$
where $\widehat \mu_b(X)$ regresses $Y \sim \{ X,  \one(A=b) \}$, $\widehat \bbE \{ \widehat \mu_b(X) \mid V \}$ regresses $\widehat \mu_b(X) \sim V$, and $\widehat q_a(A=b \mid V)$ plugs estimated propensity scores into the definition of $q_a(A=b\mid V)$. With well-specified parametric models for the propensity score, outcome regression, and second-stage regression $\bbE \{ \mu_b(X) \mid V\}$, the plug-in estimator can achieve $\sqrt{n}$-convergence to $\psi_a$.  However, if the models are mis-specified, the plug-in estimator can be biased \citep{vansteelandt2012model}. Meanwhile, if the propensity score and outcome regression are estimated with nonparametric methods, the plug-in estimator will typically inherit slower-than-$\sqrt{n}$ nonparametric convergence rates. This motivates estimators based on nonparametric efficiency theory \citep{bickel1993efficient, van2000asymptotic, tsiatis2006semiparametric}. 

\subsection{Efficient influence function and one-step estimator}

The first-order bias of the nonparametric plug-in can be characterized by the efficient influence function (EIF) of the parameter, which can be thought of as the first derivative in a von Mises expansion of the parameter \citep{von1947asymptotic}. The EIF can be used to construct estimators that can achieve $\sqrt{n}$-convergence with nonparametric estimators for the nuisance functions.  In this section, we establish a doubly robust-style estimator based on the EIFs of the parameters $\big\{ \bbE(Y^{Q_a}) \big\}_{a=1}^{d}$.  We focus on $V = X$ for simplicity, but provide a comprehensive analysis in the supplement.  To derive the EIF and establish a doubly robust-style estimator, we require a strengthening of the necessary positivity assumption from~\eqref{eq:positivity-condition}.
\begin{assumption} \label{asmp:intermediate-positivity}
    \emph{Intermediate positivity: } $\exists\ \varepsilon > 0$ such that $\bbE \left[ \prod_{a=1}^{d} \one \Big\{ \pi_a(X) > 0 \Big\} \right] \geq \varepsilon.$ 
\end{assumption} 
This assumption is stronger than \eqref{eq:positivity-condition} in the same sense strong positivity is stronger than weak positivity: Assumption~\ref{asmp:intermediate-positivity} requires boundedness away from zero. It is necessary in the same way that typical strong positivity is necessary for establishing semiparametric efficient estimators \citep{khan2010irregular}. The next result provides the EIF for $\bbE(Y^{Q_a})$ with generic smooth approximation function $s(\cdot)$.

\begin{theorem} \label{thm:eif} \textbf{(Efficient influence function)} 
    Let $\psi_a$ denote the identified smooth trimmed treatment-specific mean in Proposition~\ref{prop:id} and let $V = X$. Suppose Assumption~\ref{asmp:intermediate-positivity} holds and $s(\cdot)$ is twice differentiable with non-zero bounded second derivatives. Then, the un-centered EIF of $\psi_a$ is
    \begin{align} 
        \varphi_a (Z) = \sum_{b=1}^{d} \Bigg( &\mu_b(Z) q_a(A=b \mid X) + \left[ \frac{\one(A=b)}{\pi_b(X)}\left\{ Y - \mu_b(X) \right\} \right] q_a(A=b \mid X) + \mu_b(X) \varphi_{q_a}(Z; b) \Bigg) \label{eq:eif}
    \end{align}
    where 
    \begin{align}
        \varphi_{q_a}(Z; b) &= \varphi_S(Z) \{ \one(b=a) - \pi_b(X) \} + \Big\{ 1 -S(X \in C_X) \Big\} \Big\{ \one(A=b) - \pi_b(X) \Big\} \text{, and } \label{eq:if-rho_a} \\
        \varphi_{S}(Z) &= \sum_{b=1}^{d} \bigg( s^\prime\{ \pi_b(X) \} \left\{ \one(A=b) - \pi_b(X) \right\} \bigg) \prod_{c \neq b}^{d} s\{ \pi_c(X) \}, \nonumber
    \end{align}
    and $S(X \in C_X)$ is defined in \eqref{eq:smooth-trim}.
\end{theorem}

The un-centered EIF in \eqref{eq:eif} is more complex than typical EIFs for standard estimands.  However, at a high level, it takes the usual form of a plug-in plus the sum of weighted residuals, as $\varphi_{q_a}(Z; b)$ and $\varphi_S(Z)$ each consist of sums of weighted residuals.  It is possible to construct a one-step estimator for $\psi_a$ with the sample average of the un-centered EIF. Adding the weighted residuals to the plug-in debiases the plug-in estimate. The EIF could also be used to construct other efficient estimators, such as a targeted maximum likelihood estimator \citep{van2006targeted}. The one-step estimator we consider is also referred to as a double machine learning estimator \citep{chernozhukov2018double}. We focus on the one-step estimator because it has the same asymptotic guarantees as other estimators and is arguably simpler to construct.  The next result shows the one-step estimator's desirable asymptotic properties: its bias is doubly robust-style, in the sense that it is upper bounded by the sum of the products of errors from the nuisance function estimators.

\begin{theorem} \label{thm:dr-est} 
    \textbf{(Second order bias)} Let $\psi_a$ denote the identified parameter in Proposition~\ref{prop:id} where $q_a$ is defined in \eqref{eq:smooth-prop-scores} and $V = X$. Suppose access to nuisance function estimates $\widehat \pi_b(X)$ and $\widehat \mu_b(X)$ for $b \in \{1,\dots, d\}$ independent from the observed sample, and construct a one-step estimator for $\psi_a$ as $\widehat \psi_a = \bbP_n \left\{ \widehat{\varphi}_{a}(Z) \right\}$, where  $\varphi_{a}(Z)$ is defined in \eqref{eq:eif}. Under the conditions of Theorem~\ref{thm:eif}, suppose further that $\left| \widehat \mu_b(X) \right|$ is uniformly bounded for all $b \in \{1, \dots, d \}$. Then, 
    \begin{align}
        &\left| \bbE \left( \widehat \psi_a - \psi_a \right) \right| \lesssim \sum_{b=1}^{d} \left|  \bbE \left[ \left\{ \widehat \pi_b(X) - \pi_b(X) \right\} \left\{ \widehat \mu_b(X) - \mu_b(X) \right\}  \frac{\widehat q_a(A=b \mid X)}{\widehat \pi_b(X)} \right] \right| \nonumber \\
        &+ \sum_{b=1}^{d} \left( \left\lVert  \widehat \mu_b - \mu_b \right\rVert + \left\lVert \widehat \pi_b - \pi_b \right\rVert \right) \left\{ \sum_{c=1}^d \left\lVert s^\prime(\pi_c) ( \widehat \pi_c - \pi_c ) \right\rVert  \right\} \nonumber \\
        &+ d \Bigg\{ \sum_{b=1}^d \left\lVert s^{\prime \prime} ( \pi_b )^{1/2} ( \widehat \pi_b - \pi_b) \right\rVert^2 + \sum_{b=1}^{d} \sum_{c < b} \left\lVert s^\prime (\pi_b) (\widehat \pi_b - \pi_b ) \right\rVert \left\lVert s^\prime (\pi_c) (\widehat \pi_c - \pi_c ) \right\rVert  \Bigg\} \label{eq:dr-est-smooth-trim} 
    \end{align}
\end{theorem}

This result shows that the one-step estimator has a second order bias consisting of sums of products of errors in estimating the propensity score and outcome regression. We first describe the conditions required for this result. The result assumes access to independent nuisance estimates. While this assumption is not strictly necessary for analyzing the bias of the estimator, it is essential for establishing its limiting distribution in Corollary~\ref{cor:normal}, as it allows us to avoid imposing Donsker or other complexity conditions \citep{van1996weak, chen2022debiased}. Moreover, this assumption is mild and can be ensured through cross-fitting, which typically involves randomly splitting the data into $k$ folds (five and ten are common), training the nuisance function estimators on $k-1$ folds, and evaluating the one-step estimator on the $k^{th}$ held-out fold \citep{robins2008higher, zheng2010asymptotic}. Full-sample efficiency can be retained by rotating the folds. The remaining condition is a mild boundedness condition, that the regression estimates are uniformly bounded.
    
\medskip

Next, we describe the bias term itself, in \eqref{eq:dr-est-smooth-trim}. It consists of three pieces.  The first summand arises from estimating $\psi_a$ if the interventional propensity scores $q_a(A=b \mid X)$ were known, and is the canonical product of errors in estimating the nuisance functions. It could be upper bounded by a product of root-mean-squared-errors using H\"{o}lder's inequality and the Cauchy-Schwartz inequality, supposing $\left| \frac{\widehat q_a(A=b \mid X)}{\widehat \pi_b(X)} \right|$ were uniformly bounded for all $b \in \{1, \dots, d\}$.  We leave it as it is to highlight how different interventions could affect the error bound through $\widehat q_a(A=b \mid X)$. The second summand is the product of residuals in estimating $q_a(A=b \mid X)$ and $\mu_b(X)$, while the third summand is the product of residuals that arises from the debiased estimator of $S(X \in C_X)$ within $q_a(A=b \mid X)$. This third term shows a strong dependence on estimating the propensity scores.  This is because estimating $S(X \in C_X)$ and its EIF involves estimating all the propensity scores simultaneously.  Therefore, the double sum over $b$ and $c$ arises, as does an outer factor of $d$.  This is important because it shows how estimating these parameters depends on the dimension of the treatment: as the number of possible treatments increases, the bias convergence rate slows.

\begin{remark}
    In this paper, we assume that the smoothing functions $s(\cdot)$ are fixed with sample size. This allows for straightforward derivation of the bias bound above and the limiting distribution guarantee in the next result. Moreover, it agrees with a typical fixed-sample data analysis, where one chooses a fixed $s(\cdot)$. This is the approach we take in our data analysis. However, one could investigate convergence guarantees when allowing the smoothing function to change with sample size to minimize smooth approximation error, if the target of interest is the trimmed treatment-specific means. That is beyond the scope of this work, but has been considered previously in causal inference (e.g., \citet{levis2024nonparametric}).
\end{remark}

Theorem~\ref{thm:dr-est} establishes that a one-step estimator has second order bias. As a consequence, inference is possible when the nuisance functions satisfy nonparametric convergence rates, as in the following result.

\begin{corollary} \label{cor:normal}
    \textbf{(Normal limiting distribution)} Construct one-step estimates $\{ \widehat \psi_a \}_{a=1}^{d}$ according to Theorem~\ref{thm:dr-est} and suppose the conditions of Theorem~\ref{thm:dr-est} hold for all $a \in \{1, \dots, d\}$.  In addition, suppose Assumption~\ref{asmp:intermediate-positivity} holds,
    \begin{equation} \label{eq:ifs-consistent}
        \sum_{a=1}^{d} \bbE \{ \widehat \varphi_{a}(Z) - \varphi_{a}(Z) \} = o_{\bbP}(1),
    \end{equation}
    and
    \begin{equation} \label{eq:bias-convergence}
        \sum_{a=1}^{d} \sum_{b=1}^{d} \lVert \widehat \mu_a - \mu_a \rVert  \lVert \widehat \pi_b - \pi_b \rVert + d \sum_{a=1}^{d} \sum_{b \leq a} \lVert \widehat \pi_a - \pi_a \rVert \lVert \widehat \pi_b - \pi_b \rVert = o_{\bbP}(n^{-1/2}).
    \end{equation}
    Then, 
    $$
    \sqrt{n} \begin{pmatrix} \widehat \psi_1 - \psi_1 \\ \vdots \\ \widehat \psi_ d - \psi_d \end{pmatrix} \indist N(0, \Sigma)
    $$
    where $e_i^T \Sigma e_j = \cov \{ \varphi_{i}(Z), \varphi_{j}(Z) \}$ and $e_i$ is the $i^{th}$ standard basis vector in $\bbR^d$.
\end{corollary}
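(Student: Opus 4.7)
The plan is to decompose each coordinate error into a leading iid sum and two negligible remainders, and then apply the multivariate CLT jointly. Since $\mathbb{E}\varphi_a = \psi_a$ and sample splitting makes the nuisance estimates independent of the evaluation sample, I would write for each $a$
\[
\widehat\psi_a - \psi_a \;=\; (\mathbb{P}_n - \mathbb{P})\varphi_a(Z) \;+\; (\mathbb{P}_n - \mathbb{P})\{\widehat\varphi_a(Z) - \varphi_a(Z)\} \;+\; \bigl\{\mathbb{E}\widehat\varphi_a(Z) - \psi_a\bigr\},
\]
where the last summand equals the conditional bias $\mathbb{E}(\widehat\psi_a - \psi_a)$ bounded in Theorem~\ref{thm:dr-est} and the middle summand is the usual empirical-process remainder.

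The main technical step is to show that both remainder terms are $o_\mathbb{P}(n^{-1/2})$, uniformly in $a$. For the conditional bias, I would substitute Theorem~\ref{thm:dr-est}'s bound and simplify each of its five summands. Bounded first and second derivatives of $f$ and $s$ allow me to replace quantities like $\|f'(\pi_b)(\widehat\pi_b - \pi_b)\|$ and $\|s''(\pi_b)^{1/2}(\widehat\pi_b - \pi_b)\|^2$ by constant multiples of $\|\widehat\pi_b - \pi_b\|$ and $\|\widehat\pi_b - \pi_b\|^2$ respectively. Assumption~\ref{asmp:intermediate-positivity}, together with consistency of $\widehat\pi_b$ and uniform boundedness of $\widehat\mu_b$, ensures $\widehat q_a(A{=}b\mid X)/\widehat\pi_b(X)$ is bounded with high probability, so Cauchy--Schwarz converts the first summand of \eqref{eq:dr-est-error-1} into $\sum_b \|\widehat\pi_b - \pi_b\|\,\|\widehat\mu_b - \mu_b\|$. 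After summing over $a$, every piece of \eqref{eq:dr-est-error-1} is dominated by the left-hand side of \eqref{eq:bias-convergence}, with the factor of $d$ absorbing the fifth summand arising from the EIF of $S(X\in C_X)$. For the empirical-process remainder, cross-fit independence makes $(\mathbb{P}_n - \mathbb{P})(\widehat\varphi_a - \varphi_a)$ a centered sample average with conditional variance at most $n^{-1}\|\widehat\varphi_a - \varphi_a\|^2$; combining this with the $L_2$ consistency implied by \eqref{eq:ifs-consistent} and Chebyshev's inequality gives the $o_\mathbb{P}(n^{-1/2})$ conclusion without any Donsker or bracketing-entropy assumption.

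Finally, I would stack the leading iid sums $\{\sqrt n\,(\mathbb{P}_n - \mathbb{P})\varphi_a\}_{a=1}^d$ into a $d$-vector and apply the multivariate Lindeberg--L\'{e}vy CLT, which delivers convergence to $N(0, \Sigma)$ with $\Sigma_{ij} = \mathrm{Cov}(\varphi_i(Z),\varphi_j(Z))$. Finite second moments of each $\varphi_a$ follow from Assumption~\ref{asmp:intermediate-positivity}, boundedness of $Y$ and $\widehat\mu_b$, and boundedness of the derivatives of $f$ and $s$, combined with the explicit form of the EIF in Theorem~\ref{thm:eif}. A Slutsky step then absorbs the two $o_\mathbb{P}(n^{-1/2})$ remainders, completing the proof. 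The part I expect to be most delicate is the bookkeeping that reduces Theorem~\ref{thm:dr-est}'s five bias summands to the two-term form of condition~\eqref{eq:bias-convergence} --- in particular, tracking the factor $d$ on the $\pi$-$\pi$ cross-products and verifying that $\widehat q_a/\widehat\pi_b$ stays bounded with high probability, which is ultimately what makes intermediate positivity indispensable.
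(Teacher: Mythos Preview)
Your proposal is correct and follows essentially the same three-term decomposition as the paper's proof: a leading iid sum handled by the multivariate CLT, an empirical-process remainder controlled via sample-splitting and consistency of the estimated influence functions (the paper cites Lemma~2 of \citet{kennedy2020sharp}, which is precisely your Chebyshev-under-cross-fitting argument), and a conditional-bias term controlled by Theorem~\ref{thm:dr-est} together with \eqref{eq:bias-convergence}. You supply more explicit bookkeeping than the paper does for reducing the five bias summands to the two-term form of \eqref{eq:bias-convergence}, but the overall strategy is identical.
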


This result establishes when the vector of one-step estimates $\{ \widehat{\psi}_a \}_{a=1}^{d}$ achieves a normal limiting distribution. The result depends on the conditions outlined in Theorem~\ref{thm:dr-est}. Moreover, it requires intermediate positivity, which rules out propensity scores arbitrarily close to zero, and requires the mild assumption of consistency of the estimated EIFs, in \eqref{eq:ifs-consistent}. Crucially, it also demands that the product of errors in the nuisance function estimators converges to zero at a rate of \(n^{-1/2}\), in \eqref{eq:bias-convergence}. This requirement is attainable under nonparametric conditions on the nuisance functions---such as smoothness, sparsity, or bounded variation---when \(n^{-1/4}\) convergence rates can be achieved for each nuisance function \citep{gyorfi2002distribution}.  

\medskip

When Corollary~\ref{cor:normal} applies, it is possible to estimate and conduct inference on a variety of comparisons. One could construct a joint confidence interval for rankings of causal parameters (see, \citet{klein2020joint}). However, when there are many parameters, a joint confidence interval might be very wide, making it preferable to focus on other parameters. Instead, it is possible to estimate the $q$-quantile of the ranking and construct a confidence interval for that $q$-quantile. Alternatively, recent work has proposed methods for estimating the ``$\tau$-best'' parameters, which could, for example, be straightforwardly adapted to estimate the $\tau$-best providers in provider profiling \citep{mogstad2024inference}. In provider profiling, researchers typically focus on comparing each parameter to some benchmark, like the median or mean parameter. One could estimate and conduct inference on such a comparison using Corollary~\ref{cor:normal} and the delta method.

\section{Data analysis: provider profiling} \label{sec:data-analysis}

We apply our method to analyze Medicare inpatient claims for end-stage renal disease (ESRD) beneficiaries undergoing kidney dialysis in 2020. Our analysis is publicly available at 
\url{www.github.com/alecmcclean/comparisons-positivity}.


\subsection{Data} 

The data includes claims information from the United States Renal Data System (USRDS) \citep{usrds2022adr}. The outcome is all-cause unplanned hospital readmission within 30 days of discharge. The dataset contains patient demographic information (sex, age at first ESRD service, race, and ethnicity), physical attributes (body mass index and parameter status), social factors (substance/alcohol/tobacco use and employment status), and clinical characteristics (length of hospital stay, time since ESRD diagnosis, and dialysis mode). It also captures the cause of ESRD (diabetes, hypertension, primary glomerulonephritis, or other) and prevalent comorbidities (in-hospital COVID-19, heart failure, coronary artery disease, cerebrovascular accident, peripheral vascular disease, cancer, and chronic obstructive pulmonary disease).

\medskip

Although in reality a patient can have multiple claims, to simplify the illustration we assume each claim corresponds to a unique patient. Moreover, we focus on claims from the ten most common providers in New York State. This reduction resulted in a dataset of 11,052 claims, distributed roughly evenly across ten providers. Despite this simplification, positivity violations persisted for several providers. Indeed, there were $31$ patient-provider combinations with an estimated propensity score of zero, and $10.0$\% of patient-provider combinations had an estimated propensity score below $0.01$. Given these positivity violations, the dataset remains suitable for demonstrating our methods: we consider smooth trimmed treatment-specific means that facilitate comparability of provider efficacy while simultaneously addressing positivity violations. \color{black}

\subsection{Methods}

We analyzed the data using the estimator detailed in Section~\ref{sec:eif-estimator} for smooth trimmed treatment-specific means. Specifically, we designed estimators that satisfy the comparability criterion for $V = X$ with smooth approximation $s(x) = 1 - \exp(-100x)$.  To construct the one-step estimator, we employed two-fold cross-fitting and estimated the generalized propensity score and outcome regression with ensemble (or stacking \citep{breiman1996stacked}) estimators via the \texttt{SuperLearner} package \citep{polley2024super, van2007super} in \texttt{R} \citep{r2024language}. The ensemble included the mean, a random forest with default parameter settings \citep{wright2017fast}, a generalized additive model with no interactions \citep{hastie2024gam}, and generalized linear models with no interaction terms and with all interactions terms. 

\subsection{Results}

Figure~\ref{fig:results} presents the main results. The x-axis indicates the targeted provider. For example, the left-most point shows the effect of a smooth trimmed treatment-specific intervention targeting provider \texttt{I} (the provider IDs are anonymized). The black point and whiskers represent the point estimate and 95\% pointwise confidence interval. The red horizontal line indicates the observed average with no intervention.

\medskip

\begin{figure}[ht]
    \centering
    \includegraphics[height=4in]{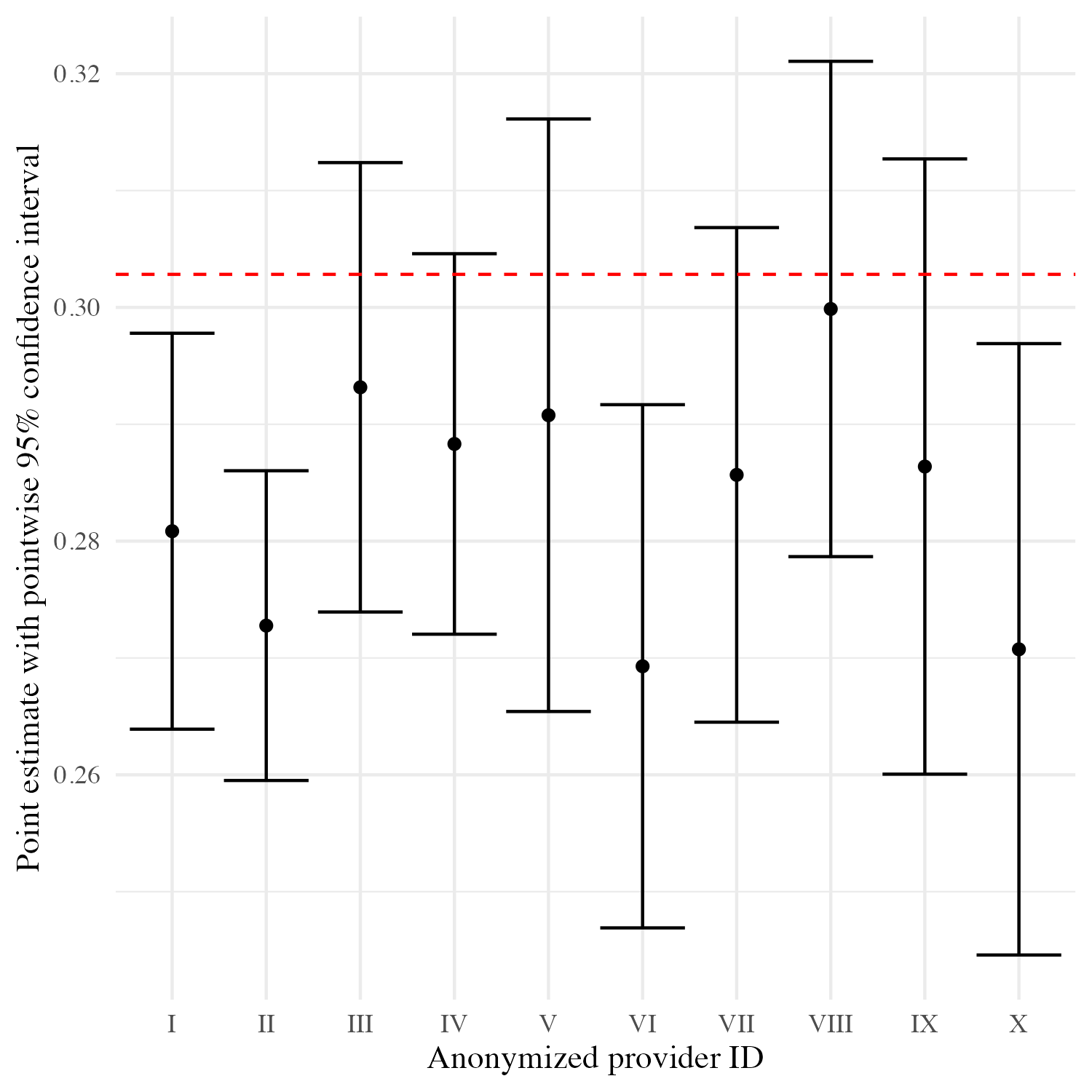}
    \caption{Results for different sets of interventions}
    \label{fig:results}
\end{figure}

Figure~\ref{fig:results} shows minimal variation in provider performance once statistical uncertainty is accounted for, as indicated by the overlapping pointwise confidence intervals, although there is some variation in the point estimates: we estimate provider \texttt{VI} has the lowest counterfactual 30-day readmission rate, at 0.269, while provider \texttt{VIII} has the highest counterfactual readmission estimate, at 0.300. To better understand whether provider \texttt{VIII} was markedly worse than other providers, we also estimated and constructed confidence intervals for the difference between its performance and the performance of the other providers. Those results are presented in Table~\ref{tab:differences}. The table shows two 95\% confidence intervals do not include zero, for providers \texttt{II} and \texttt{VI}.  This indicates there is a statistically significant difference between the readmission rate at these providers and the readmission rate at provider \texttt{VIII}. These results exhibit narrower confidence intervals than we might have anticipated from Figure~\ref{fig:results} because there is a high positive correlation between estimates for each provider, and therefore the standard error of the estimator for the difference in readmission rates can be smaller than the standard errors for estimating each readmission rate separately.

\begin{table}
    \centering
    \begin{tabular}{>{\raggedright}p{2cm} >{\raggedleft\arraybackslash}p{4.5cm}>{\raggedleft\arraybackslash}p{3.5cm}}  
        \toprule
        \textbf{Anonymized provider identifier} & \textbf{Difference between 30-day readmission rate for provider \texttt{VIII} and this provider} & \textbf{95\% confidence interval} \\
        \midrule
        \texttt{I}      & 0.019 & [-0.006, 0.044]   \\
        \texttt{II}     & 0.027 & [0.004, 0.050]    \\
        \texttt{III}    & 0.007 & [-0.020, 0.033]   \\ 
        \texttt{IV}     & 0.012 & [-0.013, 0.036]   \\
        \texttt{V}      & 0.009 & [-0.022, 0.040]   \\
        \texttt{VI}     & 0.031 & [0.002, 0.060]    \\
        \texttt{VII}    & 0.014 & [-0.014, 0.043]   \\
        \texttt{IX}     & 0.013 & [-0.019, 0.046]   \\
        \texttt{X}      & 0.029 & [-0.003, 0.061]   \\ 
        \bottomrule
    \end{tabular}
    \caption{Difference between 30-day readmission rate for provider \texttt{VIII} and readmission rate for other top ten providers in New York State, with 95\% confidence interval}
    \label{tab:differences}
\end{table}

\medskip

Qualitatively, these results suggest that the top ten largest dialysis facilities in New York State exhibited similar performance in terms of 30-day unplanned readmission rates during the period of study, but that the worst performing provider --- provider \texttt{VIII} --- performed statistically significantly worse than two of the other providers --- providers \texttt{II} and \texttt{VI}.  This information could be used to inform policy recommendations, and help provider \texttt{VIII} target improvements that close the performance gap between it and other providers. From a methodological perspective, this analysis illustrates how to construct sets of estimates that yield useful comparisons and remain identifiable under positivity violations. The presented estimates satisfy the comparability criterion for $V = X$ and for identification only require a mild positivity assumption that there exists some set of patients that had a non-zero probability of attending every provider.

\section{Simulations} \label{sec:simulations}

We demonstrate the method through simulations estimating the smooth trimmed treatment effect with binary treatment. We implement the one-step estimator from Section~\ref{sec:eif-estimator} for the two smooth trimmed treatment-specific means:
\begin{align*}
    \psi_0 &= \bbE \left( \big[ S(X) + \{ 1 - S(X) \} \pi_0(X) \big] \mu_0(X) + \{ 1 - S(X) \} \pi_1(X) \mu_1(X) \right) \text{ and } \\
    \psi_1 &= \bbE \left( \{ 1 - S(X) \} \pi_0(X)  \mu_0(X) + \big[ S(X) + \{ 1 - S(X) \} \pi_1(X) \big] \mu_1(X) \right),
\end{align*}
where $S(X)$ approximates the trimming indicator $\one \{ 0 < \pi_1(X) < 1 \}$ with a smooth trimming indicator; $S(X) = s\{ \pi_1(X) \} s\{ 1 - \pi_1(X) \}$ where $s(x) = 1- \exp(-20 x)$. The target parameter is their difference, which equals a smooth trimmed treatment effect: $\psi_1 - \psi_0 = \bbE \left[ \{ \mu_1(X) - \mu_0(X) \} S(X) \right]$.

\medskip

The data generating process is illustrated by Figure~\ref{fig:all_plots}. There is a single covariate $X \sim \text{Unif}(0,1)$. The propensity score has positivity violations so that $\pi_1(X) = 0$ when $X \leq 0.1$ and $\pi_1(X) = 1$ when $X \geq 0.9$, which is shown in the top left plot. The conditional average treatment effect is $\bbE(Y^1 - Y^0 \mid X) = X^2$, in the top right plot. The trimming indicator $\one\{ 0 < \pi_1(X) < 1\}$ and its smooth approximation $S(X)$ are shown in the bottom plot. We used sample sizes of 100, 1,000, 10,000, and 100,000, and for each data generating process and sample size, we constructed 200 datasets.

\medskip

To evaluate robustness to nuisance estimation error, we simulated nuisance estimation error by adding random noise to the true nuisances --- the outcome model and propensity score --- which allowed us to control the convergence rates of the nuisance estimators as sample size increased.

\medskip

Figure~\ref{fig:coverage} presents coverage results for 95\% Wald-type confidence intervals for $\psi_1 - \psi_0 = \bbE \left[ \{ \mu_1(X) - \mu_0(X) \} S(X) \right]$ across different nuisance convergence rate scenarios. The estimator achieves nominal coverage when the product of nuisance estimation errors converges faster than $n^{-1/2}$ (as was predicted by Corollary~\ref{cor:normal}), but coverage deteriorates when this condition is violated (top left panel). This confirms the necessity of the convergence rate conditions for valid inference.

\begin{figure}[ht]
    \centering
    \includegraphics[width=\textwidth]{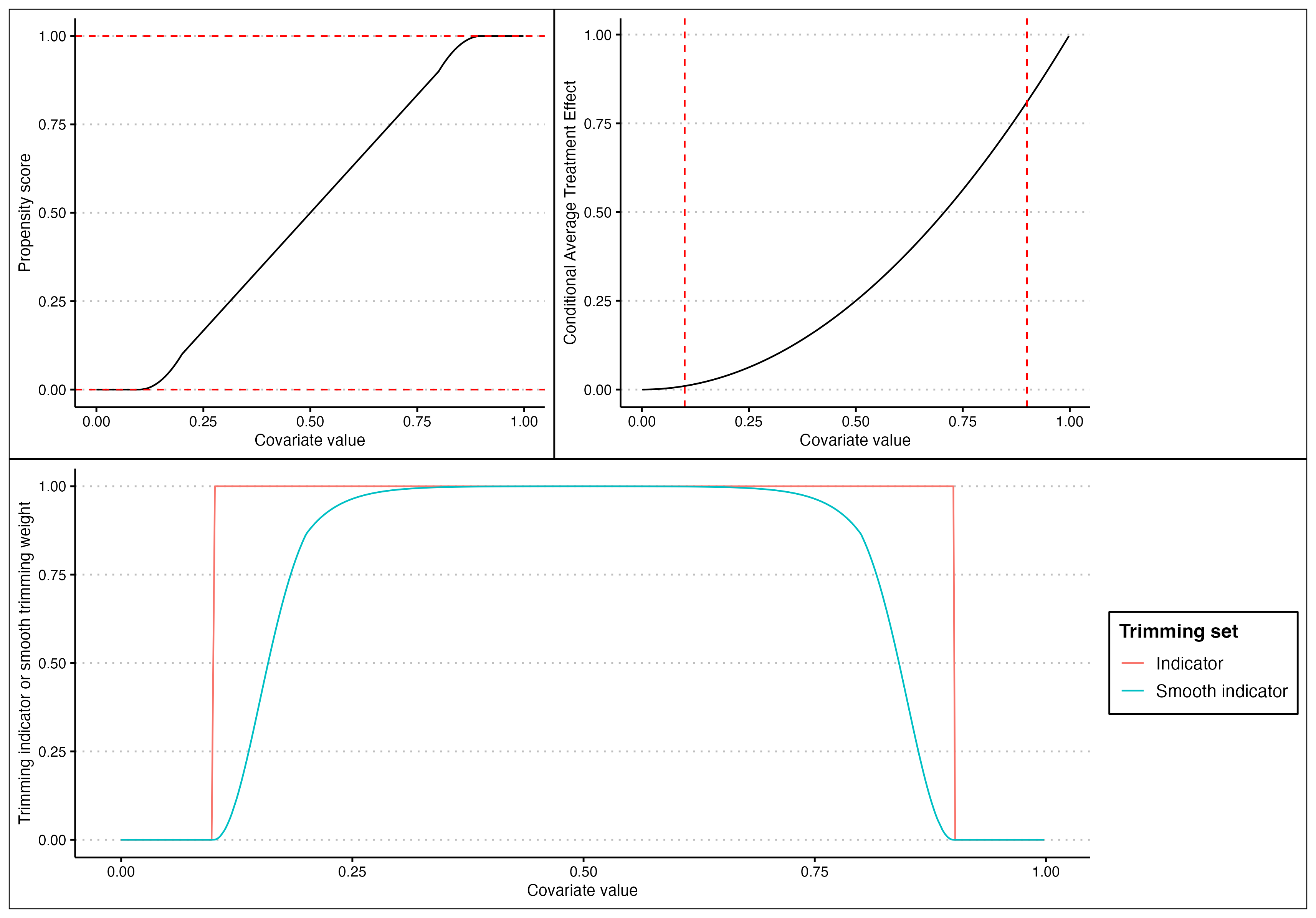}
    \caption{Data generating process illustration showing propensity score with positivity violations (top left), conditional treatment effect (top right), and smooth trimming function (bottom).}
    \label{fig:all_plots}
\end{figure}

\begin{figure}[ht]
    \centering
    \includegraphics[width=5in]{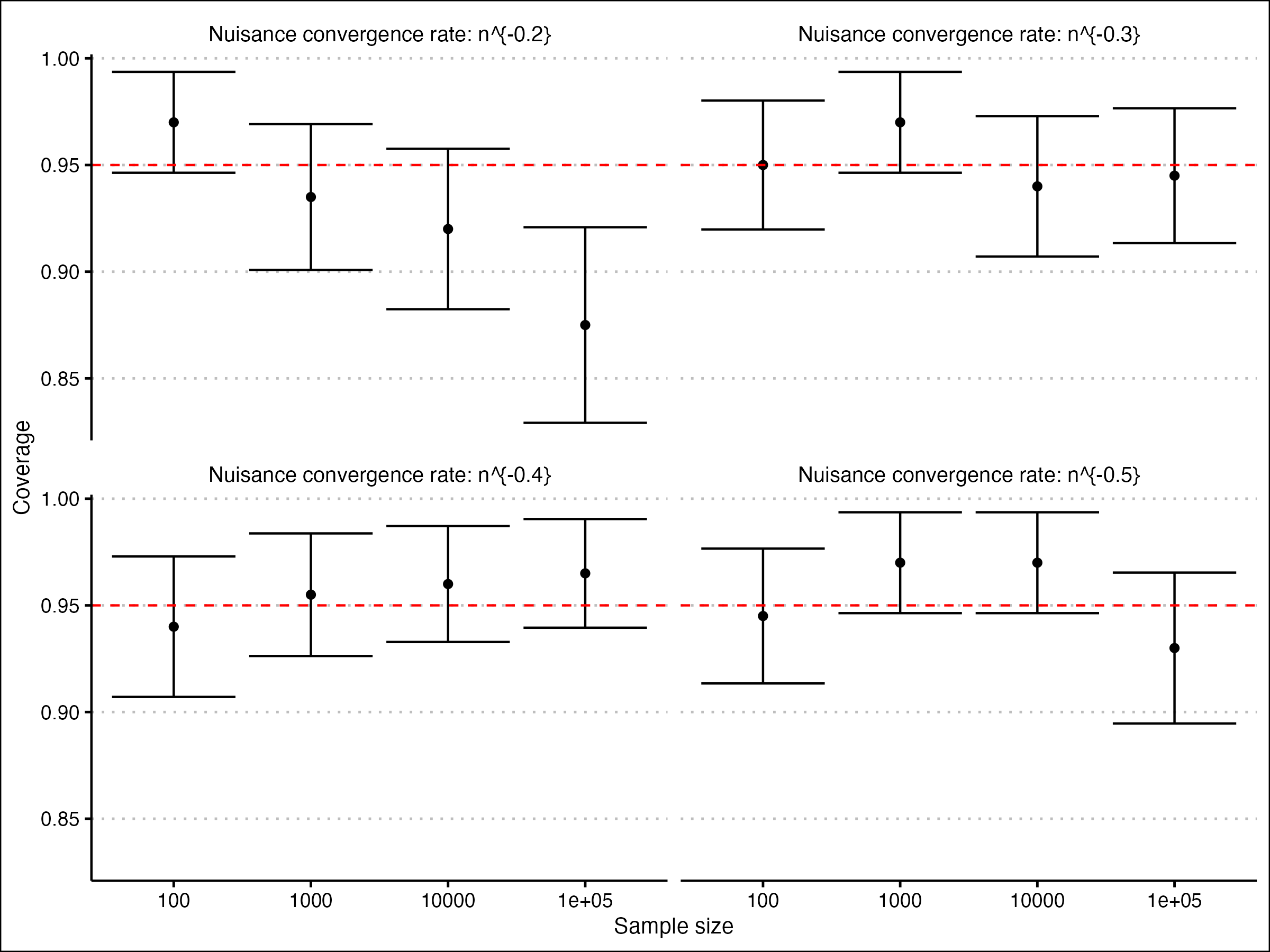}
    \caption{Coverage of 95\% confidence intervals for the smooth trimmed treatment effect across different nuisance function convergence rate scenarios and sample sizes. Error bars are 95\% Wald-style confidence intervals for the estimated coverage.}
    \label{fig:coverage}
\end{figure}

\section{Discussion} \label{sec:discussion} 

In this paper, we introduced new methods for comparing causal parameters in the presence of positivity violations with multi-valued treatments. We proposed a simple comparability criterion, stipulating that if one treatment's conditional treatment-specific mean is larger than another's, then the causal parameter targeting the first treatment is larger (and if the conditional treatment-specific means are equal, the parameters are equal). We then showed that many common examples fail to satisfy this property but established that parameters satisfying these properties can be identified under a mild positivity assumption. We proposed several examples, including trimmed and smooth trimmed effects, that satisfy the comparability criterion. We then developed doubly robust-style estimators for smooth trimmed effects, which achieve parametric convergence rates and normal limiting distributions, even with nonparametric nuisance function estimators. Our approach extends the applicability of causal inference methods to settings with positivity violations, such as large-scale healthcare provider profiling \citep{wu2022improving}. Finally, we illustrated the utility of these methods through a simulation study and an analysis of dialysis facility performance in New York State.

\bigskip

There are many topics for future study. A natural extension of our work could consider data generating processes where the dimension of treatment is very large or can grow with sample size (which occurs in provider profiling, \citep{nguyen2023high, he2013evaluating, varewyck2014shrinkage}), or where multiple observations occur for the same subject over time.  Meanwhile, to understand heterogeneity, it is important to construct parameters that can vary with covariate information, so this could be a complementary avenue of investigation.

\section*{Acknowledgments}

The project was partially supported by the Alzheimer's Association (AARG-23-1077773), National Institute on Aging (K02AG076883), and the Department of Population Health and Center for the Study of Asian American Health at the NYU Grossman School of Medicine (U54MD000538). The data reported here have been supplied by the United States Renal Data System (USRDS). The interpretation and reporting of these data are the responsibility of the author(s) and in no way should be seen as an official policy or interpretation of the U.S. government.

\bigskip

\noindent The authors thank Zach Branson, Edward Kennedy, the NYU causal inference group, and the Columbia causal inference learning group for helpful feedback.

\section*{Data Availability Statement}

The datasets used and/or analyzed during the current study are available from the United States Renal Data System (USRDS) upon data use agreement approval. Per the Data Use Agreement (DUA) between the authors and USRDS, the release of the data or the deposition of data into publicly available repositories or to individuals is not allowed.

\section*{References}

\vspace{-0.3in} 
\bibliographystyle{plainnat}
\bibliography{references}

\appendix

\vspace{0.5in}

\noindent {\LARGE \textbf{Appendix}}

\bigskip
\noindent The appendix is organized as follows:
\begin{itemize}
    \item Appendix~\ref{app:comp-extras} provides additional information about the results in Section~\ref{sec:conditions} and the properties needed to satisfy $V$-comparability and identifiability.
    \item Appendix~\ref{app:examples} provides the general class of examples, which include trimmed and smooth trimmed treatment-specific means as a special case.
    \item Appendix~\ref{id-est:extra} develops general results for identification and estimation. 
    \item Appendix~\ref{app:comparability-proofs} contains proofs of the results in Section~\ref{sec:conditions}.
    \item Appendix~\ref{app:eif-estimator} contains proofs of the results in Section~\ref{sec:eif-estimator}.
\end{itemize}

\section{Comparability and positivity} \label{app:comp-extras}

\subsection{Two interventions that are equivalent to comparability}

First, we establish that satisfying two intuitive properties is equivalent to satisfying $V$-comparability.  These properties can be used to establish what examples can satisfy comparability and what positivity assumption is necessary to identify them.

\begin{property} \label{property:increase-a-conditional}
    Let $\{ q_a(A=b \mid V) \}_{b=1}^{d}$ denote the interventional propensity scores of a dynamic stochastic intervention which varies with covariates $V \subseteq X$ and targets treatment $a$.  Then, $\bbP \{ q_a(A = a \mid V) \geq \pi_a(V) \} = 1$ and $\bbP \{ q_a (A = a \mid V) > \pi_a(V) \} > 0$, and $\bbP \{ q_a(A = b \mid V) \leq \pi_b(V) \} = 1$ for all $b \neq a$.
\end{property}

\begin{property} \label{property:non-ab-equal}
    Let $\{ q_a(A=c \mid V) \}_{c=1}^{d}$ and $\{ q_b(A=c \mid V)\}_{c=1}^{d}$ denote the interventional propensity scores of two dynamic stochastic interventions which vary with $V \subseteq X$ and target treatments $a$ and $b$, respectively. Then, $\bbP \{ q_a(A = c \mid V) = q_b(A = c \mid V) \} = 1$  for all $c \notin \{a, b\}.$
\end{property}

Property~\ref{property:increase-a-conditional} states that an intervention targeting treatment $a$ satisfies the following: (1) the probability of receiving treatment $a$ either increases or remains the same, (2) the probability of receiving treatment $a$ increases for a non-zero-probability subset of subjects, and (3) the probability of receiving any treatment besides $a$ either decreases or remains the same. Property~\ref{property:non-ab-equal} is defined for two interventions targeting treatments $a$ and $b$. It asserts that the interventional propensity scores are almost surely equal for all non-target treatments. The next result establishes that a set of causal parameters satisfies $V$-comparability if and only if it satisfies properties~\ref{property:increase-a-conditional} and~\ref{property:non-ab-equal}.

\begin{lemma} \label{lem:dynamic}
    \textbf{(Equivalence between $V$-comparability and properties~\ref{property:increase-a-conditional} and \ref{property:non-ab-equal})} Let $\big\{ \psi_a \big\}_{a=1}^{d} = \big\{ \bbE(Y^{Q_a}) \big\}_{a=1}^{d}$ denote a set of parameters defined by dynamic stochastic interventions that vary with covariates $V \subseteq X$ and target treatments $1, \dots, d$, respectively. The set satisfies criterion~\ref{crit:comp} for all $(a,b) \in \{1, \dots, d \} \times \{ 1, \dots, d\}$ if and only if it satisfies property~\ref{property:increase-a-conditional} separately for all $a \in \{1, \dots, d\}$ and property~\ref{property:non-ab-equal} for all $(a,b) \in \{1, \dots, d \} \times \{ 1, \dots, d\}$.
\end{lemma}

The proof for Lemma~\ref{lem:dynamic} appears in Appendix~\ref{app:comparability-proofs}. Lemma~\ref{lem:dynamic} is important for two reasons. First, it is easier to establish the necessary positivity assumption for identification from properties~\ref{property:increase-a-conditional} and \ref{property:non-ab-equal} than from criterion~\ref{crit:comp}. Second, the two properties suggest how to construct interventions.  In Appendix~\ref{app:examples}, we use them to generate intuition to construct interventions satisfying criterion~\ref{crit:comp} and identifiability simultaneously.

\smallskip

Finally, for establishing Theorem~\ref{thm:positivity}, we introduce a third property, which we term `$q$-weak positivity.'
\begin{property} \label{property:weak-positivity}
    \textbf{($q$-weak positivity)} Let $\{ q(A=a \mid V) \}_{a=1}^{d}$ denote the interventional propensity scores of a dynamic stochastic intervention that varies with covariates $V \subseteq X$ and let $W = X \setminus V$. The intervention satisfies $q$-weak positivity if $\bbP \big\{ \pi_a(W, V) = 0 \implies q(A=a \mid V) = 0 \big\} = 1$ for all $a \in \{1, \dots, d\}$; i.e., almost surely, if $\pi_a(W, V)$ equals zero then so does $q(A=a \mid V)$.
\end{property}

This property ensures $\bbE(Y \mid A=a, X)$ exists for all values $V \subseteq X$ where $q(A = a \mid V) > 0$.  It is a typical assumption for identifying counterfactual parameters based on dynamic stochastic interventions with observational data \citep{kennedy2019nonparametric}. It is a weaker condition than weak positivity in Assumption~\ref{asmp:weak-positivity}. In fact, it is not an assumption about the underlying propensity scores $\pi_a(X)$; instead, property~\ref{property:weak-positivity} can be achieved through careful design of the interventional propensity scores $q(A=a \mid V)$. To establish Theorem~\ref{thm:positivity}, we show that property~\ref{property:weak-positivity} can be satisfied alongside properties~\ref{property:increase-a-conditional} and \ref{property:non-ab-equal}.

\color{black}
\subsection{Weak positivity and comparability}

Here, we examine the tension between satisfying $q$-weak positivity (property~\ref{property:weak-positivity}) and properties~\ref{property:increase-a-conditional} and \ref{property:non-ab-equal}.  The results here will be used to prove Theorem~\ref{thm:positivity}. We show that satisfying properties~\ref{property:increase-a-conditional}-\ref{property:weak-positivity} requires a positivity assumption that there is a positive probability set which can receive all treatments.  The first result shows that a very mild positivity assumption is required for properties~\ref{property:increase-a-conditional} and \ref{property:weak-positivity} to hold.
\begin{proposition} \label{prop:positivity}
    Let $\big\{ q_a(A=b \mid V) \big\}_{b=1}^{d}$ denote the propensity score of dynamic stochastic intervention varying over covariates $V \subseteq X$. If $q_a$ satisfies property~\ref{property:increase-a-conditional} it must be the case that $\bbP(A = a) < 1$.  Meanwhile, if $q_a$ satisfies properties~\ref{property:increase-a-conditional} and \ref{property:weak-positivity}, then it must be the case that $\bbP(A = a) > 0$.
\end{proposition}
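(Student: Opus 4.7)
The plan is to prove both claims by contradiction, exploiting the boundedness of $q_a(A=a\mid V)$ in $[0,1]$ and the relationship between $\pi_a(X)$, $\pi_a(V)$, and $\bbP(A=a)$ via the tower property. The key preliminary observation I will use twice is that for any bounded random variable $\pi_a(X) \in [0,1]$, if $\bbE\{\pi_a(X)\} = 0$ then $\pi_a(X) = 0$ a.s., and if $\bbE\{\pi_a(X)\} = 1$ then $\pi_a(X) = 1$ a.s.; in either case, $\pi_a(V) = \bbE\{\pi_a(X) \mid V\}$ inherits the same almost sure value by the tower property.

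For the first claim, suppose for contradiction that $\bbP(A=a) = 1$. Applied to $\pi_a(X) = \bbP(A=a\mid X)$, this gives $\pi_a(X) = 1$ a.s.\ and hence $\pi_a(V) = 1$ a.s. But property~\ref{property:increase-a-conditional} requires $\bbP\{q_a(A=a\mid V) > \pi_a(V)\} > 0$, which would force $q_a(A=a\mid V) > 1$ on a set of positive probability, contradicting the fact that $q_a(A=a\mid V)$ is a probability.

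For the second claim, suppose for contradiction that $\bbP(A=a) = 0$. Then the same preliminary observation yields $\pi_a(X) = 0$ a.s.\ and $\pi_a(V) = 0$ a.s. Applying $q$-weak positivity (property~\ref{property:weak-positivity}) to the intervention $q_a$, the event $\{\pi_a(X) = 0\}$ forces $q_a(A=a\mid V) = 0$ a.s. Thus $q_a(A=a\mid V) = \pi_a(V) = 0$ almost surely, which directly contradicts the strict-increase clause $\bbP\{q_a(A=a\mid V) > \pi_a(V)\} > 0$ in property~\ref{property:increase-a-conditional}.

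The argument is short and the steps are routine once the tower-property observation is stated; the main thing to be careful about is not conflating $\pi_a(X)$ and $\pi_a(V)$, and ensuring that in Part 2 the hypothesis of $q$-weak positivity (a zero value of $\pi_a(X)$) is actually triggered almost surely, which is precisely what $\bbP(A=a) = 0$ provides. No additional machinery from the paper is needed beyond the definitions of the two properties.
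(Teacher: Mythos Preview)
Your proof is correct and follows essentially the same approach as the paper's. Both parts proceed by contrapositive/contradiction, using that $\bbP(A=a)\in\{0,1\}$ forces $\pi_a(X)$ (and hence $\pi_a(V)$) to be degenerate, which then collides with the strict-increase clause of property~\ref{property:increase-a-conditional}; for part two, both you and the paper invoke $q$-weak positivity to pin $q_a(A=a\mid V)$ at zero. Your version is in fact slightly more economical: the paper wraps part two in a Boolean-logic template (showing both $A\cap C^c\Rightarrow B^c$ and $B\cap C^c\Rightarrow A^c$ to conclude $A\cap B\Rightarrow C$), whereas one of these implications already suffices, which is exactly what you do.
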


Proposition~\ref{prop:positivity} rules out an extreme violation of positivity --- when $\bbP(A = a) = 1$ or $\bbP(A = a) = 0$.  Applying this to all interventions $\{ q_1, \dots, q_d \}$ strengthens this assumption somewhat and rules out $\bbP(A = a) \in \{0, 1\}$ for all $a \in \{1, \dots, d\}$. However, notice that this is a much milder assumption than strong positivity in Assumption~\ref{asmp:strong-positivity}, because it only requires $\pi_a(V)$ is bounded away from zero and one for a positive probability set, not all $V$. 

\bigskip

The next result shows what positivity condition is necessary to satisfy properties~\ref{property:increase-a-conditional}-\ref{property:weak-positivity} simultaneously. 

\begin{lemma} \label{lem:positivity}
     Let $\big\{ \psi_a \big\}_{a=1}^{d} = \big\{ \bbE(Y^{Q_a}) \big\}_{a=1}^{d}$ denote a set of parameters defined by dynamic stochastic interventions that vary with covariates $V \subseteq X$ and target treatments $1, \dots, d$, respectively.  If $\left\{ q_a \right\}_{a=1}^{d}$ satisfy $q$-weak positivity (property~\ref{property:weak-positivity}) and property~\ref{property:increase-a-conditional} separately and satisfy property~\ref{property:non-ab-equal} for all pairs $\{q_a, q_b\}_{(a,b) \in \{1, \dots, d \} \times \{1, \dots, d\}}$ together, it must be the case that
    \begin{equation} 
        \bbE \left( \prod_{a=1}^{d} \one \Big[ \bbP \left\{ \pi_a(X) > 0 \mid V \right\} = 1 \Big]  \right) > 0.
    \end{equation}
    In other words, \eqref{eq:positivity-condition} is necessary for properties~\ref{property:increase-a-conditional}-\ref{property:weak-positivity} to hold simultaneously. 
\end{lemma}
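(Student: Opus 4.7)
The plan is to show that the set $E_a := \{v : q_a(A=a\mid V=v) > \pi_a(v)\}$, which by property~\ref{property:increase-a-conditional} satisfies $\bbP(E_a) > 0$, is actually contained in $C_V$; the conclusion $\bbP(C_V) > 0$ then follows immediately. To do this I would first use property~\ref{property:non-ab-equal} to collapse the family $\{q_a\}_{a=1}^d$ to a single common object: define $r_c(V) := q_a(A=c\mid V)$ for any $a \neq c$, which is a.s.\ well-defined. Property~\ref{property:increase-a-conditional} gives $r_c(V) \leq \pi_c(V)$ for every $c$, so $\sum_c r_c(V) \leq \sum_c \pi_c(V) = 1$.

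Next I would extract $q$-weak positivity in a pointwise form. Since $q_a(A=c\mid V)$ depends only on $V$, the a.s.\ implication $\pi_c(W,V) = 0 \implies q_a(A=c\mid V) = 0$ forces $q_a(A=c\mid V=v) = 0$ at every $v$ for which $\bbP\{\pi_c(X) > 0 \mid V=v\} < 1$. Writing $A^*_v := \{c : \bbP(\pi_c(X) > 0 \mid V=v) = 1\}$ so that $C_V = \{v : A^*_v = \{1,\dots,d\}\}$, this says: whenever $c \notin A^*_v$, both $r_c(v) = 0$ (taking $a \neq c$) and $q_c(A=c\mid V=v) = 0$ (taking $a = c$).

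The crux is to show $v \notin C_V \implies v \notin E_a$ for every $a$. Fix $v \notin C_V$ and any witness $c \notin A^*_v$. From $q_c(A=c\mid V=v) = 0$ and $\sum_b q_c(A=b\mid V=v) = 1$ we obtain $\sum_{b \neq c} r_b(v) = 1$; adding $r_c(v) = 0$ yields $\sum_b r_b(v) = 1 = \sum_b \pi_b(v)$. Since $r_b \leq \pi_b$ coordinate-wise, equality in the sum forces equality in every coordinate, i.e.\ $r_b(v) = \pi_b(v)$ for all $b$. Hence $q_a(A=a\mid V=v) = 1 - \sum_{b \neq a} r_b(v) = 1 - \sum_{b \neq a} \pi_b(v) = \pi_a(v)$, so $v \notin E_a$. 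This yields $E_a \subseteq C_V$, and therefore $\bbP(C_V) \geq \bbP(E_a) > 0$.

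The main obstacle I expect is the ``constraint-tightening'' step: recognizing that having a single coordinate $r_c(v)$ pushed to zero by $q$-weak positivity, together with the normalization of $q_c$, saturates the upper-bound inequalities $r_b \leq \pi_b$ at every $b$ simultaneously, collapsing the intervention to the factual distribution outside $C_V$. Once this observation is in place, property~\ref{property:non-ab-equal} (to make $r_c$ a well-defined common object), property~\ref{property:weak-positivity} (to zero out the non-positivity coordinates), and property~\ref{property:increase-a-conditional} (to supply the positive-probability set $E_a$) combine cleanly without any further structural input.
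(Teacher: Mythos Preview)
Your proof is correct and takes essentially the same approach as the paper's: both show that at any $v \notin C_V$ the three properties force $q_a(\,\cdot\mid V=v) \equiv \pi_{\cdot}(v)$ for every $a$, so the strict-increase set $E_a$ from property~\ref{property:increase-a-conditional} must lie inside $C_V$. Your packaging via the common off-target function $r_c$ and the global sum constraint $\sum_b r_b = 1 = \sum_b \pi_b$ is a bit cleaner than the paper's pairwise sandwich $q_a(A=b\mid V) \leq \pi_b(V) \leq q_b(A=b\mid V)$ and its detour through contradiction, but the substance is identical.
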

\noindent Importantly, notice that, when combined with Lemma~\ref{lem:dynamic}, Lemma~\ref{lem:positivity} implies Theorem~\ref{thm:positivity}.

\section{General examples} \label{app:examples}

In this section, we propose general examples that satisfy $V$-comparability and are identifiable under only the minimum positivity assumption from Theorem~\ref{thm:positivity}. Indeed, properties~\ref{property:increase-a-conditional} and \ref{property:non-ab-equal} and the necessary positivity assumption in \eqref{eq:positivity-condition} suggest a strategy for constructing examples that are comparable and identifiable. According to property~\ref{property:increase-a-conditional}, an intervention targeting treatment $a$ should increase the probability of receiving treatment $a$ and decrease the probability of receiving other treatments. According to property~\ref{property:non-ab-equal}, two interventions targeting treatments $a$ and $b$ should have the same interventional propensity score at all other treatments $c \notin \{a, b\}$. Finally, the positivity assumption suggests constructing interventions that only target the subset of subjects with a non-zero probability of receiving every treatment (the subjects in the trimmed set $C_V$, defined in \eqref{eq:trimmed-set} in the main paper). With this intuition, we can construct a general shift intervention that satisfies properties~\ref{property:increase-a-conditional} and \ref{property:non-ab-equal} and yields an identifiable parameter under \eqref{eq:positivity-condition} in the main paper.

\begin{example} \emph{\textbf{General shift targeting treatment $a$.}} \label{example:general}
    \begin{equation*}
        q_a(A=b \mid V = v) = \one(v \in C_V) \rho_a(A=b \mid V=v) + \one(v \notin C_V) \pi_b(v)
    \end{equation*}
    where
    \begin{equation} \label{eq:def-rho}
        \rho_a(A=b \mid V=v) = \one(b \neq a) f \{ \pi_b(v) \} + \one(b = a) \left[ 1 - \sum_{b \neq a} f\{ \pi_b(v) \} \right] 
    \end{equation}
    and \( f: [0,1) \to [0,1) \) satisfies \( f(x) < x \). 
\end{example}
The target is treatment $a$. The indicator function $\one(v \in C_V)$ ensures that only subjects who could receive every treatment are intervened upon, which allows the resulting parameters to be identifiable under the positivity assumption in \eqref{eq:positivity-condition} in the main paper.  Meanwhile, in $\rho_a$, the intervention is defined explicitly for treatments $b \neq a$ but only implicitly for $a$.  It decreases the probability of receiving treatment $b \neq a$, and implicitly increases the likelihood of receiving treatment $a$ as one minus the sum of the non-target propensity scores.  This allows the resulting parameters to satisfy properties~\ref{property:increase-a-conditional} and \ref{property:non-ab-equal} and, by extension, $V$-comparability. This construction can be applied to trimmed treatment-specific means or stochastic interventions, like multiplicative shifts and exponential tilts, or any general function satisfying $f(x) < x$.
\begin{example} \label{example:tsm} \emph{\textbf{Treatment specific mean.}} $f(x) = 0$.
\end{example}

\begin{example} \label{example:multiplicative}
    \emph{\textbf{Multiplicative shift.}}  $f(x) = \delta x$ for $\delta \in [0,1)$. 
\end{example}

\begin{example} \label{example:tilt}
    \emph{\textbf{Exponential tilt.}} $f(x) = \frac{\delta x}{\delta x + 1 - x}$ for $\delta \in [0,1)$.
\end{example}

As with smooth trimming in the main paper, we can replace the non-smooth indicator $\one(V \in C_V)$ by a smooth approximation.
\begin{example} \emph{\textbf{Smooth general shift targeting treatment $a$.}} \label{example:smooth-general}
\begin{align}
    q_a(A=b \mid V) &= S(V \in C_V) \rho_a(A=b \mid V) + \left\{ 1 -S(V \in C_V) \right\} \pi_b(V) \label{eq:q-def}
\end{align}
where $f: [0,1) \to [0,1)$ satisfies $f(x) < x$ and $\rho_a$ and $S$ are defined in \eqref{eq:def-rho} in this section and \eqref{eq:smooth-trim} in the main paper.
\end{example}

We conclude this section with two remarks.
\begin{remark}
    Typically, stochastic interventions such as multiplicative shifts and exponential tilts are employed to address positivity violations \citep{kennedy2019nonparametric, wen2023intervention}. Here, focusing on the trimmed set of subjects ensures positivity across that set. Nevertheless, stochastic interventions may still be preferable to trimmed treatment-specific means, for two reasons. First, under near violations of the positivity assumption in \eqref{eq:positivity-condition} in the main paper, the nonparametric efficiency bound for estimating the trimmed treatment-specific means may increase dramatically, similar to how the bound increases for standard treatment-specific means under near violations of weak positivity. Stochastic interventions can ameliorate this issue. Second, stochastic interventions may correspond to a more relevant parameter for policy purposes. The trimmed treatment-specific means consider the scenario where every subject in the trimmed set receives a treatment, which might not be practically feasible. By contrast, a stochastic intervention can represent milder shifts in the probability of receiving treatment, which could correspond to a more feasible intervention in practice.
\end{remark}

\begin{remark}
    Two additional complications addressed by these examples are: (1) the treatment is unordered, and (2) each parameter targets a specific treatment value. Typical dynamic stochastic interventions rely on the ordering of the treatment variable, considering only upward or downward shifts in treatment receipt, rather than targeting specific values. For instance, the definition of exponential tilts for continuous treatments implicitly relies on the ordering of the real line. Moreover, exponential tilts are not tailored to target a specific treatment value; rather, they  shift the treatment distribution either towards the maximum or minimum treatment. Since we focus on unordered categorical treatments and aim to target each treatment level separately, we must construct a suitable intervention. This is achieved by defining the intervention for all treatments $b \neq a$, while implicitly defining the intervention at treatment $a$.
\end{remark}

\section{General identification and estimation} \label{id-est:extra}

In this section, we develop general  methods for estimating sets of parameters $\left\{ \bbE \left( Y^{Q_a} \right) \right\}_{a=1}^{d}$ satisfying $V$-comparability and which are identifiable under \eqref{eq:positivity-condition} in the main paper. We assume general interventions as in example~\ref{example:smooth-general}. We begin with identification.
\begin{proposition} \label{prop:id-general} \textbf{(Identification)}
    Let $\big\{ \psi_a \big\}_{a=1}^{d} = \big\{ \bbE(Y^{Q_a}) \big\}_{a=1}^{d}$ denote a set of causal parameters defined by dynamic stochastic interventions as in example~\ref{example:smooth-general}, which vary with covariates $V \subseteq X$.  Suppose consistency and exchangeability hold, and the positivity assumption in \eqref{eq:positivity-condition} in the main paper holds for $V$. Then, for all $a \in \{1, \dots, d\}$,
    $$
    \psi_a = \bbE \left[ \sum_{b=1}^{d} \bbE \{ \mu_b(X) \mid V \} q_a(A =b \mid V) \right].
    $$
\end{proposition}

The next result provides the EIF for $\bbE(Y^{Q_a})$ with generic intervention function $f(\cdot)$ and smooth approximation function $s(\cdot)$.

\begin{theorem} \label{thm:eif-general} \textbf{(Efficient influence function)} 
    Let $\psi_a$ denote the identified parameter in Proposition~\ref{prop:id-general}, where $q_a$ is defined in example~\eqref{example:smooth-general} and $V = X$. Suppose Assumption~\ref{asmp:intermediate-positivity} holds and $f(\cdot)$ and $s(\cdot)$ are twice differentiable with non-zero bounded second derivatives. Then, the un-centered EIF of $\psi_a$ is
    \begin{align} 
        \varphi_a (Z) = \sum_{b=1}^{d} \Bigg( &\mu_b(Z) q_a(A=b \mid X) \nonumber \\
        &+ \left[ \frac{\one(A=b)}{\pi_b(X)}\left\{ Y - \mu_b(X) \right\} \right] q_a(A=b \mid X) + \mu_b(X) \varphi_{q_a}(Z; b) \Bigg) \label{eq:eif-general}
    \end{align}
    where 
    \begin{align}
        \varphi_{q_a}(Z; b) &= \varphi_S(Z) \{ \rho_a(A=b \mid X) - \pi_b(X) \} + S(X \in C_X) \varphi_{\rho_a}(Z; b) \nonumber  \\
        &\hspace{0.5in}+ \Big\{ 1 -S(X \in C_X) \Big\} \Big\{ \one(A=b) - \pi_b(X) \Big\}, \nonumber \\
        \varphi_{\rho_a}(Z; b) &=  \one(b \neq a) f^\prime \{ \pi_b(X) \}  \left\{ \one (A=b) - \pi_b(X) \right\} \nonumber \\
        &\hspace{1.2in} - \one(b=a)  \bigg[ \sum_{b \neq a} f^\prime \{ \pi_b(X) \}  \left\{ \one (A=b) - \pi_b(X) \right\} \bigg]  \text{, and } \label{eq:if-rho_a-general} \\
        \varphi_{S}(Z) &= \sum_{b=1}^{d} \bigg( s^\prime\{ \pi_b(X) \} \left\{ \one(A=b) - \pi_b(X) \right\} \bigg) \prod_{c \neq b}^{d} s\{ \pi_c(X) \}, \nonumber
    \end{align}
    and $\rho_a$ and $S(X \in C_X)$ are defined in \eqref{eq:trimmed-set}, in the main paper, and \eqref{eq:def-rho}, respectively.
\end{theorem}

The un-centered EIF in \eqref{eq:eif-general} is more complex than typical EIFs for standard estimands.  However, at a high level, it takes the usual form of a plug-in plus the sum of weighted residuals, as $\varphi_{q_a}(Z; b)$, $\varphi_{\rho_a}(Z; b)$, and $\varphi_S(Z)$ each consist of sums of weighted residuals.  It is possible to construct a one-step estimator for $\psi_a$ with the sample average of the un-centered EIF. Adding the weighted residuals to the plug-in debiases the plug-in estimate. The EIF could also be used to construct other efficient estimators, such as a targeted maximum likelihood estimator \citep{van2006targeted}. The one-step estimator we consider is also referred to as a double machine learning estimator \citep{chernozhukov2018double}. We focus on the one-step estimator because it has the same asymptotic guarantees as other estimators and is arguably simpler to construct.  The next result shows the one-step estimator's desirable asymptotic properties: its bias is doubly robust-style, in the sense that it is upper bounded by the sum of the products of errors from the nuisance function estimators.

\begin{theorem} \label{thm:dr-est-general} 
    \textbf{(Second order bias)} Let $\psi_a$ denote the identified parameter in Proposition~\ref{prop:id}, where $q_a$ is defined in example~\ref{example:smooth-general} and $V = X$. Suppose access to nuisance function estimates $\widehat \pi_b(X)$ and $\widehat \mu_b(X)$ for $b \in \{1,\dots, d\}$ independent from the observed sample, and construct a one-step estimator for $\psi_a$ as $\widehat \psi_a = \bbP_n \left\{ \widehat{\varphi}_{a}(Z) \right\}$, where  $\varphi_{a}(Z)$ is defined in \eqref{eq:eif-general} in thie section. Suppose further that
    \begin{enumerate}
        \item $f(\cdot)$ and $s(\cdot)$ are twice differentiable with non-zero bounded second derivatives and
        \item  $\left| \widehat \mu_b(X) \right|$ is uniformly bounded for all $b \in \{1, \dots, d \}$. 
    \end{enumerate}
    Then, 
    \begin{align}
        &\left| \bbE \left( \widehat \psi_a - \psi_a \right) \right| \lesssim \sum_{b=1}^{d} \left|  \bbE \left[ \left\{ \widehat \pi_b(X) - \pi_b(X) \right\} \left\{ \widehat \mu_b(X) - \mu_b(X) \right\}  \frac{\widehat q_a(A=b \mid X)}{\widehat \pi_b(X)} \right] \right| \nonumber \\\
        &+ \sum_{b=1}^{d} \left\lVert  \widehat \mu_b - \mu_b \right\rVert \Bigg[ \sum_{c=1}^d \left\lVert s^\prime(\pi_c) ( \widehat \pi_c - \pi_c ) \right\rVert  + \one(b \neq a) \lVert f^\prime(\pi_b) ( \widehat \pi_b - \pi_b ) \rVert + \one(b = a) \bigg\{ \sum_{b \neq a} \lVert f^\prime(\pi_b) ( \widehat \pi_b - \pi_b )  \rVert \bigg\} \Bigg] \nonumber \\
        &+ \Bigg\{ \sum_{b=1}^{d} \left\lVert \widehat \pi_b - \pi_b \right\rVert +  \sum_{b \neq a} \left\lVert f^\prime(\pi_b)  ( \widehat \pi_b - \pi_b ) \right\rVert \Bigg\} \left\{ \sum_{c=1}^{d}  \left\lVert s^\prime(\pi_c)  ( \widehat \pi_c - \pi_c ) \right\rVert \right\} \nonumber \\
        &+ \sum_{b \neq a}^{d} \left\lVert f^{\prime \prime} ( \pi_b )^{1/2} ( \widehat \pi_b - \pi_b) \right\rVert^2 \nonumber \\
        &+ d \Bigg\{ \sum_{b=1}^d \left\lVert s^{\prime \prime} ( \pi_b )^{1/2} ( \widehat \pi_b - \pi_b) \right\rVert^2 + \sum_{b=1}^{d} \sum_{c < b} \left\lVert s^\prime (\pi_b) (\widehat \pi_b - \pi_b ) \right\rVert \left\lVert s^\prime (\pi_c) (\widehat \pi_c - \pi_c ) \right\rVert  \Bigg\} \label{eq:dr-est-error-1} 
    \end{align}
\end{theorem}

This result shows that the one-step estimator has a second order bias consisting of sums of products of errors in estimating the propensity score and outcome regression. We first describe the conditions required for this result. The result assumes access to independent nuisance estimates. While this assumption is not strictly necessary for analyzing the bias of the estimator, it is essential for establishing its limiting distribution in Corollary~\ref{cor:normal}, as it allows us to avoid imposing Donsker or other complexity conditions \citep{van1996weak, chen2022debiased}. Moreover, this assumption is mild and can be ensured through cross-fitting, which typically involves randomly splitting the data into $k$ folds (five and ten are common), training the nuisance function estimators on $k-1$ folds, and evaluating the one-step estimator on the $k^{th}$ held-out fold \citep{robins2008higher, zheng2010asymptotic}. Full-sample efficiency can be retained by rotating the folds. The remaining two conditions are mild boundedness conditions; bounded non-zero derivatives for $f(\cdot)$ and $s(\cdot)$ can be enforced through the choice of intervention and smooth approximation function for the smoothed trimming indicator.
    
\smallskip

Next, we describe the bias term itself, in \eqref{eq:dr-est-error-1}. It consists of five pieces.  The first summand arises from estimating $\psi_a$ if the interventional propensity scores $q_a(A=b \mid X)$ were known, and is the canonical product of errors in estimating the nuisance functions. It could be upper bounded by a product of root-mean-squared-errors using H\"{o}lder's inequality and the Cauchy-Schwartz inequality, supposing $\left| \frac{\widehat q_a(A=b \mid X)}{\widehat \pi_b(X)} \right|$ were uniformly bounded for all $b \in \{1, \dots, d\}$.  We leave it as it is to highlight how different interventions could affect the error bound through $\widehat q_a(A=b \mid X)$. The second summand is the product of residuals in estimating $q_a(A=b \mid X)$ and $\mu_b(X)$, while the third summand is the product of residuals in estimating the two pieces of $q_a(A=b \mid X)$: $\rho_a(A=b \mid X)$ and $S(X \in C_X)$. The fourth summand arises from estimating $\rho_a(A=b \mid X)$ with a doubly robust-style estimator and the final summand arises from estimating $S(X \in C_X)$ with a doubly robust-style estimator. The next results examines how the bias can simplify under certain conditions.

\begin{corollary} \label{cor:dr-est-error-general}
    Under the conditions of Theorem~\ref{thm:dr-est}, suppose the trimmed set $C_X$ were known.  Then, $q_a(A=b \mid X) = \one(X \in C_X) \rho_a(A=b \mid X) + \one(X \notin C_X) \pi_b(X)$ and 
    \begin{align}
        &\left| \bbE \left( \widehat \psi_a - \psi_a \right) \right| \lesssim \sum_{b=1}^{d} \left|  \bbE \left[ \left\{ \widehat \pi_b(X) - \pi_b(X) \right\} \left\{ \widehat \mu_b(X) -  \mu_b(X) \right\}  \frac{\widehat q_a(A=b \mid X)}{\widehat \pi_b(X)} \right] \right| \nonumber \\
        &+ \sum_{b=1}^{d} \left\lVert  \widehat \mu_b - \mu_b \right\rVert \Bigg[ \one(b \neq a) \lVert f^\prime(\pi_b) ( \widehat \pi_b - \pi_b ) \rVert + \one(b = a) \bigg\{ \sum_{b \neq a} \lVert f^\prime(\pi_b) ( \widehat \pi_b - \pi_b )  \rVert \bigg\} \Bigg] \nonumber \\
        &+ \sum_{b \neq a}^{d} \left\lVert f^{\prime \prime} ( \pi_b )^{1/2} ( \widehat \pi_b - \pi_b) \right\rVert^2 \label{eq:dr-est-error-2} 
    \end{align}
    Meanwhile, if $f(\pi)$ were known without knowledge of the propensity scores (e.g., for trimmed treatment-specific means, $f(x) = 0$) but $S(X \in C_X)$ were unknown, then
    \begin{align}
        &\left| \bbE \left( \widehat \psi_a - \psi_a \right) \right| \lesssim \sum_{b=1}^{d} \left|  \bbE \left[ \left\{ \widehat \pi_b(X) - \pi_b(X) \right\} \left\{ \widehat \mu_b(X) - \mu_b(X) \right\}  \frac{\widehat q_a(A=b \mid X)}{\widehat \pi_b(X)} \right] \right| \nonumber \\
        &+ \sum_{b=1}^{d} \left( \left\lVert  \widehat \mu_b - \mu_b \right\rVert + \left\lVert \widehat \pi_b - \pi_b \right\rVert \right) \left\{ \sum_{c=1}^d \left\lVert s^\prime(\pi_c) ( \widehat \pi_c - \pi_c ) \right\rVert  \right\} \nonumber \\
        &+ d \Bigg\{ \sum_{b=1}^d \left\lVert s^{\prime \prime} ( \pi_b )^{1/2} ( \widehat \pi_b - \pi_b) \right\rVert^2 + \sum_{b=1}^{d} \sum_{c < b} \left\lVert s^\prime (\pi_b) (\widehat \pi_b - \pi_b ) \right\rVert \left\lVert s^\prime (\pi_c) (\widehat \pi_c - \pi_c ) \right\rVert  \Bigg\} \label{eq:dr-est-error-3} 
    \end{align}
    Finally, if both the trimmed set and $f(\pi)$ were known, then the estimand simplifies to \( \psi_a = \bbE \{ \mu_a(X) \one(X \in C_X) \} + \bbE\{ Y \one(X \notin C_X) \} \). Then, the doubly robust-style estimator $\widehat \psi_a := \bbP_n \left( \left[ \frac{\one(A=a)}{\widehat \pi_a(X)} \{ Y - \widehat \mu_a(X) \} + \widehat \mu_a(X) \right] \one(X \in C_X) + Y \one (X \notin C_X) \right)$ satisfies
    \begin{align}
        \left| \bbE \left( \widehat \psi_a - \psi_a \right) \right| &\lesssim \Big| \bbE \left[ \{ \widehat \pi_a(X) - \pi_a(X) \} \{ \widehat \mu_a(X) - \mu_a(X) \} \one \left( X \in C_X \right) \right] \Big|. \label{eq:dr-est-error-4}
    \end{align}
\end{corollary}

Corollary~\ref{cor:dr-est-error-general} demonstrates how the bias simplifies when the trimmed set $C_X$ is known and when the intervention function $f(\cdot)$ is known. The first result, in \eqref{eq:dr-est-error-2}, shows that the strong dependence on estimating the propensity scores disappears when the trimmed set is known. However, there is still some direct dependence on estimating the propensity scores, in order to construct the interventional propensity scores in $\rho_a(A=b \mid X)$. The result in \eqref{eq:dr-est-error-2} resembles typical results for dynamic stochastic interventions without trimming, such as for estimating IPSIs \citep{kennedy2019nonparametric}. Meanwhile, the second result in \eqref{eq:dr-est-error-3} establishes an upper bound on the bias when the trimmed set is unknown but the interventional propensity scores are known. This is the same result as Theorem~\ref{thm:dr-est}. Finally, \eqref{eq:dr-est-error-4} shows how everything simplifies if both the trimmed set and interventional propensity scores were known. Indeed, we return to the canonical doubly robust error (over the trimmed set) \citep{bang2005doubly}.

\color{black}
\section{Comparability and positivity proofs} \label{app:comparability-proofs}

\subsection*{Proof of Theorem~\ref{thm:positivity}}

\begin{proof}
    Lemma~\ref{lem:dynamic} establishes that criterion~\ref{crit:comp} is equivalent to properties~\ref{property:increase-a-conditional} and \ref{property:non-ab-equal}. Meanwhile, Lemma~\ref{lem:positivity} establishes the positivity assumption required to satisfy properties~\ref{property:increase-a-conditional}-\ref{property:weak-positivity}. Notice also that the positivity requirement in Lemma~\ref{lem:positivity} implies $\bbP \{ 0 < \pi_a(V) < 1 \} > 0$ for all $a \in \{1, \dots, d\}$; i.e., it implies Proposition~\ref{prop:positivity} (so there is no separate positivity requirement from Proposition~\ref{prop:positivity}).  
\end{proof}

\subsection*{Proof of Lemma~\ref{lem:dynamic}}

\begin{proof}
    $(\implies)$  Notice that 
    \begin{align}
        \bbE(Y^{Q_a}) - \bbE(Y^{Q_b}) &=  \bbE \left[ \sum_{c \in \mathcal{A}}  \bbE (Y^c \mid V) \{ q_a(c \mid V) - q_b(c \mid V) \} \right] \nonumber \\
        &\hspace{-0.5in}= \bbE \Big[ \bbE(Y^a \mid V) \{ q_a(a \mid V) - q_b(a \mid V) \} + \bbE(Y^b \mid V) \{ q_a(b \mid V) - q_b(b \mid V) \} \Big] \nonumber \\
        &\hspace{-0.5in}= \bbE \Big[ \big\{ \bbE(Y^a \mid V) - \bbE(Y^b \mid V) \big\} \big\{ q_a(a \mid V) - q_b(a \mid V) \big\} \Big] \label{eq:decomp}
    \end{align}   
    where the first line follows by iterated expectations on $V$ and the definition of the interventions and the second by property~\ref{property:non-ab-equal}. The third line, \eqref{eq:decomp}, follows again by property~\ref{property:non-ab-equal}. Because $q_a(c \mid V) = q_b(c \mid V)$ almost surely for all $c \notin \{ a, b\}$, this implies $q_a(a \mid V) + q_a(b \mid V) = 1 - \sum_{c \notin \{a, b\}} q_a(c \mid V) = 1 - \sum_{c \notin \{a, b\}} q_b(c \mid V) = q_b(a \mid V) + q_b(b \mid V)$ almost surely. Then, also notice that
    \begin{align*}
        &q_b(b \mid V) + q_b(a \mid V) = q_a(b \mid V) + q_a(a \mid V) \\
        \implies &q_b(b \mid V) - q_a(b \mid V) = - \{ q_b(a \mid V) - q_a(a \mid V) \}.
    \end{align*} 
    Next, recall that property~\ref{property:increase-a-conditional} guarantees $\bbP\{ q_a(A = a \mid V) > \bbP(A = a \mid V) \} > 0$ while $\bbP\{ q_b(A = a \mid V) \leq \bbP(A=a \mid V) \} = 1$. Therefore, $\bbP \{ q_a(A = a \mid V) > q_b(A = a \mid V) \} > 0$.  Moreover, note that also by property~\ref{property:increase-a-conditional} applied to $q_a$, $\bbP \{ q_a(A =b \mid V) \leq \bbP(A=b \mid V) \} = 1 \text{ for all } b \neq a$ which implies $\bbP\{ q_a(A=a \mid V) \geq \bbP(A=a \mid V) \} = 1$.  This, combined with $\bbP \{ q_b(A=a \mid V) \leq \bbP(A=a \mid V) \} = 1$, which follows by property~\ref{property:increase-a-conditional} applied to $q_b$, implies $\bbP \{ q_a(A = a \mid V) \geq q_b(A = a \mid V) \} = 1$. In summary,
    \begin{itemize}
        \item $\bbP \{ q_a(A = a \mid V) > q_b(A = a \mid V) \} > 0$ and
        \item $\bbP \{ q_a(A = a \mid V) \geq q_b(A = a \mid V) \} = 1$.
    \end{itemize}
    Hence, the sign of $\bbE(Y^{Q_a}) - \bbE(Y^{Q_b})$ or its equality to zero is inherited directly from \eqref{eq:cond_lower}-\eqref{eq:cond_upper} and criterion~\ref{crit:comp} holds.

    \bigskip

    \noindent $(\impliedby)$ The ``only if'' direction follows by contrapositive --- if $A$ and $B$ are events and $A^c$ and $B^c$ are their complements, then $(A^c \implies B^c) \implies (B \implies A)$.  If property~\ref{property:non-ab-equal} does not hold then there exists $c \notin \{a, b\}$ such that
    $$
    \bbP \{ q_a(A = c \mid V) = q_b(A = c \mid V) \} < 1.
    $$
    Hence, revisiting the line before \eqref{eq:decomp}, we have
    \begin{align*}
        \bbE(Y^{Q_a}) - \bbE(Y^{Q_b}) &= \bbE \Big[ \bbE(Y^a \mid V) \{ q_a(a \mid V) - q_b(a \mid V) \} + \bbE(Y^b \mid V) \{ q_a(b \mid V) - q_b(b \mid V) \} \Big] \\
        &+ \bbE \Big[ \bbE(Y^c \mid V) \big\{ q_a(c \mid V) - q_b(c \mid V) \Big].
    \end{align*}
    Then, criterion~\ref{crit:comp} no longer holds because it cannot rule out diabolical cases of distributions $\bbE(Y^c \mid V)$ that change the sign of $\bbE(Y^{Q_a}) - \bbE(Y^{Q_b})$. 
    
    \bigskip

    Meanwhile if property~\ref{property:non-ab-equal} holds but property~\ref{property:increase-a-conditional} does not, then \eqref{eq:decomp} still holds so that $\bbE(Y^{Q_a}) - \bbE(Y^{Q_b}) = \bbE \Big[ \big\{ \bbE(Y^a \mid V) - \bbE(Y^b \mid V) \big\} \big\{ q_a(b \mid V) - q_b(b \mid V) \big\} \Big]$, but one cannot rule out the case where $q_a(b \mid V) < q_b(b \mid V)$. Hence, criterion~\ref{crit:comp} does not hold.  
\end{proof}

\subsection*{Proof of Proposition~\ref{prop:positivity}}

\begin{proof}
    We prove the first statement by contrapositive. If $\bbP(A = a) = 1$ then $\bbP \{ \pi_a(V) = 1 \} = 1$. Therefore, $\bbP \{ q_a(A = a \mid V) > \pi_a(V) \} = 0$ and property~\ref{property:increase-a-conditional} cannot hold. Hence, if property~\ref{property:increase-a-conditional} holds, then it must be the case that $\bbP(A = 1) < 1$.
   
   \bigskip

    We prove the next statement using the following boolean logic: let $A$, $B$, and $C$ be three events and $A^c, B^c, C^c$ denote their complements. If $A \cap C^c \implies B^c$ and $B \cap C^c \implies A^c$, then $A \cap B \implies C$.

   \bigskip
   
   Suppose property~\ref{property:increase-a-conditional} holds and $\bbP \{ \bbP(A = a \mid X) = 0 \} = 1$.  By property~\ref{property:increase-a-conditional}, $\bbP \{ q_a(A = a \mid X) > 0 \} > 0$, which means property~\ref{property:weak-positivity} does not hold. Meanwhile, suppose property~\ref{property:weak-positivity} holds and $\bbP\{ \bbP(A = a \mid X) = 0 \} = 1$, then it must be the case that $\bbP \{ q_a(A = a \mid X) = 0 \} = 1$, so property~\ref{property:increase-a-conditional} does not hold.  Hence, if both properties hold, it must be the case that $\bbP \{ \bbP(A = a \mid X) > 0 \} > 0$.  Therefore, $\bbP(A=a) > 0$.
\end{proof}

\subsection*{Proof of Lemma~\ref{lem:positivity}}

\begin{proof}
    First, we establish some preliminary algebra. Let $Q_a, Q_b$ denote two arbitrary interventions and suppose properties~\ref{property:increase-a-conditional}-\ref{property:weak-positivity} hold. If $\bbP(A = a \mid X) = 0$, then
    $$
    q_a(A = a \mid V) = 0 = q_b(A = b \mid V)
    $$
    by property~\ref{property:weak-positivity}. Next, by property~\ref{property:non-ab-equal}, which guarantees $q_b(A = b \mid V) + q_b(A = a \mid V) = q_a(A = b \mid V) + q_a(A = a \mid V)$,
    $$
    q_b(A = b \mid V) = q_a(A = b \mid V).
    $$
    By property~\ref{property:increase-a-conditional}, which asserts $q_b(A = b \mid V) \geq \pi_b(V)$ and $q_a(A = b \mid V) \leq \pi_b(V)$,  
    $$
    q_b(A = b \mid V) = q_a(A = b \mid V) = \pi_b(V).
    $$
    Hence, when $\pi_a(V) = 0$, there is no intervention at $A =b$ or $A =a$. By properties~\ref{property:increase-a-conditional} and \ref{property:non-ab-equal}, there is also no intervention at $A = c$ for $c \notin \{a, b\}$.  In other words,
    $$
    \pi_a(V) = 0 \implies q_a(A = c \mid V) = q_b(A = c \mid V) = \bbP(A = c \mid V) \text{ for all } c \in \{1, \dots, d\}.
    $$
    By the same argument,
    $$
    \pi_b(V) = 0 \implies q_a(A = c \mid V) = q_b(A = c \mid V) = \bbP(A = c \mid V) \text{ for all } l \in \{1, \dots, d\}.
    $$

    \bigskip

    \noindent After this preliminary algebra, we prove the result via contradiction. Notice that 
    $$
    \bbE \left( \prod_{c \in \{a, b\}} \one \Big[ \bbP \left\{ \pi_c(X) > 0 \mid V \right\} = 1 \Big]\right) = 0
    $$
    is the complement of \eqref{eq:positivity-condition} for two interventions. If $\bbE \left( \prod_{c \in \{a, b\}} \one \Big[ \bbP \left\{ \pi_c(X) > 0 \mid V \right\} = 1 \Big]\right) = 0$, then there does not exist a set $C \in \mathcal{V}$ with $\bbP(V \in C) > 0$ such that $\bbP(A =a \mid v, w) > 0$ for all $v \in C$ and $w \in \mathcal{X} \setminus \mathcal{V}$.  Therefore, by the argument above, $\bbP \{ \bbP(A_c \mid V) = q_a(A = c \mid V) \} = 1$ for all $c \in \{1, \dots, d\}$. In other words, the interventions all leave the propensity scores unchanged, and therefore property~\ref{property:increase-a-conditional} is violated.

    \bigskip

    Hence, we have reached a contradiction --- if properties~\ref{property:increase-a-conditional}-\ref{property:weak-positivity} hold and 
    $$
    \bbE \left( \prod_{c \in \{a, b\}} \one \Big[ \bbP \left\{ \pi_c(X) > 0 \mid V \right\} = 1 \Big]\right) = 0,
    $$
    then property~\ref{property:increase-a-conditional} cannot hold. It follows that 
    $$
    \bbE \left( \prod_{c \in \{a, b\}} \one \Big[ \bbP \left\{ \pi_c(X) > 0 \mid V \right\} = 1 \Big]\right) > 0
    $$
    is necessary for properties~\ref{property:increase-a-conditional}-\ref{property:weak-positivity} to hold.

    \bigskip
        
    \noindent The result in Lemma~\ref{lem:positivity} follows by applying this proof to all pairs $(a,b) \in \{1, \dots, d\} \times \{ 1, \dots, d\}$.
\end{proof}

\section{Identification, efficient influence functions, and doubly robust-style estimators} \label{app:eif-estimator}

Proposition~\ref{prop:id-general} establishes the identification of the parameter. This follows immediately by consistency, exchangeability, $q$-weak positivity, and the standard g-formula.  Therefore, we derive the efficient influence function (EIF) for $\psi_a = \bbE\left(Y^{Q_a} \right) = \bbE \left[ \sum_{b \in \{1, \dots, d\}} \bbE \left\{ \mu_b(X) \mid V \right\} q_a(A =b \mid V) \right]$ where $q_a(A=b \mid V)$ is defined in \eqref{eq:q-def}. Initially, we demonstrate that the candidate EIFs for $\bbE \{ \mu(X) \mid V \}$ and $q_a(A=b \mid V)$ correspond to second-order remainder terms. Then, we combine these results to derive the EIF of $\psi_a$.   

\bigskip

Throughout what follows, we omit $b$ subscript notation unless it is necessary for clarification, so that $\mu_b \equiv \mu$ and $\pi_b \equiv \pi$. Moreover, we let $\overline \bbP$ denote another distribution in the space of distributions, $\mathcal{P}$. And, we denote nuisance functions from $\overline \bbP$ with ``overlines''; e.g., $\overline \mu_b(X)$ is the outcome regression function in the distribution $\overline \bbP$.

\begin{lemma} \label{lem:reg-eif}
    Let
    \begin{equation} \label{eq:if-mu}
        \varphi_{\mu}(Z; V, b) = \left[ \frac{\one(A=b)}{\pi_b(X)}\left\{ Y - \mu_b(X) \right\} + \mu_b(X) - \bbE \{ \mu_b(X) \mid V \} \right].
    \end{equation}
    Then,
    $$
    \bbE \left[ \overline \varphi_{\mu}(Z; V, b) + \overline \bbE \{ \overline \mu_b(X) \mid V\} - \bbE \{ \mu_b(X) \mid V \} \mid V \right] = \bbE \left[ \left\{ \frac{\pi_b(X) - \overline \pi_b(X)}{\overline \pi_b(X)} \right\} \left\{ \mu_b(X) - \overline \mu_b(X) \right\} \mid V  \right]
    $$
\end{lemma}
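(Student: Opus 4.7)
The plan is to verify the identity by direct substitution and the tower property, peeling off each summand in $\overline{\varphi}_\mu$ in turn. Substituting the definition of $\overline{\varphi}_\mu$ into the left-hand side, the two terms $\overline{\bbE}\{\overline{\mu}_b(X) \mid V\}$ cancel immediately, leaving
\[
\bbE\!\left[\frac{\one(A=b)}{\overline\pi_b(X)}\{Y - \overline\mu_b(X)\} + \overline\mu_b(X) - \bbE\{\mu_b(X) \mid V\} \;\Big|\; V\right].
\]
My plan is then to apply iterated expectations, conditioning on $X$ inside the conditioning on $V$, so that the weighted residual term collapses using consistency and exchangeability.

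First, I would compute the innermost conditional expectation $\bbE\{\one(A=b)(Y-\overline\mu_b(X)) \mid X\}$. Since $\bbE\{\one(A=b) Y \mid X\} = \pi_b(X)\mu_b(X)$ and $\bbE\{\one(A=b) \mid X\} = \pi_b(X)$, this equals $\pi_b(X)\{\mu_b(X) - \overline\mu_b(X)\}$. Dividing by $\overline\pi_b(X)$ and combining with the remaining terms gives
\[
\bbE\!\left[\frac{\pi_b(X)}{\overline\pi_b(X)}\{\mu_b(X) - \overline\mu_b(X)\} + \overline\mu_b(X) \;\Big|\; V\right] - \bbE\{\mu_b(X) \mid V\}.
\]

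Second, since $\bbE\{\mu_b(X) \mid V\}$ is $V$-measurable, I would bring it inside the outer conditional expectation and combine it with $\overline\mu_b(X)$ to form the residual $\overline\mu_b(X) - \mu_b(X)$. Factoring $\mu_b(X) - \overline\mu_b(X)$ out yields
\[
\bbE\!\left[\left\{\frac{\pi_b(X)}{\overline\pi_b(X)} - 1\right\}\{\mu_b(X) - \overline\mu_b(X)\} \;\Big|\; V\right],
\]
which equals the right-hand side after combining the fractions.

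There is no real obstacle here: the argument is a one-line cancellation followed by one application of the tower property and a standard algebraic manipulation. The only delicate point is keeping track of which nuisance functions are evaluated at the true distribution versus $\overline{\bbP}$, and in particular verifying that $\bbE\{\one(A=b) Y \mid X\} = \pi_b(X)\mu_b(X)$ uses the true propensity and outcome regression (not the overlined ones), since the outer expectation is with respect to $\bbP$. Once this bookkeeping is correct, the identity follows immediately.
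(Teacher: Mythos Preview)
Your proposal is correct and follows essentially the same approach as the paper: substitute the definition of $\overline\varphi_\mu$, cancel the $\overline\bbE\{\overline\mu_b(X)\mid V\}$ terms, apply the tower property conditioning on $X$ to reduce the weighted residual to $\tfrac{\pi_b(X)}{\overline\pi_b(X)}\{\mu_b(X)-\overline\mu_b(X)\}$, then absorb $\bbE\{\mu_b(X)\mid V\}$ and factor. One small remark: the identity $\bbE\{\one(A=b)Y\mid X\}=\pi_b(X)\mu_b(X)$ is purely observational (since $\mu_b(X)=\bbE(Y\mid A=b,X)$), so you do not actually need to invoke consistency or exchangeability here.
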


\begin{proof}
    We have
    \begin{align*}
        &\bbE \left[ \overline \varphi_{\mu}(Z; V, b)  + \overline \bbE \{ \overline \mu(X) \mid V \} - \bbE \{ \mu(X) \mid V \} \mid V \right] \\
        &= \bbE \left[ \frac{\one(A=b)}{\overline \pi(X)}\left\{ Y - \overline \mu(X) \right\} + \overline \mu(X) \mid V \right] - \overline \bbE \{ \overline \mu(X) \mid V \}  + \overline \bbE \{ \overline \mu(X) \mid V \} - \bbE \{ \mu(X) \mid V \}   \\
        &= \bbE \left[ \left\{ \frac{\pi(X) - \overline \pi(X)}{\overline \pi(X)} \right\} \left\{ \mu(X) - \overline \mu(X)  \right\} \mid V \right] \\
        &+ \bbE \left\{ \mu(X) - \overline \mu(X) + \overline \mu(X) \mid V \right\} - \overline \bbE \{ \overline \mu(X) \mid V) + \overline \bbE \{ \overline \mu(X) \mid V) - \bbE \{ \mu(X) \mid V \}.
    \end{align*}
    where the second line follows by iterated expectations on $X, \one(A=b)$ and iterated expectations on $X$, and adding and subtracting $\mu(X) - \overline \mu(X)$ inside the expectation. The result follows because the final line in the above display cancels out and equals zero.
\end{proof}

\noindent In Appendix~\ref{app:intermediate-algebra}, we address the intermediate algebra required to derive Lemma~\ref{lem:Q-eif}.

\begin{lemma} \label{lem:Q-eif}
    Let 
    \begin{align}
        \varphi_{q_a}(Z; V, b) &= \left\{ 1 -S(V \in C_V) \right\} \varphi_{\pi}(Z;V, b) - \varphi_S(Z; V) \pi_b(V) \nonumber \\
        &+ \varphi_S(Z; V) \rho_a(A=b \mid V) + S(V \in C_V) \varphi_{\rho_a}(Z; V, b) \label{eq:if-q} 
    \end{align}
    where $S(V \in C_V)$, $\varphi_{\pi}$, $\rho_a(A=b \mid V)$, $\varphi_{\rho_a}$, and $\varphi_S$ are defined in \eqref{eq:smooth-trim} (in the main paper), \eqref{eq:if-pi}, \eqref{eq:def-rho}, \eqref{eq:if-rho_a} (in Appendix~\ref{app:comp-extras}), and \eqref{eq:if-S}, respectively.  Suppose $s(\cdot)$ and $f(\cdot)$ are twice differentiable functions with bounded non-zero second derivatives. Then, omitting $X$ arguments,
    \begin{align*}
        &\bbE \left\{ \overline \varphi_{q_a}(Z; V, b) \mid V \right\} + \overline q_a(A=b \mid V) - q_a(A=b \mid V) \\
        &= \left\{ \pi_b(V) - \overline \pi_b(V) \right\} \left\{S(V \in C_V) - \overline S(V \in C_V) \right\} \\
        &+ \left\{ \overline \rho_a(b \mid V) - \overline \pi_b(V) \right\} \Bigg\{ \sum_{a=1}^{d} \prod_{b \neq a}^{d} \overline \bbE \{ s (\overline \pi_b) \mid  V \} \Big( \bbE \Big[ \{ s^\prime(\overline \pi_a) - s^\prime (\pi_a) \} (\pi_a - \overline \pi_a) \\
        &\hspace{3in}+ \left\{ s^{\prime \prime}(\pi_a) + o(1)\right\} (\overline \pi_a - \pi_a)^2  \mid V \Big] \Big)  \\
        &+  \Bigg[ \bbE \{ s(\pi_a) \mid V \} - \overline \bbE \big\{ s( \overline \pi_a ) \mid V \big\} \Bigg] \Bigg[ \prod_{b < a} \overline \bbE \{ s( \overline \pi_b) \mid V \} - \prod_{b < a} \bbE \{ s(\pi_b) \mid V \} \Bigg] \Bigg[ \prod_{b > a} \overline \bbE \{ s (\overline \pi_b) \mid V \} \Bigg] \Bigg\}  \\
        &+S(V \in C_V) \Bigg\{ \one(b \neq a) \Big[ \big\{ f^{\prime \prime}\{ \pi_b(V)\} + o(1) \big\} \big\{ \overline \pi_b(V) - \pi_b(V) \big\}^2 \Big] \\
        &\hspace{0.85in}- \one(b = a) \bigg( \sum_{b \neq a} \big[ f^{\prime \prime}\{ \pi_b(V) \} + o(1) \big] \big\{ \overline \pi_b(V) - \pi_b(V) \big\}^2 \bigg)   \Bigg\}
    \end{align*}
    where $q_a(A=b \mid V)$ is defined in \eqref{eq:q-def} (in Appendix~\ref{app:comp-extras}).
\end{lemma}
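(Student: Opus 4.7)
The plan is to exploit the additive-multiplicative structure of $q_a(A = b \mid V) = S(V \in C_V)\,\rho_a(A = b \mid V) + \{1 - S(V \in C_V)\}\,\pi_b(V)$. Since the candidate influence function $\varphi_{q_a}$ is assembled linearly from $\varphi_\pi$, $\varphi_{\rho_a}$, and $\varphi_S$ via the product rule, the whole calculation reduces to checking that when one adds $\overline q_a - q_a$ to $\bbE\{\overline\varphi_{q_a} \mid V\}$, the first-order pieces cancel and only bilinear and quadratic remainders survive, in the exact form appearing on the right-hand side of the statement.

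First, I would decompose $\overline q_a - q_a$ by two uses of the algebraic identity $\overline A\,\overline B - AB = A(\overline B - B) + B(\overline A - A) + (\overline A - A)(\overline B - B)$, applied both to $\overline S\,\overline\rho_a - S\,\rho_a$ and to $(1 - \overline S)\overline\pi - (1 - S)\pi$. This produces three leading pieces, $(1 - S)(\overline\pi - \pi)$, $S(\overline\rho_a - \rho_a)$, and $(\rho_a - \pi)(\overline S - S)$, together with bilinear cross terms. Then, by linearity of the IF candidate, $\bbE\{\overline\varphi_{q_a} \mid V\} = (1 - \overline S)\,\bbE\{\overline\varphi_\pi \mid V\} + (\overline\rho_a - \overline\pi)\,\bbE\{\overline\varphi_S \mid V\} + \overline S\,\bbE\{\overline\varphi_{\rho_a} \mid V\}$, and I would handle each conditional expectation separately: $\varphi_\pi$ contributes an exact residual correction since $\pi_b(V) = \bbE\{\one(A = b) \mid V\}$; $\varphi_{\rho_a}$, whose definition is built from $f\{\pi_b(V)\}$, produces the final $S$-weighted block via a second-order Taylor expansion of $f$ around $\pi_b(V)$ with remainder $[f''\{\pi_b(V)\} + o(1)]\{\overline\pi_b(V) - \pi_b(V)\}^2$, separately treating the $b \neq a$ and $b = a$ cases of \eqref{eq:def-rho}.

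The main obstacle is handling $S(V \in C_V) = \prod_{b=1}^{d} \bbE[s\{\pi_b(X)\} \mid V]$, which is a product of $d$ conditional regressions of a nonlinear functional of the propensity score. I would expand $\overline S - S$ by telescoping the product rule across the $d$ factors $\overline\bbE\{s(\overline\pi_b) \mid V\}$ and $\bbE\{s(\pi_b) \mid V\}$, which directly reproduces the ``one factor changed at a time, times the product of the remaining factors'' structure visible in the middle block of the statement. On each changed factor I would combine a von Mises-type expansion in the spirit of Lemma~\ref{lem:reg-eif} with a second-order Taylor expansion $s(\overline\pi_b) = s(\pi_b) + s'(\pi_b)(\overline\pi_b - \pi_b) + \tfrac{1}{2}s''(\pi_b)(\overline\pi_b - \pi_b)^2 + o\{(\overline\pi_b - \pi_b)^2\}$. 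After the first-order pieces cancel against the corresponding $\varphi_S$ contribution, the residue collapses into the $\{s'(\overline\pi_a) - s'(\pi_a)\}(\pi_a - \overline\pi_a)$ and $\{s''(\pi_a) + o(1)\}(\overline\pi_a - \pi_a)^2$ summands and into the bilinear cross term with the $\prod_{b<a}$ and $\prod_{b>a}$ factors, which is precisely the middle block.

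Finally, I would collect terms. The $(1 - S)(\overline\pi - \pi)$ piece combined with $(1 - \overline S)\,\bbE\{\overline\varphi_\pi \mid V\}$ yields the first-line bilinear remainder $\{\pi_b(V) - \overline\pi_b(V)\}\{S(V \in C_V) - \overline S(V \in C_V)\}$. The $(\rho_a - \pi)(\overline S - S)$ and $(\overline\rho_a - \overline\pi)\,\bbE\{\overline\varphi_S \mid V\}$ pieces, together with the bilinear cross term from the product-rule identity, fuse into the middle block via the telescoping/Taylor argument above (the cross term is absorbed because $\overline\rho_a - \rho_a$ is itself first-order in $\overline\pi - \pi$, so any residual contribution is already of the product-of-differences type). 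The $S(\overline\rho_a - \rho_a)$ and $\overline S\,\bbE\{\overline\varphi_{\rho_a} \mid V\}$ pieces yield the final $f''$-block. The rest is careful book-keeping of signs and factors, which I expect to be routine once the three decompositions above are in place.
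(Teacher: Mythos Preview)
Your proposal is correct and takes essentially the same approach as the paper: both reduce the calculation to three pieces corresponding to $\varphi_\pi$, $\varphi_S$, and $\varphi_{\rho_a}$, use the exact cancellation $\bbE\{\overline\varphi_\pi\mid V\}=\pi_b-\overline\pi_b$, apply a second-order Taylor expansion of $f$ at $\pi_b(V)$ for the $\rho_a$ block, and handle the product $S(V\in C_V)=\prod_b\bbE[s\{\pi_b(X)\}\mid V]$ by telescoping across factors combined with a Taylor expansion of $s$. The only cosmetic difference is that you set up the three pieces via the explicit identity $\overline A\,\overline B-AB=A(\overline B-B)+B(\overline A-A)+(\overline A-A)(\overline B-B)$, whereas the paper arrives at the analogous grouping $(\pi_b-\overline\pi_b)(S-\overline S)+\{\overline\pi_b-\overline\rho_a\}\{S-\overline S-\bbE(\overline\varphi_S\mid V)\}+S\{\overline\rho_a+\bbE(\overline\varphi_{\rho_a}\mid V)-\rho_a\}$ by successive ``add zero'' manipulations; these are equivalent rearrangements of the same algebra, and your remark that the residual cross terms are absorbed because they are already of product-of-differences type is exactly how the paper's coefficients $S$ and $\overline\rho_a-\overline\pi_b$ (rather than $\overline S$ and $\rho_a-\pi_b$) emerge.
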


\begin{proof}
    Omitting $V$ and $Z$ arguments from the second equality onwards below, 
    \begin{align*}
        &\bbE \{ \overline \varphi_{q_a}(Z; V, b) \mid V  \}  + \overline q_a(A=b \mid V) - q_a(A=b \mid V) \\
        &= \bbE \bigg[  \left\{ 1 - \overline S(V \in C_V) \right\} \overline \varphi_{\pi}(Z; V, b) - \overline \varphi_S(Z; V)  \overline \pi_b(V) \\
        &\hspace{0.25in}+ \overline \varphi_S(Z; V) \overline \rho_a(A=b \mid V) + \overline S(V \in C_V) \overline \varphi_{\rho_a}(Z; V, b) \mid V \bigg] \\
        &\hspace{0.25in}+ \left\{ 1 - \overline S(V \in C_V) \right\} \overline \pi_b(V) + \overline S(V \in C_V) \overline \rho_a(A=b \mid V) \\
        &\hspace{0.25in}- \left\{ 1 - S(V \in C_V) \right\} \pi_b(V) - S(V \in C_V) \rho_a(A=b \mid V) \\
        &= \bbE \Big\{  (1 - \overline S) \big( \overline \varphi_{\pi}(b) + \overline \pi_b - \pi_b \big) + \pi_b (S - \overline S) - \overline \varphi_S \overline \pi_b 
        + \overline \varphi_S \overline \rho_a(b) + S \overline \varphi_{\rho_a}(b) + \overline S \overline \rho_a (b) - S \rho_a (b) \mid V \Big\} \\
        &= \bbE \Big\{  0 + (\pi_b - \overline \pi_b) (S - \overline S) + \overline \pi_b \Big( S - \overline S - \overline \varphi_S \Big) + \overline \varphi_S \overline \rho_a(b) + S \overline \varphi_{\rho_a}(b) + \overline S \overline \rho_a (b) - S \rho_a (b) \mid V \Big\} \\
        &=\bbE \Big\{ (\pi_b - \overline \pi_b) (S - \overline S) + \overline \pi_b \Big( S - \overline S - \overline \varphi_S \Big) + \overline \rho_a(b) (\overline \varphi_S + \overline S - S) + S \Big( \overline \rho_a(b) + \overline \varphi_{\rho_a}(b) - \rho_a(b) \Big) \mid V  \Big\} \\
        &=(\pi_b - \overline \pi_b) (S - \overline S) + \Big\{ \overline \pi_b - \overline \rho_a(b) \Big\} \Big\{ S - \overline S - \bbE \Big( \overline \varphi_S \mid V \Big) \Big\} + S \Big\{ \overline \rho_a(b) + \bbE \big( \overline \varphi_{\rho_a}(b) \mid V \big) - \rho_a(b) \Big\}
    \end{align*}
    where each line follows by adding zero, the third equality follows because
    $$
    \bbE \big(  \overline \varphi_{\pi}(b) + \overline \pi_b - \pi_b \mid V \big) =  \bbE \Big\{ (1 - \overline S) \big( \overline \varphi_{\pi}(b) + \overline \pi_b - \pi_b \big) \mid V \Big\} = 0,
    $$
    and the final line follows because $\pi_b, S, \rho_a(b)$ are all constant conditional on $V$. 

    \bigskip 
    
    By the same argument as in the proof of Proposition~\ref{prop:S-eif},
    \begin{align*}
        &\big\{ \overline \rho_a(b) - \overline \pi_b \big\}  \Big\{ \bbE  \big( \overline \varphi_S \mid V \big)+ \overline S - S \big\}  \\
        &= \big\{ \overline \rho_a(b) - \overline \pi_b \big\} \Bigg\{ \sum_{a=1}^{d} \prod_{b \neq a}^{d} \overline \bbE \{ s (\overline \pi_b) \mid  V \} \left( \bbE \Big[ \{ s^\prime(\overline \pi_a) - s^\prime (\pi_a) \} (\pi_a - \overline \pi_a) + \left\{ s^{\prime \prime}(\pi_a) + o(1)\right\} (\overline \pi_a - \pi_a)^2  \mid V \Big] \right) \\
        &+ \Bigg[ \bbE \{ s(\pi_a) \mid v\} - \overline \bbE \big\{ s( \overline \pi_a ) \mid V \big\} \Bigg] \Bigg[ \prod_{b < a} \overline \bbE \{ s( \overline \pi_b) \mid v \} - \prod_{b < a} \bbE \{ s(\pi_b) \mid v\} \Bigg] \Bigg[ \prod_{b > a} \overline \bbE \{ s (\overline \pi_b) \mid V \} \Bigg] \Bigg\}.
    \end{align*}
    Finally, by the same argument as in the proof of Proposition~\ref{prop:f-eif}. 
    \begin{align*}
        \bbE &\Big[ S \Big\{ \overline \rho_a(b) + \overline \varphi_{\rho_a}(b) - \rho_a(b) \Big\} \mid V  \Big] \\
        &=  S \bbE \Bigg[ \one(b \neq a) \big\{ \overline \varphi_{f}(b) + f(\overline \pi_b) - f(\pi_b) \big\} - \one(b \neq a) \bigg\{ \sum_{b = a} \overline \varphi_{f}(b) + f (\overline \pi_b) - f(\pi_b) \bigg\} \mid V \Bigg]  \\
        &=  S \Bigg( \one(b \neq a) \Big[ \big\{ f^{\prime \prime}(\pi_b) + o(1) \big\} (\overline \pi_b - \pi_b)^2 \Big] - \one(b = a) \bigg\{ \sum_{b \neq a} \big\{ f^{\prime \prime}(\pi_b) + o(1) \big\} (\overline \pi_b - \pi_b)^2 \bigg\}   \Bigg) 
    \end{align*}
\end{proof}

Next, we address the EIF of $\psi_a$. The proof below shows that a relevant error term is second order. The fact that this implies the relevant function is the EIF follows by \citet[Lemma 2]{kennedy2023semiparametric}.  Generally, the proofs take the following form. Let $\varphi_\theta(Z)$ denote the centered EIF of a generic parameter $\theta: \bbP \to \bbR$. Then: 
\begin{enumerate}
    \item Show that $\bbE \{ \varphi_\theta(Z) \} = 0$.
    \item Show that $\bbE \big\{ \overline \varphi_\theta(Z) \big\} + \overline \theta - \theta$ is second order in terms of the relevant nuisance functions, where $\overline \varphi_\theta, \overline \theta$ are formed from nuisance estimates from the data generating process $\overline{\bbP}$.
\end{enumerate}

\begin{theorem} \label{thm:eif-overall} \textbf{(Efficient influence function)} Under the setup of Theorem~\ref{thm:eif}, the centered efficient influence function of $\psi_a$ is
    \begin{equation} \label{eq:if-psi-a}
        \phi_a (Z) = \sum_{b=1}^{d} \varphi_{\mu}(Z; V, b)  q_a(A=b \mid V) + \bbE \{ \mu_b(X) \mid V \} \varphi_{q_a}(Z; V, b) 
    \end{equation}
    where $q_a$, $\varphi_{\mu}$, and $\varphi_{q_a}$ are defined in \eqref{eq:q-def} (in Appendix~\ref{app:comp-extras}), \eqref{eq:if-mu}, and \eqref{eq:if-q}, respectively.
\end{theorem}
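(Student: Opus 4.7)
The plan is to verify $\phi_a$ as the centered EIF of $\psi_a$ via the standard von Mises characterization of \citet[Lemma 2]{kennedy2023semiparametric}, which is invoked earlier in the excerpt: it suffices to show (i) $\bbE\{\phi_a(Z)\} = 0$ under $\bbP$, and (ii) the mixed bias
$$
R(\overline{\bbP}, \bbP) := \bbE\{\overline{\phi}_a(Z)\} + \overline{\psi}_a - \psi_a
$$
is a second-order remainder in the errors $\overline{\pi}_b - \pi_b$, $\overline{\mu}_b - \mu_b$, and $\overline{\bbE}\{\overline{\mu}_b \mid V\} - \bbE\{\mu_b \mid V\}$. Together, these two facts force $\phi_a$ to agree with the centered EIF up to negligible terms and hence to be the EIF.

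Claim (i) is a routine iterated-expectation check. Conditioning on $X$ inside $\varphi_{\mu}(Z;V,b)$ gives mean zero, and conditioning on $X$ inside each building block ($\varphi_\pi$, $\varphi_{\rho_a}$, $\varphi_S$) that appears in $\varphi_{q_a}(Z;V,b)$ also gives mean zero. Since each summand of $\phi_a$ multiplies one of these conditionally mean-zero functions by a $V$-measurable weight ($q_a(A=b\mid V)$ or $\bbE\{\mu_b \mid V\}$), iterated expectations immediately yield $\bbE\{\phi_a(Z)\} = 0$.

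For claim (ii), I would apply the product decomposition $AB - \overline{A}\overline{B} = (A-\overline{A})\overline{B} + \overline{A}(B - \overline{B}) + (A-\overline{A})(B-\overline{B})$ term by term with $A_b := \bbE\{\mu_b \mid V\}$ and $B_b := q_a(A=b \mid V)$, sum over $b$, and take expectations to obtain an expression for $\overline{\psi}_a - \psi_a$. Then Lemma~\ref{lem:reg-eif} provides the expansion of $\bbE\{\overline{\varphi}_\mu(Z;V,b) \mid V\} + \overline{A}_b - A_b$ as a second-order product $(\pi_b - \overline{\pi}_b)/\overline{\pi}_b \cdot (\mu_b - \overline{\mu}_b)$, and Lemma~\ref{lem:Q-eif} provides the analogous expansion of $\bbE\{\overline{\varphi}_{q_a}(Z;V,b) \mid V\} + \overline{B}_b - B_b$ as a sum of second-order products in $\pi_b - \overline{\pi}_b$ and $S(V \in C_V) - \overline{S}(V \in C_V)$. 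Substituting these two expansions into $R(\overline{\bbP}, \bbP)$, the first-order pieces built into $\overline{\phi}_a$ cancel precisely the $(A_b - \overline{A}_b)\overline{B}_b$ and $\overline{A}_b(B_b - \overline{B}_b)$ contributions, leaving only the cross product $(A_b - \overline{A}_b)(B_b - \overline{B}_b)$ together with the quadratic remainders produced by the two lemmas. All surviving terms are products of at least two nuisance errors, so $R(\overline{\bbP}, \bbP)$ is second order, completing the verification.

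The main obstacle is algebraic bookkeeping rather than anything conceptual: $\varphi_{q_a}$ is itself a sum of four pieces built from three nested auxiliary influence functions ($\varphi_\pi$, $\varphi_{\rho_a}$, $\varphi_S$), so matching them against the right first-order differences requires repeated application of the product-rule identity used inside the proof of Lemma~\ref{lem:Q-eif} and careful tracking of the $S(V \in C_V)$ versus $1 - S(V \in C_V)$ splits in~\eqref{eq:q-def}. Once this decomposition is organized and the pairs of matching first-order terms are lined up, the quadratic character of $R(\overline{\bbP}, \bbP)$---and hence the identification of $\phi_a$ as the centered EIF---follows directly from Lemmas~\ref{lem:reg-eif} and~\ref{lem:Q-eif}.
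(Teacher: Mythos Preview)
Your proposal is correct and mirrors the paper's own proof: the paper also verifies $\bbE\{\phi_a(Z)\}=0$ by iterated expectations, then adds and subtracts $\bbE\{\mu_b(X)\mid V\}\,\overline q_a(A=b\mid V)$ and $\overline\bbE\{\overline\mu_b(X)\mid V\}\{\overline q_a-q_a\}$ to obtain exactly your three-term product decomposition, invoking Lemmas~\ref{lem:reg-eif} and~\ref{lem:Q-eif} for the first two pieces and leaving the cross term $(\overline q_a-q_a)(\bbE\{\mu_b\mid V\}-\overline\bbE\{\overline\mu_b\mid V\})$ as the residual second-order product. The only detail you omit is the paper's remark that Assumption~\ref{asmp:intermediate-positivity} is needed so that $\phi_a(Z)$ has finite variance, which is required to invoke \citet[Lemma~2]{kennedy2023semiparametric}.
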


\begin{proof}
    First, notice that $\bbE \{ \phi_a(Z) \} = 0$ by iterated expectations.  Next,
    \begin{align*}
        \bbE \Bigg[ \overline \phi_a(Z) + &\sum_{b=1}^{d} \overline \bbE \{ \overline \mu_b(X) \mid V\} \overline q_a(A=b \mid V) - \bbE \{ \mu_b(X) \mid V \} q_a(A=b \mid V) \Bigg] \\
        = \bbE \Bigg[ &\sum_{b=1}^{d} \overline \varphi_{\mu}(Z; V, b) \overline q_a(A=b \mid V) +  \overline \bbE \{ \overline\mu_b(X) \mid V \} \overline \varphi_{q_a} (Z; V, b)  \\
        &+ \overline \bbE \{ \overline \mu_b(X) \mid V\} \overline q_a(A=b \mid V) - \bbE \{ \mu_b(X) \mid V \} q_a(A=b \mid V) \Bigg] \\
        = \sum_{b=1}^{d} &\bbE \bigg( \Big[ \overline \varphi_{\mu}(Z; V, b) + \overline \bbE \{ \overline \mu_b(X) \mid V \} - \bbE \{ \mu_b(X) \mid V \} \Big] \overline q_a(A=b \mid V) \bigg) \\
        + \bbE \Bigg( &\Big[ \overline \varphi_{q_a}(Z; V, b) + \overline q_a(A=b \mid V) - q_a(A=b \mid V) \Big] \overline \bbE \{ \overline \mu_b(X) \mid V \} \Bigg) \\
        + \bbE \bigg( &\Big\{ \overline q_a(A=b \mid V) - q_a(A=b \mid V) \Big\} \Big[ \bbE \{ \mu_b(X) \mid V \} - \overline \bbE \{ \overline \mu_b(X) \mid V \} \Big] \bigg) 
    \end{align*}
    where the first line follows by definition of $\phi_a$ and the second follows by adding and subtracting $\bbE \Big[ \bbE \{ \mu_b(X) \mid V \} \overline q_a(A=b \mid V) \Big]$ and $\bbE \Big[ \overline \bbE \{ \overline \mu_b(X) \mid V \} \Big\{ \overline q_a(A=b \mid V) - q_a(A=b \mid V) \Big\} \Big]$. The final term in the above display is second order by the same analysis as in the proofs of Lemmas~\ref{lem:reg-eif} and \ref{lem:Q-eif}.  We analyze it in further detail in the proof of Theorem~\ref{thm:dr-est-V}, which also demonstrates that it is second order.  The intermediate positivity assumption in Assumption~\ref{asmp:intermediate-positivity} is necessary so that $\phi_a(Z)$ has finite variance and therefore Lemma 2 from \citet{kennedy2023semiparametric} can be applied.
\end{proof}

\begin{theorem} \label{thm:dr-est-V} 
    \textbf{(Doubly robust estimator)} For $V \subseteq X$, suppose access to nuisance function estimates $\widehat \pi_a(V), \widehat \pi_a(X), f \{ \widehat \pi_a(V) \}$,
    $f^\prime\{ \widehat \pi_a(V)\}, s\{ \widehat \pi_a(X)\}, s^\prime \{ \widehat \pi_a(X) \}, \widehat \bbE\{ s\{ \widehat \pi_a(X)\} \mid V \}, \widehat \mu_a(X)$, and $\widehat \bbE \{ \widehat \mu_a(X) \mid V \}$ for $a=1$ to $d$ which are independent from the observed sample.  Then, construct an estimator for $\psi_a$ as $\widehat \psi_a = \bbP_n \{ \widehat \varphi_a(Z) \}$ where $\phi_a(Z)$ is the un-centered efficient influence function of $\psi_a$ given in Theorem~\ref{thm:eif}. Suppose further that 
    \begin{enumerate}
        \item $\frac{\widehat q_a(A=b \mid X)}{\widehat \pi_b(X)}$ and $\left| \widehat \mu_b(X) \right|$ are uniformly bounded for all $b \in \{1, \dots, d \}$ and 
        \item $s(\cdot)$ and $f(\cdot)$ are twice differentiable with non-zero bounded second derivatives.
    \end{enumerate}
    Then, the estimator is doubly robust, in the sense that is has a bias that is bounded by a product of nuisance estimator errors:
    \begin{align*}
        &\left| \bbE \left( \widehat \psi_a - \psi_a \right) \right| \lesssim \sum_{b=1}^{d} \left| \bbE \Bigg(  \bbE \left[ \left\{ \frac{\pi_b(X) - \widehat \pi_b(X)}{\widehat \pi_b(X)} \right\} \left\{ \mu_b(X) - \widehat \mu_b(X) \right\} \mid V \right] \widehat q_a(A=b \mid V) \Bigg)\right|  \\
        &+ d \Bigg[ \sum_{b=1}^d \left\lVert s^{\prime \prime}\{ \pi_b(X) \}^{1/2} \{ \widehat \pi_b(X) - \pi_b(X) \} \right\rVert^2 + \left\lVert \bbE\{ \widehat \pi_b(X) - \pi_b(X) \mid V \} \right\rVert \left\lVert \bbE \{ s^\prime\{ \widehat \pi_b(X) \} - s^\prime \{ \pi_b(X) \} \mid V \} \right\rVert \Bigg]  \\
        &+ d \Bigg[ \sum_{b=1}^{d} \sum_{c < b} \left\lVert \widehat \bbE \big[ s\{ \widehat \pi_b(X) \} \mid V \big] - \bbE \big[ s\{ \pi_b(X) \} \mid V \big] \right\rVert \left\lVert \widehat \bbE \big[ s\{ \widehat \pi_c(X) \} \mid V \big] - \bbE \big[ s\{ \pi_c(X) \} \mid V \big] \right\rVert \Bigg] \\
        &+ \Bigg[ \sum_{b=1}^{d} \left\lVert \widehat \pi_b(X) - \pi_b(X) \right\rVert +  \sum_{b \neq a} \left\lVert f \{ \widehat \pi_b(X) \} - f \{ \pi_b(X) \} \right\rVert \Bigg] \left( \sum_{c=1}^{d}  \left\lVert \widehat \bbE \big[ s\{ \widehat \pi_c(X) \} \mid V \big] - \bbE \big[ s\{ \pi_c(X) \} \mid V \big] \right\rVert \right)  \\
        &+ \sum_{b \neq a}^{d} \left \lVert f^\prime \{ \widehat \pi_b(V) \} - f^\prime \{ \pi_b(V) \} \right\rVert \left\lVert \widehat \pi_b(V) - \pi_b(V) \right\rVert + \left\lVert f^{\prime \prime} \{ \pi_b(V) \}^{1/2} \{ \widehat \pi_b(V) - \pi_b(V) \} \right\rVert^2 \\
        &+ \sum_{b=1}^{d} \left\lVert \widehat \bbE \{ \widehat \mu_b(X) \mid V \} - \bbE \{ \mu_b(X) \mid V \} \right\rVert \left\lVert \widehat \pi_b(V) - \pi_b(V) \right\rVert \\
        &+ \sum_{b=1}^{d} \left\lVert \widehat \bbE \{ \widehat \mu_b(X) \mid V \} - \bbE \{ \mu_b(X) \mid V \} \right\rVert \Bigg( \sum_{c=1}^d \left\lVert \widehat \bbE \big[ s\{ \widehat \pi_c(X) \} \mid V \big] - \bbE \big[ s\{ \pi_c(X) \} \mid V \big] \right\rVert \Bigg) \\
        &+ \sum_{b=1}^{d} \left\lVert \widehat \bbE \{ \widehat \mu_b(X) \mid V \} - \bbE \{ \mu_b(X) \mid V \} \right\rVert \Bigg( \one(b \neq a) \lVert f\{ \widehat \pi_b(V) \} - f\{ \pi_b(V)\} \rVert \\
        &\hspace{3in} + \one(b = a) \Big[ \sum_{b \neq a} \lVert f \{ \widehat \pi_b(V) \} - f\{ \pi_b(V) \} \rVert \Big]  \Bigg).
    \end{align*}
\end{theorem}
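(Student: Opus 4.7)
The plan is to bound the bias $\bbE(\widehat \psi_a - \psi_a)$ via the same von Mises expansion that underlies Theorem~\ref{thm:eif-overall}. Because the nuisance estimates are constructed on an independent sample, conditioning on the training fold gives $\bbE(\widehat \psi_a - \psi_a) = \bbE\{\widehat \varphi_a(Z)\} - \psi_a$, so it suffices to analyze this population second-order remainder treating $\widehat \pi, \widehat \mu, \widehat{\bbE}(\cdot\mid V)$ as fixed functions lying in the nuisance model.

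My first step reuses the cross-term decomposition from the proof of Theorem~\ref{thm:eif-overall}. Adding and subtracting $\bbE[\bbE\{\mu_b\mid V\}\widehat q_a]$ and $\bbE[\widehat \bbE\{\widehat \mu_b\mid V\}(\widehat q_a - q_a)]$ yields
\begin{equation*}
\bbE(\widehat \psi_a - \psi_a) = \sum_{b=1}^{d} \bigl( T_b^{(1)} + T_b^{(2)} + T_b^{(3)} \bigr),
\end{equation*}
where $T_b^{(1)}$ is the conditional regression-residual term weighted by $\widehat q_a$, $T_b^{(2)}$ is the conditional intervention-residual term weighted by $\widehat \bbE\{\widehat \mu_b\mid V\}$, and $T_b^{(3)} = \bbE[\{\widehat q_a(A=b\mid V) - q_a(A=b\mid V)\}\{\bbE\{\mu_b\mid V\} - \widehat \bbE\{\widehat \mu_b\mid V\}\}]$. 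I would then dispatch $T_b^{(1)}$ by invoking Lemma~\ref{lem:reg-eif}: iterated expectations on $V$ converts it to $\bbE[\bbE\{(\pi_b - \widehat \pi_b)/\widehat \pi_b \cdot (\mu_b - \widehat \mu_b)\mid V\}\,\widehat q_a(A=b\mid V)]$, which is exactly the first line of the bound. For $T_b^{(3)}$, Cauchy--Schwarz gives $\lVert \widehat q_a - q_a\rVert \cdot \lVert \widehat \bbE\{\widehat \mu_b\mid V\} - \bbE\{\mu_b\mid V\}\rVert$; then the triangle inequality applied to \eqref{eq:q-def} and the mean-value bounds $|f(x)-f(y)|\lesssim |x-y|$ and $|s(x)-s(y)|\lesssim |x-y|$ (valid because $s,f$ have bounded derivatives) split $\lVert \widehat q_a - q_a\rVert$ into the propensity-score and smoothed-indicator pieces, producing the three rows of the stated bound that involve $\lVert \widehat \bbE\{\widehat \mu_b\mid V\} - \bbE\{\mu_b\mid V\}\rVert$.

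The remaining and principal work is to control $T_b^{(2)}$ via Lemma~\ref{lem:Q-eif}. That lemma decomposes the conditional remainder of $\widehat q_a$ into (i) a pure cross product $\{\pi_b - \widehat \pi_b\}\{S - \widehat S\}$, (ii) a factor $\{\widehat \rho_a(b) - \widehat \pi_b\}$ multiplied by the second-order remainder of the smooth indicator $S(V\in C_V)$, and (iii) a factor $\widehat S(V\in C_V)$ multiplied by the second-order remainder of $\rho_a$. Using boundedness of $|\widehat \mu_b|$ and $\widehat q_a/\widehat \pi_b$, together with Cauchy--Schwarz and second-order Taylor expansions of $s$ and $f$ with remainders controlled by $s''$ and $f''$, each of (i)--(iii) becomes an $L_2$ product or a squared $L_2$ norm weighted by $s''$ or $f''$. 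The product structure $\prod_{b=1}^{d} \bbE\{s(\pi_b)\mid V\}$ in $S$ is expanded using the telescoping identity $\prod_b a_b - \prod_b b_b = \sum_b (a_b - b_b)\prod_{c<b} b_c \prod_{c>b} a_c$, which is precisely what generates the double sum over $b$ and $c<b$ as well as the outer factor of $d$ in the bound.

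The hardest part will be this telescoping and Taylor-remainder bookkeeping for $S(V \in C_V)$, because it mixes within-treatment quadratic remainders controlled by $s''$ with cross-treatment products of linear remainders, and both families must be matched exactly against the corresponding lines of the stated inequality; the $\rho_a$ contribution requires an analogous but simpler Taylor expansion of $f$ producing the $f''$ and $f'$ terms. Everything else---iterated expectations, Cauchy--Schwarz, the triangle inequality, and the boundedness assumptions on $|\widehat \mu_b|$ and $\widehat q_a/\widehat \pi_b$---is routine given the structure already set up in Lemmas~\ref{lem:reg-eif} and \ref{lem:Q-eif}. Summing over $b=1,\dots,d$ and collecting like terms then assembles the announced bound.
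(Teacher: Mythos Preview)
Your proposal is correct and follows essentially the same route as the paper: the three-term decomposition $T_b^{(1)}+T_b^{(2)}+T_b^{(3)}$ from the proof of Theorem~\ref{thm:eif-overall}, dispatching $T_b^{(1)}$ via Lemma~\ref{lem:reg-eif}, $T_b^{(2)}$ via Lemma~\ref{lem:Q-eif} (with the telescoping product identity for $S$ and second-order Taylor remainders for $s$ and $f$), and $T_b^{(3)}$ via Cauchy--Schwarz plus the triangle inequality on $\widehat q_a - q_a$. The only cosmetic difference is that the paper groups the cross term $(\pi_b-\widehat\pi_b)(S-\widehat S)$ together with the $(\widehat\rho_a-\rho_a)(S-\widehat S)$ piece rather than isolating it as your item (i), but both groupings land on the same fourth line of the stated bound.
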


\noindent Before proving Theorem~\ref{thm:dr-est-V}, we state several corollaries, which imply Theorem~\ref{thm:dr-est} in the main text.

\begin{corollary} \textbf{(Simplifying with $V = X$)}
    Suppose the conditions of Theorem~\ref{thm:dr-est} are satisfied for $V = X$. Then, suppressing ubiquitous dependence on $X$, 
    \begin{align*}
        &\left| \bbE \left( \widehat \psi_a - \psi_a \right) \right| \lesssim \sum_{b=1}^{d} \left|  \bbE \left[ \left\{ \pi_b(X) - \widehat \pi_b(X) \right\} \left\{ \mu_b(X) - \widehat \mu_b(X) \right\}  \frac{\widehat q_a(A=b \mid X)}{\widehat \pi_b(X)} \right] \right| \\
        &+ d \Bigg\{ \sum_{b=1}^d \left\lVert s^{\prime \prime} ( \pi_b )^{1/2} ( \widehat \pi_b - \pi_b) \right\rVert^2 + \left\lVert \widehat \pi_b - \pi_b \right\rVert \left\lVert s^\prime ( \widehat \pi_b ) - s^\prime ( \pi_b ) \right\rVert + \sum_{b=1}^{d} \sum_{c < b} \left\lVert s ( \widehat \pi_b ) -  s ( \pi_b ) \right\rVert \left\lVert s ( \widehat \pi_c ) - s ( \pi_c ) \right\rVert \Bigg\} \\
        &+ \Bigg\{ \sum_{b=1}^{d} \left\lVert \widehat \pi_b - \pi_b \right\rVert +  \sum_{b \neq a} \left\lVert f ( \widehat \pi_b ) - f ( \pi_b ) \right\rVert \Bigg\} \left\{ \sum_{c=1}^{d}  \left\lVert s ( \widehat \pi_c ) - s ( \pi_c ) \right\rVert \right\}  \\
        &+ \sum_{b \neq a}^{d} \left \lVert f^\prime ( \widehat \pi_b ) - f^\prime ( \pi_b ) \right\rVert \left\lVert \widehat \pi_b - \pi_b \right\rVert + \left\lVert f^{\prime \prime} ( \pi_b )^{1/2} ( \widehat \pi_b - \pi_b) \right\rVert^2 \\
        &+ \sum_{b=1}^{d} \left\lVert  \widehat \mu_b - \mu_b \right\rVert \Bigg\{ \sum_{b=1}^d \left\lVert  s ( \widehat \pi_b ) - s ( \pi_b ) \right\rVert  + \one(b \neq a) \lVert f ( \widehat \pi_b ) - f( \pi_b ) \rVert + \one(b = a) \Big\{ \sum_{b \neq a} \lVert f ( \widehat \pi_b ) - f ( \pi_b )  \rVert \Big\} \Bigg\}.
    \end{align*}
    Moreover, simplifying terms via Taylor expansion (e.g, for $s^{\prime}(\widehat \pi_b) - s^\prime (\pi_b)$) yields
    \begin{align*}
        &\left| \bbE \left( \widehat \psi_a - \psi_a \right) \right| \lesssim \sum_{b=1}^{d} \left|  \bbE \left[ \left\{ \pi_b(X) - \widehat \pi_b(X) \right\} \left\{ \mu_b(X) - \widehat \mu_b(X) \right\}  \frac{\widehat q_a(A=b \mid X)}{\widehat \pi_b(X)} \right] \right| \\
        &+ d \Bigg\{ \sum_{b=1}^d \left\lVert s^{\prime \prime} ( \pi_b )^{1/2} ( \widehat \pi_b - \pi_b) \right\rVert^2 + \sum_{b=1}^{d} \sum_{c < b} \left\lVert s^\prime (\pi_b) (\widehat \pi_b - \pi_b ) \right\rVert \left\lVert s^\prime (\pi_c) (\widehat \pi_c - \pi_c ) \right\rVert  \Bigg\} \\
        &+ \Bigg\{ \sum_{b=1}^{d} \left\lVert \widehat \pi_b - \pi_b \right\rVert +  \sum_{b \neq a} \left\lVert f^\prime(\pi_b)  ( \widehat \pi_b - \pi_b ) \right\rVert \Bigg\} \left\{ \sum_{c=1}^{d}  \left\lVert s^\prime(\pi_c)  ( \widehat \pi_c - \pi_c ) \right\rVert \right\}  \\
        &+ \sum_{b \neq a}^{d} \left\lVert f^{\prime \prime} ( \pi_b )^{1/2} ( \widehat \pi_b - \pi_b) \right\rVert^2 \\
        &+ \sum_{b=1}^{d} \left\lVert  \widehat \mu_b - \mu_b \right\rVert \Bigg[ \sum_{c=1}^d \left\lVert s^\prime(\pi_c) ( \widehat \pi_c - \pi_c ) \right\rVert  + \one(b \neq a) \lVert f^\prime(\pi_b) ( \widehat \pi_b - \pi_b ) \rVert + \one(b = a) \bigg\{ \sum_{b \neq a} \lVert f^\prime(\pi_b) ( \widehat \pi_b - \pi_b )  \rVert \bigg\} \Bigg].
    \end{align*}
\end{corollary}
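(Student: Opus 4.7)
The plan is to obtain both displays by specializing Theorem~\ref{thm:dr-est-V} to the case $V = X$ and then applying first-order Taylor expansions (using the twice-differentiability and bounded second derivative conditions on $s$ and $f$). No new identity or estimator is needed; the work is bookkeeping, with each piece of the general bound collapsing under $V=X$.

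For the first display, I would walk term-by-term through the bound in Theorem~\ref{thm:dr-est-V} and substitute $V = X$. Under this substitution every conditional expectation given $V$ reduces to the integrand itself: $\bbE\{\mu_b(X)\mid V\} = \mu_b(X)$, $\bbE[s\{\pi_b(X)\}\mid V] = s\{\pi_b(X)\}$, and likewise for their hat versions. Consequently, $\widehat\bbE\{\widehat\mu_b(X)\mid V\} - \bbE\{\mu_b(X)\mid V\} = \widehat\mu_b(X) - \mu_b(X)$ and $\widehat\bbE[s\{\widehat\pi_b(X)\}\mid V] - \bbE[s\{\pi_b(X)\}\mid V] = s\{\widehat\pi_b(X)\} - s\{\pi_b(X)\}$, and the inner conditional expectation in the leading doubly-robust-style summand disappears so that term becomes the canonical pointwise product $|\bbE[\{\pi_b - \widehat\pi_b\}\{\mu_b - \widehat\mu_b\}\widehat q_a(A=b\mid X)/\widehat\pi_b(X)]|$. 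Also $\pi_b(V) = \pi_b(X)$ and $f'\{\widehat\pi_b(V)\} - f'\{\pi_b(V)\} = f'\{\widehat\pi_b(X)\} - f'\{\pi_b(X)\}$. Assembling these substitutions yields the first display verbatim.

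For the second display, I would pass from functions of $\widehat\pi$ to linearized analogues by a first-order Taylor expansion. Using the mean-value theorem and the assumed boundedness of the second derivatives $s''$ and $f''$, I would write, for each smooth $g \in \{s, f\}$,
\[
g(\widehat\pi_b) - g(\pi_b) = g'(\pi_b)(\widehat\pi_b - \pi_b) + \tfrac{1}{2}g''(\pi_b^*)(\widehat\pi_b - \pi_b)^2
\]
for some intermediate $\pi_b^*$, so that $\lVert g(\widehat\pi_b) - g(\pi_b)\rVert \lesssim \lVert g'(\pi_b)(\widehat\pi_b - \pi_b)\rVert + \lVert \widehat\pi_b - \pi_b\rVert^2$. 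Applied to $s$ and $f$ this replaces each occurrence of $\lVert s(\widehat\pi_c) - s(\pi_c)\rVert$ by $\lVert s'(\pi_c)(\widehat\pi_c - \pi_c)\rVert$ and each $\lVert f(\widehat\pi_b) - f(\pi_b)\rVert$ by $\lVert f'(\pi_b)(\widehat\pi_b - \pi_b)\rVert$, up to quadratic remainders that are absorbed into the existing $\lVert s''(\pi_b)^{1/2}(\widehat\pi_b - \pi_b)\rVert^2$ and $\lVert f''(\pi_b)^{1/2}(\widehat\pi_b - \pi_b)\rVert^2$ terms (this absorption uses that $s''$ and $f''$ are bounded, and, where needed to go the other direction, bounded away from zero). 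For the term $\lVert \widehat\pi_b - \pi_b\rVert\,\lVert s'(\widehat\pi_b) - s'(\pi_b)\rVert$ and its $f'$ analogue, another application of the mean value theorem bounds $|s'(\widehat\pi_b) - s'(\pi_b)| \lesssim |\widehat\pi_b - \pi_b|$, producing a $\lVert \widehat\pi_b - \pi_b\rVert^2$ quantity that is again absorbed into the $s''$ (respectively $f''$) quadratic term.

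The main obstacle is less any single hard inequality and more the careful bookkeeping: making sure that after Taylor expansion each higher-order remainder is dominated by exactly the quadratic term already appearing in \eqref{eq:dr-est-error-1}, so that nothing new has to be introduced and the constants can be absorbed into the $\lesssim$. Provided the second-derivative boundedness assumptions are used uniformly, this matching is mechanical, and the two displays in the corollary follow.
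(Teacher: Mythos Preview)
Your proposal is correct and matches the paper's approach: the corollary is stated immediately after Theorem~\ref{thm:dr-est-V} with no separate proof, and the statement itself signals that the first display is the $V=X$ specialization of that theorem while the second follows by Taylor expanding terms like $s'(\widehat\pi_b)-s'(\pi_b)$ and $s(\widehat\pi_b)-s(\pi_b)$. Your bookkeeping plan and your observation about absorbing quadratic remainders into the existing $s''$ and $f''$ terms are exactly what is intended.
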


\begin{corollary} \textbf{(Upper bounds on derivatives of $s$ and $f$)}
    Under the conditions of Theorem~\ref{thm:dr-est} and with $V = X$, suppose the first and second derivatives of $s(\cdot)$ and $f(\cdot)$ are bounded with $s^\prime \leq M_{s^\prime}, s^{\prime \prime} \leq M_{s^\prime \prime}, f^\prime \leq M_{f^\prime}$, and $f^{\prime \prime} \leq M_{f^{\prime \prime}}$.  Then,
    \begin{align*}
        &\left| \bbE \left( \widehat \psi_a - \psi_a \right) \right| \lesssim \sum_{b=1}^{d} \left|  \bbE \left[ \left\{ \pi_b(X) - \widehat \pi_b(X) \right\} \left\{ \mu_b(X) - \widehat \mu_b(X) \right\}  \frac{\widehat q_a(A=b \mid X)}{\widehat \pi_b(X)} \right] \right| \\
        &+ d \Bigg\{ \sum_{b=1}^d M_{s^{\prime \prime}} \left\lVert   \widehat \pi_b - \pi_b \right\rVert^2 + \sum_{b=1}^{d} \sum_{c < b} M_{s^\prime}^2 \left\lVert \widehat \pi_b - \pi_b \right\rVert \left\lVert \widehat \pi_c - \pi_c  \right\rVert  \Bigg\} \\
        &+ \Bigg\{ \sum_{b=1}^{d} \sum_{c=1}^{d} M_{s^\prime} \left\lVert \widehat \pi_b - \pi_b \right\rVert \left\lVert \widehat \pi_c - \pi_c \right\rVert  +  \sum_{b \neq a} \sum_{c=1}^{d} M_{f^\prime} M_{s^\prime} \left\lVert \widehat \pi_b - \pi_b \right\rVert \left\lVert \widehat \pi_c - \pi_c \right\rVert  \\
        &+ \sum_{b \neq a}^{d} M_{f^{\prime \prime}} \left\lVert  \widehat \pi_b - \pi_b \right\rVert^2 \\
        &+ \sum_{b=1}^{d} \left\lVert  \widehat \mu_b - \mu_b \right\rVert \Bigg\{ \sum_{b=1}^d M_{s^\prime} \left\lVert  \widehat \pi_b - \pi_b \right\rVert  + \one(b = a) M_{f^\prime} \lVert \widehat \pi_b - \pi_b \rVert + \one(b \neq a) \sum_{b \neq a} M_{f^\prime} \lVert \widehat \pi_b - \pi_b  \rVert \Bigg\}.
    \end{align*}
    Ignoring these constants, the upper bound on the bias simplifies to
    \begin{align*}
        \left| \bbE \left( \widehat \psi_a - \psi_a \right) \right| &\lesssim \sum_{b=1}^{d} \left|  \bbE \left[ \left\{ \pi_b(X) - \widehat \pi_b(X) \right\} \left\{ \mu_b(X) - \widehat \mu_b(X) \right\}  \frac{\widehat q_a(A=b \mid X)}{\widehat \pi_b(X)} \right] \right| \\
        &+ \sum_{b=1}^{d} \sum_{c \neq b} \lVert \widehat \mu_b - \mu_b \rVert  \lVert \widehat \pi_c - \pi_c \rVert + d \sum_{b=1}^{d} \sum_{c \leq b}  \lVert \widehat \pi_b - \pi_b \rVert \lVert \widehat \pi_c - \pi_c \rVert.  
    \end{align*}
\end{corollary}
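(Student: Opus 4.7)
The plan is to derive both inequalities directly from the bound established in the preceding corollary (the ``Simplifying with $V = X$'' corollary) by replacing each pointwise derivative factor inside a weighted $L_2$ norm by its uniform upper bound, and then consolidating like terms. No new analytic tools are required beyond Lipschitz-type bounds and straightforward rearrangement.

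For the first inequality (with explicit $M$ constants), I would use the elementary fact that for any measurable $h$,
\[
\bigl\lVert s'(\pi_b)\,h \bigr\rVert^2 = \bbE\{s'(\pi_b(X))^2\,h(X)^2\} \leq M_{s'}^2\,\lVert h\rVert^2,
\]
so that $\lVert s'(\pi_b)(\widehat\pi_b-\pi_b)\rVert \leq M_{s'}\lVert\widehat\pi_b-\pi_b\rVert$, and analogously $\lVert s''(\pi_b)^{1/2}(\widehat\pi_b-\pi_b)\rVert^2 \leq M_{s''}\lVert\widehat\pi_b-\pi_b\rVert^2$, with parallel bounds for the $f'$- and $f''$-weighted norms. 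Substituting these pointwise bounds into each term of the preceding corollary's expression yields the first displayed inequality directly, term-by-term, with the constants $M_{s'}, M_{s''}, M_{f'}, M_{f''}$ appearing as claimed.

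For the second (constant-free) inequality, I would absorb the $M$'s into the $\lesssim$ notation and regroup. The squared propensity norms $\lVert\widehat\pi_b-\pi_b\rVert^2$ coming from the $s''$ and $f''$ summands, together with the off-diagonal cross products $\lVert\widehat\pi_b-\pi_b\rVert\lVert\widehat\pi_c-\pi_c\rVert$ coming from the $s'$ double sum and from the $\pi$--$s'$ and $f'$--$s'$ cross-term sums, all consolidate into the single expression $d\sum_{b=1}^{d}\sum_{c\leq b}\lVert\widehat\pi_b-\pi_b\rVert\lVert\widehat\pi_c-\pi_c\rVert$; the outer factor of $d$ reflects the product structure of $S(X\in C_X)=\prod_{b}\bbE\{s(\pi_b)\mid V\}$, whose influence-function contribution brings in $d$ such factors. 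The mixed products $\lVert\widehat\mu_b-\mu_b\rVert\lVert\widehat\pi_c-\pi_c\rVert$ arising from both the $s'$-weighted sum and the $f'$-weighted indicator sum collapse into $\sum_b\sum_{c\neq b}\lVert\widehat\mu_b-\mu_b\rVert\lVert\widehat\pi_c-\pi_c\rVert$, while the irreducible first-order bias term $\sum_b \bigl|\bbE[(\pi_b-\widehat\pi_b)(\mu_b-\widehat\mu_b)\widehat q_a/\widehat\pi_b]\bigr|$ carries over unchanged.

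The main obstacle is combinatorial bookkeeping rather than any analytic difficulty: several contributions in the preceding corollary appear under different indexing conventions (indicator $\one(b\neq a)$ versus $\one(b=a)$; $\sum_{c<b}$ versus $\sum_{c=1}^{d}$; sums restricted to $b\neq a$), and one must verify that, after absorbing the constants into $\lesssim$, every such contribution is captured exactly once inside the two consolidated double sums in the final display. Once this accounting is done, both inequalities follow immediately.
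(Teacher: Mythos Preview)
Your proposal is correct and matches the paper's approach: the paper states this corollary without a separate proof, treating it as an immediate consequence of the preceding ``Simplifying with $V = X$'' corollary obtained by bounding each derivative-weighted norm via the uniform constants $M_{s'}, M_{s''}, M_{f'}, M_{f''}$ and then absorbing those constants into $\lesssim$ with straightforward regrouping. Your identification of the main work as combinatorial bookkeeping rather than analysis is exactly right.
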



\bigskip
\noindent \textbf{Proof of Theorem~\ref{thm:dr-est-V}}

\begin{proof}
    That this error is second order follows directly from the proofs of Lemmas~\ref{lem:reg-eif} and \ref{lem:Q-eif}, replacing the overline's by hat's.  However, here we will provide the full expression of the bias in detail.  First, by the argument in the proof of Theorem~\ref{thm:eif-overall},
    \begin{align}
        \bbE \left( \widehat \psi_a - \psi_a \right) = \sum_{b=1}^{d} &\bbE \bigg( \Big[ \widehat \varphi_{\mu}(Z; V, b) + \widehat \bbE \{ \widehat \mu_b(X) \mid V \} - \bbE \{ \mu_b(X) \mid V \} \Big] \widehat q_a(A=b \mid V) \bigg) \nonumber \\
        + \bbE \Bigg( &\Big[ \widehat \varphi_{q_a}(Z; V, b) + \widehat q_a(A=b \mid V) - q_a(A=b \mid V) \Big] \widehat \bbE \{ \widehat \mu_b(X) \mid V \} \Bigg) \nonumber \\
         &\hspace{-0.75in}+ \bbE \bigg( \Big\{ \widehat q_a(A=b \mid V) - q_a(A=b \mid V) \Big\} \Big[ \bbE \{ \mu_b(X) \mid V \} - \widehat \bbE \{ \widehat \mu_b(X) \mid V \} \Big] \bigg) \label{eq:original-display}
    \end{align}
    where $\varphi_\mu$ and $\varphi_{q_a}$ are defined in \eqref{eq:if-mu} and \eqref{eq:if-q}. We'll consider each line of~\eqref{eq:original-display} in order. First, 
    \begin{align*}
        &\bbE \bigg( \Big[ \widehat \varphi_{\mu}(Z; V, b) + \widehat \bbE \{ \widehat \mu_b(X) \mid V \} - \bbE \{ \mu_b(X) \mid V \} \Big] \widehat q_a(A=b \mid V) \bigg) \\
        &= \bbE \Bigg(  \left[ \frac{\one(A=b)}{\widehat \pi_b(X)}\left\{ Y - \widehat \mu_b(X) \right\} + \widehat  \mu_b(X) - \bbE \{ \mu_b(X) \mid V \} \right] \widehat q_a(A=b \mid V) \Bigg) \\
        &= \bbE \Bigg(  \bbE \left[ \left\{ \frac{\pi_b(X) - \widehat \pi_b(X)}{\widehat \pi_b(X)} \right\} \left\{ \mu_b(X) - \widehat \mu_b(X) \right\} \mid V \right] \widehat q_a(A=b \mid V) \Bigg) 
    \end{align*}
    where the first line follows by definition, and the second by iterated expectations on $\one(A=b), X$, iterated expectations on $X$, adding and subtracting $\left\{ \mu_b(X) - \widehat \mu_b(X) \right\} \widehat q_a(A=b \mid V)$, iterated expectations on $V$, and cancelling terms. 

    \bigskip

    Next, we consider the second term in \eqref{eq:original-display},
    $$
    \bbE \Bigg( \Big[ \widehat \varphi_{q_a}(Z; V, b) + \widehat q_a(A=b \mid V) - q_a(A=b \mid V) \Big] \widehat \bbE \{ \widehat \mu_b(X) \mid V \} \Bigg)
    $$
    where $\varphi_{q_a}(Z; V, b)$ is defined in \eqref{eq:if-q}. Continuing from the proof of Lemma~\ref{lem:Q-eif}, 
    \small
    \begin{align}
        &\bbE \Bigg( \Big[ \widehat \varphi_{q_a}(Z; V, b) + \widehat q_a(A=b \mid V) - q_a(A=b \mid V) \Big] \widehat \bbE \{ \widehat \mu_b(X) \mid V \} \Bigg) \nonumber \\    
        &= \bbE \Bigg( \widehat \bbE \{ \widehat \mu_b(X) \mid V\} \big\{ \widehat \rho_a(A=b \mid V) - \widehat \pi_b(V) \big\} \bbE \Big\{  \widehat \varphi_S(Z) + \widehat S(V \in C_V) - S(V \in C_V)  \mid V \Big\} \Bigg) \label{eq:decomp-1} \\
        &+ \bbE \Bigg( \widehat \bbE \{ \widehat \mu_b(X) \mid V\}  \big\{ S(V \in C_V) - \widehat S(V \in C_V) \big\} \bbE \Big[ \big\{ \widehat \rho_a(A=b \mid V) - \widehat \pi_b(V) \big\} \nonumber \\
        &\hspace{3in}- \big\{ \rho_a(A=b \mid V) - \pi_b(V) \big\} \mid V \Big] \Bigg) \label{eq:decomp-2} \\
        &+ \bbE \Bigg( \widehat \bbE \{ \widehat \mu_b(X) \mid V) \} \widehat S(V \in C_V) \bbE \Big\{ \widehat \varphi_{\rho_a}(Z; V, b) + \widehat \rho_a(A=b \mid V) - \rho_a(A=b \mid V) \mid V \Big\} \Bigg), \label{eq:decomp-3} 
    \end{align}
    \normalsize
    where $\rho_a$ and $\varphi_S$ are defined in \eqref{eq:def-rho} (in Appendix~\ref{app:comp-extras}) and \eqref{eq:if-S}. The above display follows by plugging in the definitions of $\varphi_{q_a}$, rearranging, and because $\bbE \Big[ \one(A=b) - \widehat \pi_b(V) - \big\{ \pi_b(V) - \widehat \pi_b(V) \big\} \mid V \Big] = 0$. We consider each line of the above display in order.

    \bigskip

    First, by the proof of Proposition~\ref{prop:S-eif} and omitting arguments, 
    \begin{align*}
        &\bbE \Big\{  \widehat \varphi_S(Z) + \widehat S(V \in C_V) - S(V \in C_V)  \mid V \Big\}  \\
        &= \sum_{b=1}^{d}  \left( \bbE \Big[ \{ s^\prime(\widehat \pi_b) - s^\prime (\pi_b) \} (\pi_b - \widehat \pi_b) + \left\{ s^{\prime \prime}(\pi_b) + o(1) \right\} (\widehat \pi_b - \pi_b)^2 \mid V \Big] \right)\prod_{c \neq b}^{d} \widehat \bbE \{ s (\widehat \pi_c) \mid  V \} \\
        &+ \sum_{b=2}^{d} \left[ \bbE \{ s(\pi_b) \mid V \} - \widehat \bbE \big\{ s( \widehat \pi_b ) \mid V \big\} \right] \left[ \prod_{c < b} \widehat \bbE \{ s( \widehat \pi_c) \mid V \} - \prod_{c < b} \bbE \{ s(\pi_b) \mid V \} \right] \left[ \prod_{c > b} \widehat \bbE \{ s (\widehat \pi_c) \mid V \} \right].
    \end{align*}
    Therefore, by the boundedness assumption on $\widehat \bbE \{ \widehat \mu_b(X) \mid V \}$, H\"{o}lder's inequality, and Cauchy-Schwarz, \eqref{eq:decomp-1} satisfies
    \begin{align*}
        &\Bigg| \bbE \Bigg( \widehat \bbE \{ \widehat \mu_b(X) \mid V\} \big\{ \widehat \rho_a(A=b \mid V) - \widehat \pi_b(V) \big\} \bbE \Big\{  \widehat \varphi_S(Z) + \widehat S(V \in C_V) - S(V \in C_V)  \mid V \Big\} \Bigg) \Bigg| \\
        &\lesssim \sum_{b=1}^d \bbE \left[ \Big| \bbE \left\{ s^{\prime \prime} ( \pi_b ) (\widehat \pi_b - \pi_b )^2 \mid V \right\} \Big| \right] + \left\lVert \bbE( \widehat \pi_b - \pi_b \mid V ) \right\rVert \left\lVert \bbE \{ s^\prime(\widehat \pi_b) - s^\prime(\pi_b) \mid V \} \right\rVert \\
        &\hspace{0.5in}+ \left\lVert \widehat \bbE \{ s (\widehat \pi_b) \mid V \} - \bbE \{ s(\pi_b) \mid V \} \right\rVert \sum_{c < b} \left\lVert \widehat \bbE \{ s (\widehat \pi_c) \mid V \} - \bbE \{ s(\pi_c) \mid V \} \right\rVert  \\
        &=  \sum_{b=1}^d \left\lVert s^{\prime \prime}(\pi_b)^{1/2} (\widehat \pi_b - \pi_b) \right\rVert^2  + \left\lVert \bbE( \widehat \pi_b - \pi_b \mid V ) \right\rVert \left\lVert \bbE \{ s^\prime(\widehat \pi_b) - s^\prime(\pi_b) \mid V \} \right\rVert  \\
        &+  \sum_{b=1}^{d} \sum_{c < b} \left\lVert \widehat \bbE \{ s (\widehat \pi_b) \mid V \} - \bbE \{ s(\pi_b) \mid V \} \right\rVert \left\lVert \widehat \bbE \{ s (\widehat \pi_c) \mid V \} - \bbE \{ s(\pi_c) \mid V \} \right\rVert. 
    \end{align*}    
    Next, again by the boundedness assumption on $\widehat \bbE \{ \widehat \mu_b(X) \mid V \}$, H\"{o}lder's inequality, and Cauchy-Schwarz, \eqref{eq:decomp-2} satisfies
    \small
    \begin{align*}
        &\Bigg| \bbE \Bigg( \widehat \bbE \{ \widehat \mu_b(X) \mid V\}  \big\{ S(V \in C_V) - \widehat S(V \in C_V) \big\} \bbE \Big[ \big\{ \widehat \rho_a(A=b \mid V) - \widehat \pi_b(V) \big\} - \big\{ \rho_a(A=b \mid V) - \pi_b(V) \big\} \mid V \Big] \Bigg) \Bigg| \\
        &\lesssim \left[  \sum_{c=1}^{d}  \left\lVert \widehat \bbE \{ s(\widehat \pi_c) \mid V \} - \bbE \{ s(\pi_c) \mid V \} \right\rVert \right] \Bigg( \lVert \widehat \pi_b(V) - \pi_b(V) \rVert + \one(b \neq a) \left\lVert f\{ \widehat \pi_b(V) \} - f\{ \pi_b(V) \} \right\rVert \\
        &\hspace{3in}+ \one(b = a) \sum_{b \neq a}^d \left\lVert f \{ \widehat \pi_b(V) \} - f \{ \pi_b(V) \} \right\rVert \Bigg) 
    \end{align*}
    \normalsize
    \noindent  Finally, for \eqref{eq:decomp-3}, we have
    \begin{align*}
        \bbE \Big\{ &\widehat \varphi_{\rho_a}(Z; V, b) + \widehat \rho_a(A=b \mid V) - \rho_a(A=b \mid V) \mid V \Big\} = \\
        &\one(b \neq a) \Bigg( \Big[ f^\prime \{ \widehat \pi_b(V) \} - f^\prime \{ \pi_b(V) \} \Big] \{ \pi_b(V) - \widehat \pi_b(V) \} + \Big[ f^{\prime \prime} \{ \pi_b(V) \} + o(1) \Big] \left\{ \widehat \pi_b(V) - \pi_b(V) \right\}^2 \Bigg) \\
        &+ \one(b = a) \Bigg( \sum_{b \neq a} \Big[ f^\prime \{ \widehat \pi_b(V) \} - f^\prime \{ \pi_b(V) \} \Big] \{ \pi_b(V) - \widehat \pi_b(V) \} + \Big[ f^{\prime \prime} \{ \pi_b(V) \} + o(1) \Big] \left\{ \widehat \pi_b(V) - \pi_b(V) \right\}^2 \Bigg) 
    \end{align*}
    by iterated expectations on $V$ and a second order Taylor expansion of $f\{ \widehat \pi_b(V) \} - f\{ \pi_b(V) \}$.  Then, by the boundedness assumption on $\widehat \bbE \{ \widehat \mu_b(X) \mid V \}$, H\"{o}lder's inequality, and Cauchy-Schwarz, \eqref{eq:decomp-3} satisfies
    \begin{align*}
        \Bigg| \bbE \Bigg( \widehat \bbE \{ \widehat \mu_b(X) \mid V) \} \widehat S(V \in C_V) &\bbE \Big\{ \widehat \varphi_{\rho_a}(Z; V, b) + \widehat \rho_a(A=b \mid V) - \rho_a(A=b \mid V) \mid V \Big\} \Bigg) \Bigg| \\
        &\lesssim \one(b \neq a) \left \lVert f^\prime ( \widehat \pi_b ) - f^\prime ( \pi_b ) \right\rVert \left\lVert \widehat \pi_b - \pi_b \right\rVert + \bbE \big\{ \left| f^{\prime \prime} ( \pi_b ) \right| \left( \widehat \pi_b - \pi_b \right)^2 \big\} \\
        &+ \one(b = a) \sum_{b \neq a} \left \lVert f^\prime ( \widehat \pi_b ) - f^\prime ( \pi_b ) \right\rVert \left\lVert \widehat \pi_b - \pi_b \right\rVert + \bbE \big\{ \left| f^{\prime \prime} ( \pi_b ) \right| \left( \widehat \pi_b - \pi_b \right)^2 \big\}
    \end{align*}
    For the third term in~\eqref{eq:original-display}, we have, by Cauchy-Schwarz and the triangle inequality,
    \begin{align*}
        \Bigg| \bbE \bigg( &\Big\{ \widehat q_a(A=b \mid V) - q_a(A=b \mid V) \Big\} \Big[ \bbE \{ \mu_b(X) \mid V \} - \widehat \bbE \{ \widehat \mu_b(X) \mid V \} \Big] \bigg) \Bigg| \\
        &\lesssim \left\lVert \widehat \bbE \{ \widehat \mu_b(X) \mid V \} - \bbE \{ \mu_b(X) \mid V \} \right\rVert \Bigg( \left\lVert \widehat \pi_b(V) - \pi_b(V) \right\rVert + \sum_{c=1}^d \left\lVert \widehat \bbE \big[ s\{ \widehat \pi_c(X) \} \mid V \big] - \bbE \big[ s\{ \pi_c(X) \} \mid V \big] \right\rVert \\
        &\hspace{2.8in}+ \one(b = a) \lVert f(\widehat \pi_b) - f(\pi_b) \rVert + \one(b \neq a) \Big[ \sum_{b \neq a} \lVert f(\widehat \pi_b) - f(\pi_b) \rVert \Big] \Bigg).
    \end{align*}
    Bringing everything together, and re-introducing $V$ and $X$ arguments, we have
    \begin{align*}
        &\left| \bbE \left( \widehat \psi_a - \psi_a \right) \right| \lesssim \sum_{b=1}^{d} \left| \bbE \Bigg(  \bbE \left[ \left\{ \frac{\pi_b(X) - \widehat \pi_b(X)}{\widehat \pi_b(X)} \right\} \left\{ \mu_b(X) - \widehat \mu_b(X) \right\} \mid V \right] \widehat q_a(A=b \mid V) \Bigg) \right| \\
        &+ \sum_{l=1}^{d} \sum_{b=1}^d \left\lVert s^{\prime \prime}\{ \pi_b(X) \}^{1/2} \{ \widehat \pi_b(X) - \pi_b(X) \} \right\rVert^2  \\
        &\hspace{0.5in} + \left\lVert \bbE\{ \widehat \pi_b(X) - \pi_b(X) \mid V \} \right\rVert \left\lVert \bbE \{ s^\prime\{ \widehat \pi_b(X) \} - s^\prime \{ \pi_b(X) \} \mid V \} \right\rVert  \\
        &+ \sum_{l=1}^{d} \sum_{b=1}^{d} \sum_{c < b} \left\lVert \widehat \bbE \big[ s \{ \widehat \pi_b(X) \} \mid V \big] - \bbE \big[ s\{ \pi_b(X) \} \mid V \big] \right\rVert \left\lVert \widehat \bbE \big[ s \{ \widehat \pi_c(X) \} \mid V \big] - \bbE \big[ s\{ \pi_c(X) \} \mid V \big] \right\rVert \\
        &+ \sum_{b=1}^{d} \left[ \sum_{c=1}^{d}  \left\lVert \widehat \bbE \big[ s \{ \widehat \pi_c(X) \} \mid V \big] - \bbE \big[ s\{ \pi_c(X) \} \mid V \big] \right\rVert \right] \Bigg( \lVert \widehat \pi_b(V) - \pi_b(V) \rVert + \one(b \neq a) \left\lVert  f\{ \widehat \pi_b(V) \} - f \{ \pi_b(V) \} \right\rVert  \\
        &\hspace{3.5in}+ \one(b = a) \sum_{b \neq a}^d \left\lVert f\{ \widehat \pi_b(V) \} - f\{ \pi_b(V) \} \right\rVert \Bigg) \\
        &+ \sum_{b=1}^{d} \one(b \neq a) \left \lVert f^\prime \{ \widehat \pi_b(V) \} - f^\prime \{ \pi_b(V) \} \right\rVert \left\lVert \widehat \pi_b(V) - \pi_b(V) \right\rVert + \bbE \Big[ \left| f^{\prime \prime} \{ \pi_b(V) \} \right| \left\{ \widehat \pi_b(V) - \pi_b(V) \right\}^2 \Big] \\
        &+ \sum_{b=1}^{d} \one(b = a) \left \lVert f^\prime \{ \widehat \pi_b(V) \} - f^\prime \{ \pi_b(V) \} \right\rVert \left\lVert \widehat \pi_b(V) - \pi_b(V) \right\rVert + \bbE \Big[ \left| f^{\prime \prime} \{ \pi_b(V) \} \right| \left\{ \widehat \pi_b(V) - \pi_b(V) \right\}^2 \Big] \\
        &+ \sum_{b=1}^{d} \left\lVert \widehat \bbE \{ \widehat \mu_b(X) \mid V \} - \bbE \{ \mu_b(X) \mid V \} \right\rVert \Bigg( \left\lVert \widehat \pi_b(V) - \pi_b(V) \right\rVert + \sum_{c=1}^d \left\lVert \widehat \bbE \big[ s\{ \widehat \pi_c(X) \} \mid V \big] - \bbE \big[ s\{ \pi_c(X) \} \mid V \big] \right\rVert \\
        &\hspace{2in}+ \one(b \neq a) \lVert f\{ \widehat \pi_b(V) \} - f\{ \pi_b(V)\} \rVert + \one(b = a) \Big[ \sum_{b \neq a} \lVert f \{ \widehat \pi_b(V) \} - f\{ \pi_b(V) \} \rVert \Big] \Bigg).
    \end{align*}
    Next, note that, by Jensen's inequality,
    $$
    \lVert \bbE \{ \widehat f(Z) - f(Z) \mid V \} \rVert \leq \lVert \widehat f(Z) - f(Z) \rVert.
    $$
    Also notice that many of the outer sums over $l =1$ to $d$ can be simplified to just multiplying by a factor of $d$. We conclude
    \begin{align*}
        &\left| \bbE \left( \widehat \psi_a - \psi_a \right) \right| \lesssim \sum_{b=1}^{d} \left| \bbE \Bigg(  \bbE \left[ \left\{ \frac{\pi_b(X) - \widehat \pi_b(X)}{\widehat \pi_b(X)} \right\} \left\{ \mu_b(X) - \widehat \mu_b(X) \right\} \mid V \right] \widehat q_a(A=b \mid V) \Bigg) \right| \\
        &+ d \Bigg[ \sum_{b=1}^d \left\lVert s^{\prime \prime}\{ \pi_b(X) \}^{1/2} \{ \widehat \pi_b(X) - \pi_b(X) \} \right\rVert^2 + \left\lVert \bbE\{ \widehat \pi_b(X) - \pi_b(X) \mid V \} \right\rVert \left\lVert \bbE \{ s^\prime\{ \widehat \pi_b(X) \} - s^\prime \{ \pi_b(X) \} \mid V \} \right\rVert \Bigg]  \\
        &+ d \Bigg[ \sum_{b=1}^{d} \sum_{c < b} \left\lVert \widehat \bbE \big[ s\{ \widehat \pi_b(X) \} \mid V \big] - \bbE \big[ s\{ \pi_b(X) \} \mid V \big] \right\rVert \left\lVert \widehat \bbE \big[ s\{ \widehat \pi_c(X) \} \mid V \big] - \bbE \big[ s\{ \pi_c(X) \} \mid V \big] \right\rVert \Bigg] \\
        &+ \Bigg[ \sum_{b=1}^{d} \left\lVert \widehat \pi_b(X) - \pi_b(X) \right\rVert +  \sum_{b \neq a} \left\lVert f \{ \widehat \pi_b(X) \} - f \{ \pi_b(X) \} \right\rVert \Bigg] \left( \sum_{c=1}^{d}  \left\lVert \widehat \bbE \big[ s\{ \widehat \pi_c(X) \} \mid V \big] - \bbE \big[ s\{ \pi_c(X) \} \mid V \big] \right\rVert \right)  \\
        &+ \sum_{b \neq a}^{d} \left \lVert f^\prime \{ \widehat \pi_b(V) \} - f^\prime \{ \pi_b(V) \} \right\rVert \left\lVert \widehat \pi_b(V) - \pi_b(V) \right\rVert + \left\lVert f^{\prime \prime} \{ \pi_b(V) \}^{1/2} \{ \widehat \pi_b(V) - \pi_b(V) \} \right\rVert^2 \\
        &+ \sum_{b=1}^{d} \left\lVert \widehat \bbE \{ \widehat \mu_b(X) \mid V \} - \bbE \{ \mu_b(X) \mid V \} \right\rVert \left\lVert \widehat \pi_b(V) - \pi_b(V) \right\rVert \\
        &+ \sum_{b=1}^{d} \left\lVert \widehat \bbE \{ \widehat \mu_b(X) \mid V \} - \bbE \{ \mu_b(X) \mid V \} \right\rVert \Bigg( \sum_{c=1}^d \left\lVert \widehat \bbE \big[ s\{ \widehat \pi_c(X) \} \mid V \big] - \bbE \big[ s\{ \pi_c(X) \} \mid V \big] \right\rVert \Bigg) \\
        &+ \sum_{b=1}^{d} \left\lVert \widehat \bbE \{ \widehat \mu_b(X) \mid V \} - \bbE \{ \mu_b(X) \mid V \} \right\rVert \Bigg( \one(b \neq a) \lVert f\{ \widehat \pi_b(V) \} - f\{ \pi_b(V)\} \rVert \\
        &\hspace{3in} + \one(b = a) \Big[ \sum_{b \neq a} \lVert f \{ \widehat \pi_b(V) \} - f\{ \pi_b(V) \} \rVert \Big]  \Bigg).
    \end{align*}
\end{proof}

\bigskip
\noindent \textbf{Proof of Corollary~\ref{cor:normal}}

\normalsize
\begin{proof}
    First, note that Assumption~\ref{asmp:strong-positivity} allows both \eqref{eq:bias-convergence} and the second condition of Theorem~\ref{thm:dr-est} to hold simultaneously.  Then, 
    \begin{align*}
        \widehat \psi_a - \psi_a &= \bbP_n \left\{ \widehat \varphi_a(Z) \right\} - \bbE \left\{ \varphi_a(Z) \right\} \\
        &= (\bbP_n - \bbP) \left\{ \varphi_a(Z) \right\} + (\bbP_n - \bbP) \left\{ \widehat \varphi_a(Z) - \varphi_a(Z) \right\} + \bbP \left\{ \widehat \varphi_a(Z) - \varphi_a(Z) \right\}
    \end{align*} 
    where the second line follows by adding zero.  The first term satisfies the central limit theorem in the result.  Meanwhile, by Lemma 2 in \citet{kennedy2020sharp}, cross-fitting, and \eqref{eq:ifs-consistent}, $(\bbP_n - \bbP) \{ \widehat \varphi_a(Z) - \varphi_a(Z) \} = o_{\bbP}(n^{-1/2})$. Finally, the third term satisfies $\bbP \{ \widehat \varphi_a(Z) - \varphi_a(Z) \} = o_{\bbP}(n^{-1/2})$ by \eqref{eq:bias-convergence}.
\end{proof}

\subsection{Intermediate algebra} \label{app:intermediate-algebra}

In this section, we provide several steps of intermediate algebra that are necessary to establish the results in the previous section.

\begin{remark} \label{rem:pi-eif}
    Let 
    \begin{equation} \label{eq:if-pi}
        \varphi_{\pi}(Z;V,b) = \one(A=b) - \pi_b(V).
    \end{equation}
    Then, $\bbE \{ \varphi_{\pi}(Z; V, b) \mid V \} = 0$ and 
    $$
    \bbE \{ \overline \varphi_{\pi}(Z; V, b) + \overline \pi_b(V) - \pi_b(V) \mid V \} = 0.
    $$
\end{remark}

\begin{proposition} \label{prop:f-eif}
    Suppose $f(\cdot)$ is twice differentiable function with bounded second derivative $f^{\prime \prime}(\cdot)$ and non-zero first derivative $f^\prime(\cdot)$. Let
    \begin{equation} \label{eq:if-f}
        \varphi_{f}(Z; V, b) = f^\prime \{ \pi_b(V) \}  \left\{ \one (A=b) - \pi_b(V) \right\}
    \end{equation}
    Then, 
    $$
    \bbE \left[ \overline \varphi_{f}(Z; V, b)  + f\{ \overline \pi(V) \} - f\{ \pi(V)\} \mid V \right] =  \{ \pi(V) - \overline \pi(V) \}^2 \big[ f^{\prime \prime} \{ \overline \pi(V) \} + o(1) \big]
    $$
\end{proposition}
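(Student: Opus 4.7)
The plan is to evaluate $\bbE\{\overline \varphi_f(Z; V, b) \mid V\}$ directly, combine it with the deterministic increment $f\{\overline \pi(V)\} - f\{\pi(V)\}$, and then invoke a second-order Taylor expansion of $f$ at $\overline \pi(V)$ so that the zeroth- and first-order pieces cancel exactly and only a quadratic remainder of the advertised form survives.

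First, I would use iterated expectations to note that $\bbE\{\one(A=b) \mid V\} = \pi_b(V)$, which gives
$$
\bbE\{\overline \varphi_f(Z; V, b) \mid V\} = f^\prime\{\overline \pi_b(V)\}\big\{\pi_b(V) - \overline \pi_b(V)\big\}.
$$
Next, I would apply Taylor's theorem with Lagrange remainder (valid because $f$ is twice differentiable with bounded $f^{\prime \prime}$) to write
$$
f\{\pi(V)\} = f\{\overline \pi(V)\} + f^\prime\{\overline \pi(V)\}\big\{\pi(V) - \overline \pi(V)\big\} + \tfrac{1}{2} f^{\prime \prime}(\xi)\big\{\pi(V) - \overline \pi(V)\big\}^2
$$
for some $\xi$ between $\overline \pi(V)$ and $\pi(V)$. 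Substituting this expansion back, the $f\{\overline \pi(V)\}$ pieces cancel and the two copies of $f^\prime\{\overline \pi(V)\}\{\pi(V) - \overline \pi(V)\}$ annihilate, leaving a term proportional to $\{\pi(V) - \overline \pi(V)\}^2 f^{\prime \prime}(\xi)$. Finally, since $\xi$ lies between $\overline \pi(V)$ and $\pi(V)$, I would replace $f^{\prime \prime}(\xi)$ with $f^{\prime \prime}\{\overline \pi(V)\} + o(1)$ as $\overline \pi(V) \to \pi(V)$, producing the form of the remainder in the proposition.

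The main obstacle is notational rather than mathematical: the statement writes the remainder as $\{\pi(V) - \overline \pi(V)\}^2 \big[f^{\prime \prime}\{\overline \pi(V)\} + o(1)\big]$, which elides the $\tfrac{1}{2}$ and sign produced by the Lagrange form, and tacitly reads $f^{\prime \prime}\{\overline \pi(V)\}$ as a representative second-order coefficient rather than a literal equality. Provided one interprets the $o(1)$ as the usual asymptotic slack under $\lVert \overline \pi - \pi \rVert \to 0$, the result is immediate from one conditional-expectation computation and one Taylor expansion; no concentration or empirical-process tools are required at this stage since all randomness has already been integrated out by the conditional expectation on $V$.
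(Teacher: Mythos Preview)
Your proposal is correct and follows essentially the same approach as the paper: compute the conditional expectation via iterated expectations to obtain $f^\prime\{\overline \pi(V)\}\{\pi(V) - \overline \pi(V)\}$, then apply a second-order Taylor expansion of $f$ around $\overline \pi(V)$ so that the zeroth- and first-order terms cancel. Your remark about the elided $\tfrac{1}{2}$ and the informal use of $o(1)$ is apt; the paper's proof glosses over this in exactly the same way.
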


\begin{proof}
    Omitting $V$ arguments on the second line, 
    \begin{align*}
        \bbE \left[ \overline \varphi_{f}(Z; V, b)  + f\{ \overline \pi(V) \} - f\{ \pi(V)\} \mid V \right] &= f^\prime \{ \overline \pi(V) \} \left\{ \pi(V) - \overline \pi(V) \right\} + f\{ \overline \pi(V) \} - f\{ \pi(V)\} \\
        &= (\pi - \overline \pi)^2 \left\{ f^{\prime \prime}(\overline \pi) + o(1) \right\}
    \end{align*}
    where the first line follows by iterated expectations on $V$ and the second line by a second order Taylor expansion of $f(\overline \pi) - f(\pi)$.
\end{proof}

\begin{proposition}
    Suppose $V$ is discrete and $f(\cdot)$ is a twice differentiable function and let. Then,
    \begin{align*}
        \bbE \{ \overline \varphi_{\rho_a}(Z; V, b) &+ \overline \rho_a(A=b \mid V) - \rho_a(A=b \mid V) \mid V \} \\
        &=\one(b \neq a) \big\{ \pi_b(V) - \overline \pi_b(V) \big\}^2 \left[ f^{\prime \prime}\big\{ \overline \pi_b(V) \big\} + o(1) \right] \\
        &- \one(b=a) \sum_{b \neq a} \big\{ \pi_b(V) - \overline \pi_b(V) \big\}^2 \left[ f^{\prime \prime}\big\{ \overline \pi_b(V) \big\} + o(1) \right],
    \end{align*}
    where $\rho_a$ and $\varphi_{\rho_a}$ are defined in \eqref{eq:def-rho} (in Appendix~\ref{app:comp-extras}) and \eqref{eq:if-rho_a} (in the main paper).
\end{proposition}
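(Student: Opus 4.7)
The plan is to reduce the claim to Proposition~\ref{prop:f-eif} by case analysis on whether $b = a$. The key observation from \eqref{eq:def-rho} and \eqref{eq:if-rho_a} is that both $\rho_a$ and $\varphi_{\rho_a}$ are linear combinations of single-treatment pieces $f\{\pi_\cdot(V)\}$ and $\varphi_f(Z; V, \cdot)$ from \eqref{eq:if-f}. So the full conditional expectation decomposes into a sum of the already-analyzed quantities, and no new Taylor expansion work is needed.

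First I would handle the case $b \neq a$. Here the definitions give $\overline \rho_a(A=b \mid V) = f\{\overline \pi_b(V)\}$, $\rho_a(A=b \mid V) = f\{\pi_b(V)\}$, and $\overline \varphi_{\rho_a}(Z; V, b) = \overline \varphi_f(Z; V, b)$. Therefore the left-hand side is literally $\bbE\{\overline \varphi_f(Z; V, b) + f\{\overline\pi_b(V)\} - f\{\pi_b(V)\} \mid V\}$, and Proposition~\ref{prop:f-eif} yields $\{\pi_b(V) - \overline \pi_b(V)\}^2[f^{\prime \prime}\{\overline \pi_b(V)\} + o(1)]$, which matches the first indicator's contribution exactly.

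Next I would handle $b = a$. From \eqref{eq:def-rho}, $\overline \rho_a(A=a\mid V) - \rho_a(A=a\mid V) = \sum_{b^\prime \neq a}\bigl[f\{\pi_{b^\prime}(V)\} - f\{\overline\pi_{b^\prime}(V)\}\bigr]$, and from \eqref{eq:if-rho_a}, $\overline \varphi_{\rho_a}(Z; V, a) = -\sum_{b^\prime \neq a} \overline \varphi_f(Z; V, b^\prime)$. Pulling a global minus sign through gives
\[
\bbE\{\overline \varphi_{\rho_a}(Z; V, a) + \overline \rho_a(A=a\mid V) - \rho_a(A=a\mid V) \mid V\} = -\sum_{b^\prime \neq a} \bbE\{\overline \varphi_f(Z; V, b^\prime) + f\{\overline \pi_{b^\prime}(V)\} - f\{\pi_{b^\prime}(V)\} \mid V\},
\]
and applying Proposition~\ref{prop:f-eif} summand-by-summand produces the second line.

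The main point of care is sign bookkeeping: one must verify that the minus sign attached to $\varphi_{\rho_a}$ in the $b=a$ branch of \eqref{eq:if-rho_a} lines up with the sign flip induced by rewriting $\rho_a - \overline \rho_a$ as $-(\overline \rho_a - \rho_a)$ inside the sum, so that each summand collapses cleanly onto the signed expression analyzed by Proposition~\ref{prop:f-eif}. Once the signs are matched, no further work is required; in particular the discreteness of $V$ hypothesized in the proposition is not used, since the manipulations are pointwise in $V$, and the Taylor expansion governing the remainder is already encapsulated in Proposition~\ref{prop:f-eif}.
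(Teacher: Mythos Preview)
Your proposal is correct and follows essentially the same approach as the paper's proof: case split on $b \neq a$ versus $b = a$, reduce each case to Proposition~\ref{prop:f-eif}, and in the $b=a$ case pull out a minus sign so the remaining sum is a sum of the quantities already analyzed. Your write-up is in fact more detailed than the paper's, and your observation that the discreteness hypothesis on $V$ is unused is accurate.
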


\begin{proof}
    When $b\neq a$, this follows directly from the previous result.  Meanwhile, when $b \neq a$, notice that $\one(b=a)$ terms cancel, leaving $-\one(b=a) \left[ \sum_{b\neq a} \overline \varphi_f(Z; V, b) + f\{ \overline \pi_b(V) \} - f\{ \pi_b(V) \} \right]$.  The inner summands can be analyzed as in the previous result.
\end{proof}

\begin{proposition}
    Let
    \begin{equation} \label{eq:if-s}
        \varphi_{s}(Z; V, b) = s^\prime\{ \pi(X) \} \left\{ \one(A=b) - \pi_b(X) \right\} + s\{ \pi_b(X) \} - \bbE \big[ s\{ \pi_b(X) \} \mid V\big]. 
    \end{equation}
    Then,
    \begin{align*}
        &\bbE \left( \overline \varphi_s(Z; V, b) + \overline \bbE \big[ s\{ \overline \pi(X) \} \mid V \big] - \bbE \big[ s\{ \pi(X) \} \mid V \big] \mid V \right) \\
        &\hspace{1in} = \bbE \big( \big[ s^{\prime \prime} \{ \overline \pi(X) \} + o(1) \big] \{ \overline \pi(X) - \pi(X) \}^2 \mid V \big)
    \end{align*}

\end{proposition}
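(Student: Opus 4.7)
The plan is to follow the same two-step template used for Proposition~\ref{prop:f-eif}, since $\varphi_s$ has the same structure as $\varphi_f$ up to an additional centering term $s\{\pi_b(X)\} - \bbE[s\{\pi_b(X)\} \mid V]$. The approach is: first, reduce the left-hand side to a purely pointwise expression using iterated expectations on $X$ and cancellation of the conditional-mean centering pieces; second, apply a second-order Taylor expansion of $s$ about $\overline \pi(X)$ to extract the quadratic remainder.

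Concretely, I would first take $\bbE\{\overline\varphi_s(Z; V, b) \mid V\}$. Iterated expectations on $X$ (with $\bbE\{\one(A=b) \mid X\} = \pi_b(X)$) converts $\one(A=b) - \overline \pi_b(X)$ in the score term to $\pi_b(X) - \overline \pi_b(X)$. The centering piece $-\overline\bbE[s\{\overline \pi(X)\} \mid V]$ inside $\overline\varphi_s$ precisely cancels the $+\overline\bbE[s\{\overline \pi(X)\} \mid V]$ that sits outside in the statement, leaving
\[
\bbE\Bigl(s'\{\overline \pi(X)\}\{\pi(X) - \overline \pi(X)\} + s\{\overline \pi(X)\} - s\{\pi(X)\} \,\Big|\, V\Bigr).
\]
A second-order Taylor expansion of $s\{\pi(X)\}$ about $\overline \pi(X)$ then eliminates the zeroth- and first-order terms and produces a remainder of the form $[s''\{\overline \pi(X)\} + o(1)]\{\overline \pi(X) - \pi(X)\}^2$, using boundedness and continuity of $s''$ so that $s''\{\xi\} = s''\{\overline \pi(X)\} + o(1)$ uniformly as $\overline \pi \to \pi$. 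Since $(\pi - \overline \pi)^2 = (\overline \pi - \pi)^2$, moving the conditional expectation back outside gives the claimed identity, following the same sign/constant convention as in the proof of Proposition~\ref{prop:f-eif} (where the factor $-\tfrac{1}{2}$ produced by the Taylor remainder is absorbed into the $o(1)$ and implicit $\lesssim$-style notation used throughout the paper).

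The only real bookkeeping subtlety is keeping the centering terms straight: the centering inside $\overline\varphi_s$ uses $\overline\bbE[\cdot \mid V]$ (evaluated at the overlined distribution), whereas the third summand in the target expression uses $\bbE[\cdot \mid V]$ (true), and it is the deliberate addition of $\overline\bbE[s\{\overline \pi(X)\} \mid V] - \bbE[s\{\pi(X)\} \mid V]$ outside the influence function that, combined with the conditional expectation of the centering inside $\overline\varphi_s$, collapses to the pointwise difference $s\{\overline \pi(X)\} - s\{\pi(X)\}$ needed for the Taylor step. Aside from this, the argument is a routine Taylor expansion and requires no further ingredients.
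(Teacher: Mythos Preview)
Your proposal is correct and follows essentially the same approach as the paper: iterated expectations on $X$ to replace $\one(A=b)$ by $\pi_b(X)$, cancellation of the $\overline\bbE[s\{\overline\pi(X)\}\mid V]$ centering terms, and then a second-order Taylor expansion of $s(\overline\pi)-s(\pi)$ to obtain the quadratic remainder. Your remark about the $-\tfrac12$ factor being absorbed into the $o(1)$ matches the paper's implicit convention from Proposition~\ref{prop:f-eif}.
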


\begin{proof}
     Omitting $V$ arguments, we have
     \begin{align*}
        \bbE \bigg( &\overline \varphi_{s}(Z; V, b) + \overline \bbE \big[ s\{ \overline \pi(X) \} \mid V \big] - \bbE \big[ s\{ \pi(X) \} \mid V \big] \mid V \bigg)  \\
        &= \bbE \big[ s^\prime(\overline \pi) ( \pi - \overline \pi) \mid V  \big] + \bbE \{ s(\overline \pi) \mid V  \} - \overline \bbE \{ s(\overline \pi) \mid V \} + \overline \bbE \{ s(\overline \pi) \mid V \} - \bbE \{ s(\pi) \mid V  \}  \\
        &= \bbE \Big\{ s^\prime(\overline \pi)(\pi - \overline \pi) + s(\overline \pi) - s(\pi) \mid V \Big\}  \\
        &= \bbE \big[ \{ s^{\prime \prime}(\overline \pi) + o(1) \} (\overline \pi - \pi)^2 \mid V\big]
     \end{align*}
     where the first line follows by iterated expectations on $X$ and $V$, the second line by cancelling terms and iterated expectations on $V$, and the final line by a second order Taylor expansion of $s(\overline \pi) - s(\pi)$, where $s^{\prime \prime}$ is the second derivative of $s$.
\end{proof}

\begin{proposition} \label{prop:S-eif}
    Let $S(V \in C_V) = \prod_{a=1}^{d} \bbE \big[ s\{ \pi_a(X) \} \mid V \big]$ and  
    \begin{equation} \label{eq:if-S}
      \varphi_{S}(Z; V) = \sum_{a=1}^{d}  \varphi_{s}(Z; V, a) \prod_{b \neq a}^{d} \bbE \big[ s\{ \pi_b(X) \} \mid V  \big],
    \end{equation}
    and $\varphi_{s}(Z; V, b)$ is defined in \eqref{eq:if-s}.  Then, omitting $X$ arguments on the right-hand side, 
    \begin{align*}
        &\bbE \left( \varphi_{S}(Z; V) + \prod_{a=1}^{d} \overline \bbE \big[ s \{ \overline \pi_a(X) \} \mid V \big] - \prod_{a=1}^{d} \bbE \big[ s \{ \pi_a(X) \} \mid V \big] \mid V \right) \\
        &= \sum_{a=1}^{d} \bbE \Big[ \{ s^\prime(\overline \pi_a) - s^\prime (\pi_a) \} (\pi_a - \overline \pi_a) + \left\{ s^{\prime \prime}(\pi_a) +o(1) \right\} (\overline \pi_a - \pi_a)^2  \mid V \Big] \prod_{b \neq a}^{d} \overline \bbE \{ s (\overline \pi_b) \mid  V \}  \\
        &\hspace{0.2in}+ \sum_{a=2}^{d} \left[ \bbE \{ s(\pi_a) \mid V \} - \overline \bbE \big\{ s( \overline \pi_a ) \mid V \big\} \right] \left[ \prod_{b < a} \overline \bbE \{ s( \overline \pi_b) \mid V \} - \prod_{b < a} \bbE \{ s(\pi_b) \mid V \} \right] \left[ \prod_{b > a} \overline \bbE \{ s (\overline \pi_b) \mid V \} \right].
    \end{align*}
\end{proposition}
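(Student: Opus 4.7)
The plan is to reduce this product-type EIF to the single-factor case handled by the preceding proposition and then use a product telescope. Abbreviate $T_a := \bbE[s\{\pi_a(X)\}\mid V]$ and $\overline T_a := \overline\bbE[s\{\overline\pi_a(X)\}\mid V]$, so $S = \prod_a T_a$ and $\overline S = \prod_a \overline T_a$. The $\varphi_S$ appearing in the display is built from nuisance estimates in $\overline\bbP$ (matching the notational convention of the preceding propositions), so it equals $\sum_{a=1}^d \overline\varphi_s(Z;V,a)\prod_{b\neq a}\overline T_b$. Since each $\prod_{b\neq a}\overline T_b$ is deterministic given $V$, I would first pull it outside the conditional expectation:
\[
\bbE\bigl(\varphi_S(Z;V) + \overline S - S \mid V\bigr) = \sum_{a=1}^d \bbE\{\overline\varphi_s(Z;V,a)\mid V\}\prod_{b\neq a}\overline T_b + \overline S - S.
\]

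Next I would invoke the preceding proposition, which gives $\bbE\{\overline\varphi_s(Z;V,a)\mid V\} = R_a + T_a - \overline T_a$, where $R_a$ is the bracketed second-order remainder in the first sum of the claim. Substituting splits the right-hand side into a genuinely second-order piece $\sum_a R_a \prod_{b\neq a}\overline T_b$ (this is exactly the first line of the claimed equality) plus a residual
\[
\mathcal{R} := \overline S - S - \sum_{a=1}^d (\overline T_a - T_a)\prod_{b\neq a}\overline T_b,
\]
that must be shown to equal the second sum in the statement.

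To handle $\mathcal{R}$, the key tool is the telescope obtained by considering the hybrid products $H_a := \prod_{b\leq a} T_b \prod_{b>a}\overline T_b$, so that $H_0 = \overline S$, $H_d = S$, and the increments collapse to
\[
\overline S - S = \sum_{a=1}^d (\overline T_a - T_a)\Bigl(\prod_{b<a} T_b\Bigr)\Bigl(\prod_{b>a}\overline T_b\Bigr).
\]
Writing $\prod_{b\neq a}\overline T_b = \prod_{b<a}\overline T_b \prod_{b>a}\overline T_b$ in the subtracted sum and factoring $\prod_{b>a}\overline T_b$ out of both pieces yields
\[
\mathcal{R} = \sum_{a=1}^d (T_a - \overline T_a)\Bigl[\prod_{b<a}\overline T_b - \prod_{b<a} T_b\Bigr]\prod_{b>a}\overline T_b.
\]
The $a=1$ summand vanishes because both empty products equal $1$, so the sum effectively starts at $a=2$, matching the statement.

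I do not foresee any real obstacle; the main subtlety is picking the telescoping direction (replacing $\overline T_b$ by $T_b$ from left to right, rather than right to left) so that the residual lands in precisely the form stated. The reversed choice gives a valid but differently indexed expression, so this is essentially a bookkeeping exercise once the preceding single-factor proposition is available.
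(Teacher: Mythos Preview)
Your proposal is correct and follows essentially the same route as the paper: isolate the single-factor remainder for each $a$ and then handle the cross-product residual $\mathcal R$ via a telescoping identity (the paper packages your telescope as a standalone algebraic lemma, Proposition~\ref{prop:algebra}, but the content is identical). One small caveat: the preceding proposition yields the remainder in the form $\bbE[\{s''(\overline\pi_a)+o(1)\}(\overline\pi_a-\pi_a)^2\mid V]$, not the form $\{s'(\overline\pi_a)-s'(\pi_a)\}(\pi_a-\overline\pi_a)+\{s''(\pi_a)+o(1)\}(\overline\pi_a-\pi_a)^2$ that appears in the statement, so to match the displayed $R_a$ exactly you would add and subtract $s'(\pi_a)(\pi_a-\overline\pi_a)$ and Taylor-expand $s(\overline\pi_a)-s(\pi_a)$ around $\pi_a$ rather than $\overline\pi_a$, as the paper does.
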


\begin{proof}
    Omitting arguments, we have 
    \begin{align*}
        &\bbE \left( \varphi_{S}(Z; V) + \prod_{a=1}^{d} \overline \bbE \big[ s \{ \overline \pi_a(X) \} \mid V \big] - \prod_{a=1}^{d} \bbE \big[ s \{ \pi_a(X) \} \mid V \big] \mid V \right) \\
        &= \bbE \left( \sum_{a=1}^{d}  \Big[ s^\prime( \overline \pi_a ) \left\{ \one(A=a) - \overline \pi_a \right\} + s( \overline \pi_a ) - \overline \bbE \big\{ s( \overline \pi_a ) \mid V \big\} \Big] \prod_{b \neq a}^{d} \overline \bbE \{ s (\overline \pi_b) \mid V \} \mid V \right) \\
        &\hspace{0.2in}+  \prod_{a=1}^{d} \overline \bbE \big\{ s ( \overline \pi_a ) \mid V \big\} - \prod_{a=1}^{d} \bbE \big\{ s ( \pi_a ) \mid V \big\} \\
        &= \bbE \left( \sum_{a=1}^{d}  \Big[ s^\prime( \overline \pi_a ) \left( \pi_a - \overline \pi_a \right) + s( \overline \pi_a ) - \overline \bbE \big\{ s( \overline \pi_a ) \mid V \big\} \Big] \prod_{b \neq a}^{d} \overline \bbE \{ s (\overline \pi_b) \mid V \} \mid V \right) \\ 
        &\hspace{0.2in}+  \prod_{a=1}^{d} \overline \bbE \big\{ s ( \overline \pi_a ) \mid V \big\} - \prod_{a=1}^{d} \bbE \big\{ s ( \pi_a ) \mid V \big\} \\ 
        &=  \sum_{a=1}^{d} \bbE \Big[ s^\prime( \overline \pi_a ) \left( \pi_a - \overline \pi_a \right) + s( \overline \pi_a ) - s(\pi_a) \mid V \Big] \prod_{b \neq a}^{d} \overline \bbE \{ s (\overline \pi_b) \mid V \} \\
        &+ \sum_{a=1}^{d} \Big[ \bbE \big\{ s(\overline \pi_a) \mid V) \big\} - \overline \bbE \big\{ s( \overline \pi_a ) \mid V \big\} \Big] \prod_{b \neq a}^{d} \overline \bbE \{ s (\overline \pi_b) \mid V \}  + \prod_{a=1}^{d} \overline \bbE \big\{ s ( \overline \pi_a ) \mid V \big\} - \prod_{a=1}^{d} \bbE \big\{ s ( \pi_a ) \mid V \big\} \\
        &= \sum_{a=1}^{d}  \bbE \Big[ \{ s^\prime(\overline \pi_a) - s^\prime (\pi_a) \} (\pi_a - \overline \pi_a) + \left\{ s^{\prime \prime}(\pi_a) +o(1) \right\} (\overline \pi_a - \pi_a)^2  \mid V \Big] \prod_{b \neq a}^{d} \overline \bbE \{ s (\overline \pi_b) \mid  V \}  \\
        &\hspace{0.2in}+ \sum_{a=1}^{d} \left[ \bbE \{ s(\pi_a) \mid  V \} - \overline \bbE \big\{ s( \overline \pi_a ) \mid  V \big\} \right]  \prod_{b \neq a}^{d} \overline \bbE \{ s (\overline \pi_b) \mid V \} + \prod_{a=1}^{d} \overline \bbE \big\{ s ( \overline \pi_a ) \mid  V \big\} - \prod_{a=1}^{d} \bbE \big\{ s ( \pi_a ) \mid  V \big\}
    \end{align*}
    where the first line follows by definition, the second by iterated expectations on $X$ and iterated expectations on $V$, the third by adding and subtracting $\bbE \{ s(\pi) \mid  V \}$, and the fourth by a second order Taylor expansion of $s(\overline \pi) - s(\pi)$.  Finally, the results follows because Proposition~\ref{prop:algebra} implies that the final line of the above display satisfies
    \begin{align*}
        &\sum_{a=1}^{d} \left[ \bbE \{ s(\pi_a) \mid  v\} - \overline \bbE \big\{ s( \overline \pi_a ) \mid V \big\} \right]  \prod_{b \neq a}^{d} \overline \bbE \{ s (\overline \pi_b) \mid V \} + \prod_{a=1}^{d} \overline \bbE \big\{ s ( \overline \pi_a ) \mid V \big\} - \prod_{a=1}^{d} \bbE \big\{ s ( \pi_a ) \mid V \big\} \\
        &= \sum_{a=2}^{d} \left[ \bbE \{ s(\pi_a) \mid V \} - \overline \bbE \big\{ s( \overline \pi_a ) \mid V \big\} \right] \left[ \prod_{b < a} \overline \bbE \{ s( \overline \pi_b) \mid V \} - \prod_{b < a} \bbE \{ s(\pi_b) \mid V \} \right] \left[ \prod_{b > a} \overline \bbE \{ s (\overline \pi_b) \mid V \} \right].
    \end{align*}
\end{proof}

\begin{proposition} \label{prop:algebra}
    Suppose $\{a_j\}_{j=1}^{d}$ and $\{b_j \}_{j=1}^d$ are two sequences.  Then 
    $$
    \sum_{j=1}^{d} (b_j - a_j) \left( \prod_{l \neq j} a_l \right) + \prod_{l=1}^{d} a_l - \prod_{l=1}^{d} b_l = \sum_{j=2}^{d} (b_j - a_j) \left( \prod_{l < j} a_l - \prod_{l < j} b_l \right) \left( \prod_{l > j} a_l \right).
    $$
\end{proposition}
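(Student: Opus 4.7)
The plan is to reduce the claim to a standard telescoping product identity. Specifically, I would first establish the well-known identity
\[
\prod_{l=1}^{d} a_l - \prod_{l=1}^{d} b_l = \sum_{j=1}^{d} \left(\prod_{l < j} b_l\right)(a_j - b_j)\left(\prod_{l > j} a_l\right),
\]
with the usual convention that an empty product equals $1$. This can be proved in a single line by writing the right-hand side as $\sum_{j=1}^d \left[\left(\prod_{l<j} b_l\right)\left(\prod_{l\ge j} a_l\right) - \left(\prod_{l\le j} b_l\right)\left(\prod_{l> j} a_l\right)\right]$ and noting that it telescopes: each intermediate mixed product cancels, leaving only $\prod_l a_l$ (from $j=1$) and $-\prod_l b_l$ (from $j=d$).

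Substituting this into the left-hand side of the proposition, I get
\[
\sum_{j=1}^{d} (b_j - a_j) \prod_{l \neq j} a_l \;-\; \sum_{j=1}^{d} (b_j - a_j)\left(\prod_{l<j} b_l\right)\left(\prod_{l>j} a_l\right).
\]
The next step is to combine the two sums. Since $\prod_{l \neq j} a_l = \left(\prod_{l<j} a_l\right)\left(\prod_{l>j} a_l\right)$, factoring $(b_j - a_j)\prod_{l>j} a_l$ from each $j$-term yields
\[
\sum_{j=1}^{d} (b_j - a_j)\left(\prod_{l<j} a_l - \prod_{l<j} b_l\right)\left(\prod_{l>j} a_l\right).
\]

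Finally, I would observe that when $j=1$, both $\prod_{l<1} a_l$ and $\prod_{l<1} b_l$ are empty products and hence equal to $1$, so the $j=1$ summand vanishes. The sum can therefore start at $j=2$, recovering exactly the right-hand side of the proposition. There is no real obstacle here --- the only subtlety is being careful with the empty-product convention and choosing the telescoping direction (swapping $a_l \to b_l$ from left to right) that matches the asymmetric form appearing on the right-hand side.
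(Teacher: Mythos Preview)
Your proof is correct and takes a genuinely different route from the paper's. The paper defines $P(d)$ as the left-hand side, verifies the recursion $P(d)-a_d P(d-1)=(b_d-a_d)\big(\prod_{l<d}a_l-\prod_{l<d}b_l\big)$, checks the base case $P(2)$, and then unrolls the recursion to obtain the right-hand side. You instead invoke the standard mixed-product telescope $\prod_l a_l-\prod_l b_l=\sum_j\big(\prod_{l<j}b_l\big)(a_j-b_j)\big(\prod_{l>j}a_l\big)$ directly, substitute it into the left-hand side, and factor. Your approach is shorter and more transparent: it exposes the key algebraic fact (the telescope) explicitly rather than hiding it inside an induction, and it requires no base-case verification. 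The paper's recursive argument, on the other hand, would generalize more readily if one wanted a version of the identity with additional structure in each factor, since the recursion isolates how the $d$-th coordinate enters.
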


\begin{proof}
    Let $P(d) = \sum_{j=1}^{d} (b_j - a_j) \left( \prod_{l \neq j} a_l \right) + \prod_{l=1}^{d} a_l - \prod_{l=1}^{d} b_l$.  Notice that
    $$
    P(d) - a_d P(d-1) = (b_d - a_d) \left( \prod_{l=1}^{d-1} a_l - \prod_{l=1}^{d-1} b_l \right).
    $$
    Notice also that $P(2) = (b_2 - a_2)(b_1 - a_1)$. Then, rearranging this recursive definition and plugging in $P(2)$ in the final summand yields the result.
\end{proof}

\end{document}